\def\[#1\]{\begin{align}#1\end{align}}
\newtheorem{theorem}{Theorem}
\newcommand{\calN}{\mathcal{N}}
\newcommand{\calP}{\mathcal{P}}
\newcommand{\calU}{\mathcal{U}}
\newcommand{\calZ}{\mathcal{Z}}
\newcommand{\bbR}{\mathbb{R}}
\newcommand{\bbZ}{\mathbb{Z}}
\newcommand{\one}{\mathbbm{1}}
\renewcommand{\ket}[1]{|#1\rangle}
\renewcommand{\mel}[3]{\langle#1|#2|#3\rangle}
\RenewDocumentCommand{\v}{s m}{%
  \IfBooleanTF{#1}%
    {\bm{\hat{#2}}}%
    {\bm{#2}}%
}
\NewDocumentCommand{\D}{s m}{%
  \IfBooleanTF{#1}%
    {\mathrm{d}#2}%
    {\mathrm{d}#2\,}%
}
\newcommand{\braa}[1]{\langle\langle #1 |}
\newcommand{\kett}[1]{| #1 \rangle\rangle}
\newcommand{\braakett}[2]{\langle\langle #1 | #2 \rangle\rangle}
\newcommand{\mell}[3]{\langle\langle #1 | #2 | #3\rangle\rangle}
\newcommand{\bi}{\bar{\imath}}
\newcommand{\bj}{\bar{\jmath}}
\newcommand{\ot}{\otimes}
\newcommand{\phantomlabel}[2]{
    \protected@write\@auxout{}{
        \string\newlabel{#2}{
            {\@currentlabel#1}{\thepage}
            {\@currentlabel#1}{#2}{}
        }
    }
    \hypertarget{#2}{}
}
\newcommand{\m}{m}
\newcommand{\bmeta}{\eta}
\newcommand{\s}{\bm{s}}
\begin{document}

\title{Information dynamics and symmetry breaking in generic monitored $\bbZ_2$-symmetric open quantum systems} 

\author{Jacob Hauser}
\affiliation{Department of Physics, University of California, Santa Barbara, CA 93106, USA}

\author{Ali Lavasani}
\affiliation{Kavli Institute for Theoretical Physics, University of California, Santa Barbara, CA 93106, USA}

\author{Sagar Vijay}
\affiliation{Department of Physics, University of California, Santa Barbara, CA 93106, USA}

\author{Matthew P. A. Fisher}
\affiliation{Department of Physics, University of California, Santa Barbara, CA 93106, USA}

\begin{abstract}
We investigate the steady-state phases of generic $\bbZ_2$-symmetric monitored, open quantum dynamics. We describe the phases systematically in terms of both information-theoretic diagnostics and spontaneous breaking of strong and weak symmetries of the dynamics. We find a completely broken phase where information is retained by the quantum system, a strong-to-weak broken phase where information is leaked to the environment, and an unbroken phase where information is learned by the observer. We find that weak measurement and dephasing alone constitute a minimal model for generic open systems with $\bbZ_2$ symmetry, but we also explore perturbations by unitary gates. For a $1$d set of qubits, we examine information-theoretic and symmetry-breaking observables in the path integral of the doubled state. This path integral reduces to the standard classical $2$d random-bond Ising model in certain limits but generically involves negative weights, enabling a special self-dual random-bond Ising model at the critical point when only measurements are present. We obtain numerical evidence for the steady-state phases using efficient tensor network simulations of the doubled state.
\end{abstract} 

\maketitle

\section{Introduction}\label{sec:introduction}
An open quantum system can evolve in three distinct ways: through internal unitary evolution, decoherence from an external bath, or measurement by an external observer. Dynamics that incorporate different combinations of these processes exhibit a wide range of phenomena, as illustrated in Fig.~\ref{fig:schematic_overview}. For example, unitary evolution and measurements give rise to measurement-induced phase transitions (MIPTs) \cite{Li_2018,Li_2019,Skinner_2019,Sang_2021,review_2023}, unitary evolution and decoherence yield Lindbladian evolution \cite{gorini1976completely,lindblad1976generators}, and measurements and decoherence model quantum error correction (QEC) \cite{PhysRevA.52.R2493}. We are interested in the generic behavior of open quantum dynamics from the perspective of symmetry breaking and information theory.

To study such systems in the simplest possible setting, we focus on a $\bbZ_2$-symmetric $(1+1)$d quantum circuit composed of weak measurements and dephasing noise. While unitary evolution can also be considered, as we do later in this work, we find that it is not necessary for capturing generic phenomena. Instead, the presence of non-commuting weak measurements in our model is sufficient to exhibit quantum features like destructive interference and non-positive weights in the path integral. Consequently, our dynamics---composed of Pauli-X and Pauli-ZZ weak measurements and dephasing---serve as a minimal model for generic open $\bbZ_2$-symmetric dynamics.

Similar dynamics were studied previously in Refs.~\cite{Bao_2021} and~\cite{Li_2023b}. Building on these works, we introduce a framework for understanding the resulting steady-state phases in terms of symmetry breaking and we provide an information-theoretic interpretation of each phase. Importantly, our dynamics contain weak measurements (in contrast to the strong measurements in Refs.~\cite{Bao_2021} and~\cite{Li_2023b}) enabling interference effects and distinct critical phenomena. By incorporating weak measurements and dephasing, our dynamics also include cases studied recently in Ref.~\cite{zhao2025noncommutativeweakmeasurementsentanglement} where symmetry-breaking phases of dynamics with weak measurement and partial readout are discussed.

\begin{figure}
    \includegraphics[width=\linewidth]{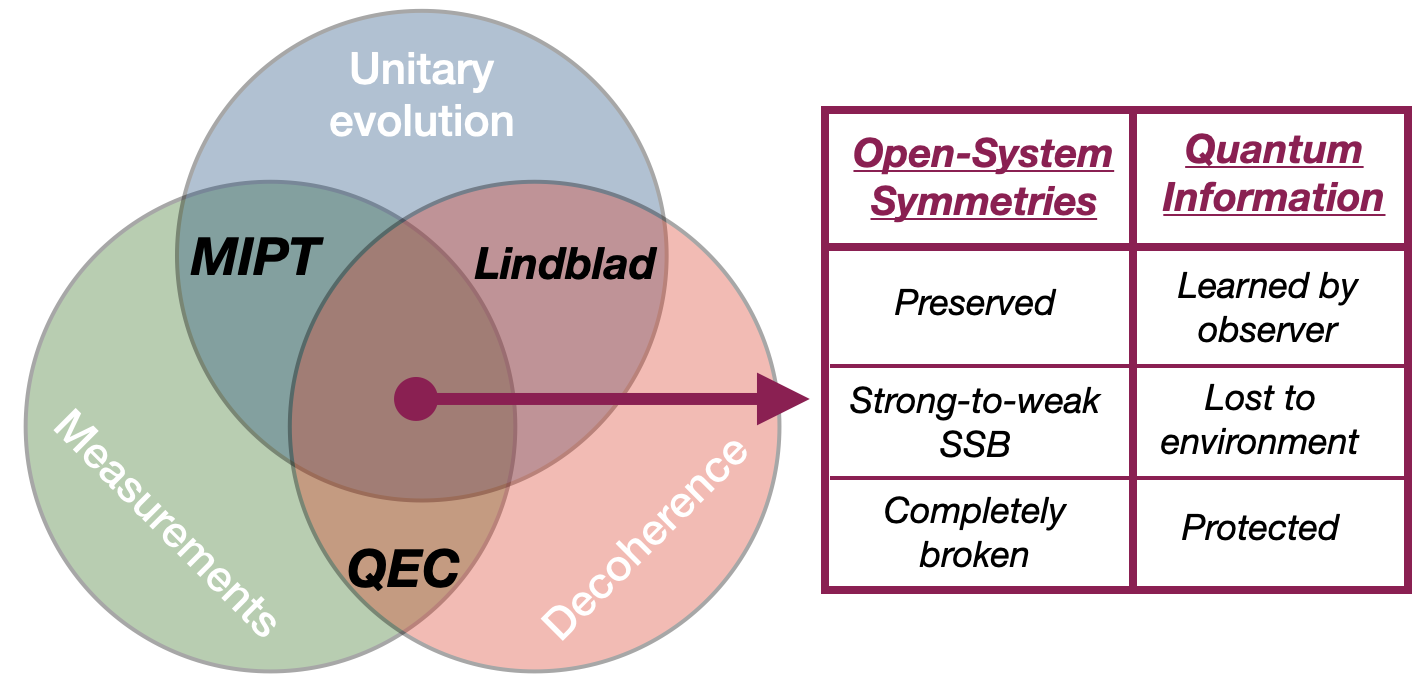}
    \caption{Generic open quantum systems evolve internally (unitary), couple to an observer (measurement), and couple to a bath (decoherence). We are interested in the steady-state phases of such systems. Each pair of these ingredients is well-studied: unitaries and measurements drive MIPTs, unitaries and decoherence are described by Lindbladians, and measurements and decoherence model QEC. We find that the steady-state phases of generic open quantum systems can be organized in terms of symmetry breaking and information theory.}
    \label{fig:schematic_overview}
\end{figure}

We find that strong and weak symmetries \cite{Albert_2014,Lieu_2020} provide a natural framework for systematically organizing the steady-state phases of our dynamics. There has recently been great interest in symmetry-breaking phases of mixed states, and particularly in the novel possibility of strong-to-weak symmetry breaking \cite{lee2022decodingmeasurementpreparedquantumphases,Lee_2023,PRXQuantum.4.030318,Ma_2023,PRXQuantum.6.010314,PhysRevLett.132.170602,Li_Luo_2023,PRXQuantum.6.010348, PRXQuantum.6.010313, PRXQuantum.6.010315, Sala_2024, Lessa_2024, gu2024spontaneoussymmetrybreakingopen,Huang_2024, 5ywn-6d3q, Zhang_2024a, Zhang_2024b, liu2024diagnosingstrongtoweaksymmetrybreaking, Guo_2025,PhysRevLett.134.150405, kim2024errorthresholdsykcodes, Chen_2025, negari2025spacetimemarkovlengthdiagnostic, sun2025schemedetectstrongtoweaksymmetry, behrends2024surface, sang2025mixedstatephaseslocalreversibility, feng2025hardnessobservingstrongtoweaksymmetry, schafernameki2025symtftapproachmixedstates,  zhang2025quantum, song2025strongtoweakspontaneoussymmetrybreaking}. The steady-state phases of our minimal model exemplify all three spontaneous symmetry breaking (SSB) possibilities. We observe a spin-glass phase with SSB of both symmetries, a classical paramagnet with strong-to-weak SSB, and a quantum paramagnet without SSB. We find that these phases are characterized by several pairs of observables---including Edwards-Anderson order and disorder susceptibilities, and R\'enyi-$2$ order and disorder susceptibilities---and we argue that these various probes coincide exactly in our model.

Furthermore, we connect the symmetry-breaking pattern in each steady-state phase to information-theoretic diagnostics. Our dynamics can be interpreted as a noisy repetition code with faulty logical Pauli-X measurements, and the steady-state phases can be classified by how logical information flows through the system. In the total SSB phase, logical information is retained by the quantum system which serves as a memory; in the strong-to-weak SSB phase, logical information leaks to the environment; and in the unbroken phase, logical information is learned by the observer. The memory phase is characterized by the coherent information, as in MIPTs \cite{Bao_2020,Choi_2020} and quantum error correction \cite{Schumacher_1996, Schumacher_2001}, and the learning phase is characterized by the entropy of a reference ancilla \cite{Gullans_2020}. The connection to symmetry breaking suggests that these diagnostics could be a good guide to understanding mixed state symmetry breaking in the presence of measurement-induced disorder.

Both the symmetry-breaking and information-theoretic diagnostics may be examined as properties of the path integral of the density matrix. The resulting partition function is a disordered Ashkin-Teller model with negative weights, where one set of spins corresponds to the forward (ket) degrees of freedom and the other to the backward (bra) degrees of freedom \cite{Ashkin_1943}. The path integral provides a statistical-mechanics perspective on the dynamics, in which the strong and weak symmetries manifest as Ising symmetries of one or both sets of spins, respectively. In this way we can view the strong-to-weak symmetry breaking as analogous to the locking of forward and backward degrees of freedom in the Caldeira-Leggett model
\cite{Caldeira_1981}. More broadly, we can identify our order parameters of interest as boundary correlation functions and defect insertions in the partition function.

The path integral simplifies to just one set of spins in certain limits. When no dephasing is present, the density matrix evolution reduces to pure state evolution directly described by a path integral of just one set of spins. Notably, the critical point in this case is a special self-dual critical point where negative weights in the partition function enable self-duality even in the presence of strong disorder, as studied in Ref.~\cite{wang2025decoherence}. When dephasing is present and Pauli-X measurements are absent, the two sets of spins lock together and the partition function reduces to a random-bond Ising model (RBIM), as in Ref.~\cite{Dennis_2002} (and, more generally, Refs.~\cite{Chubb:2021htd,PhysRevB.110.085158,hlfh-86yz,lyons2024understandingstabilizercodeslocal,PRXQuantum.6.010327,lavasani2025stability,Niwa_2025,hauser2024informationdynamicsdecoheredquantum}). The same reduction occurs (in dual variables) when Pauli-ZZ measurements are absent.

Two related models, the $2$\nobreakdash-replica theory for our disordered dynamics and the forced measurement dynamics where all measurement results are $+1$, are described by disorder-free path integrals. In these cases, a time-continuum limit may be taken with the steady-state phases corresponding to the ground-state phases of a $1$d quantum Ashkin-Teller Hamiltonian. It is instructive to study nonlocal transformations of this model into fermionic degrees of freedom or into new spin degrees of freedom. When no unitary evolution is included, these transformations reveal a nonlocal $U(1)$ symmetry in the models, which is enlarged to an $SU(2)$ symmetry at a specific critical point. In fact, the $U(1)$ symmetry is present even in our disordered dynamics, though the $SU(2)$ symmetry is not.

The rest of this paper is organized as follows. In Sec.~\ref{sec:setup}, we introduce our minimal model and discuss its symmetries and duality. In Sec.~\ref{sec:path_integral}, we describe the path integral formulation of our dynamics and explore its important features and limits. In Sec.~\ref{sec:phases}, we discuss the steady-state phases of the dynamics, the observables that characterize them, and how the phase diagram is modified by the inclusion of unitary evolution. Lastly, in Sec.~\ref{sec:discussion}, we discuss the high-level relationship between symmetries and information in our system, and avenues for future study.

\section{Setup}\label{sec:setup}

In this section, we define the dynamics we consider in this work (Sec.~\ref{sec:setup:dynamics}), introduce the doubled state formalism (Sec.~\ref{sec:setup:doubled}), and discuss the symmetries and duality present in our dynamics (Sec.~\ref{sec:setup:symmetries}).

\subsection{Dynamics}\label{sec:setup:dynamics}

\begin{figure*}[t]
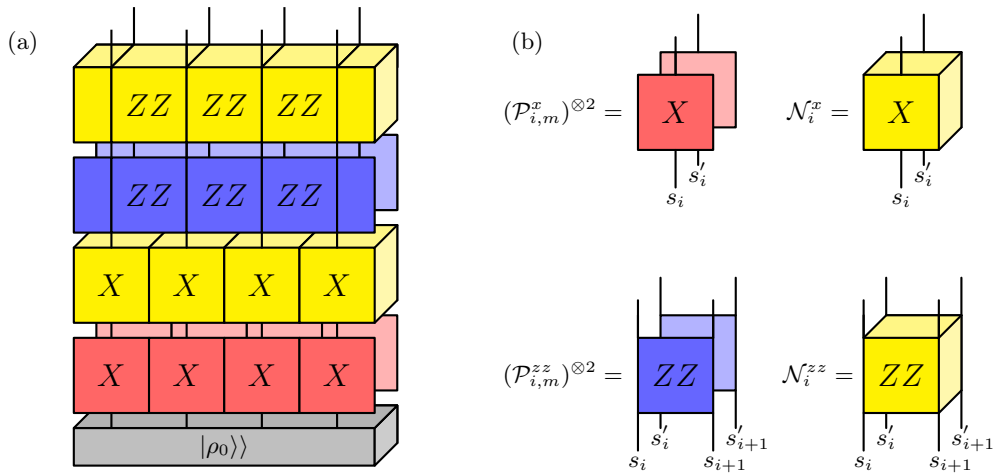

    \centering

\begin{tikzpicture}

  % (a) label
  \node[anchor=north west] at (-1.5,5.2) {(a)};
  \begin{scope}[shift={(0,0)}] % Circuit figure
    \def\ngates{4}
    \input{circuit_fig}
  \end{scope}

  % (b) label
  \node[anchor=north west] at (5.2,5.2) {(b)};
  \begin{scope}[shift={(6,0)}] % Gate figures

    % Top row: X gates
    \node at (0,4) {\( (\mathcal{P}^{x}_{i,m})^{\otimes 2} = \)};
    \begin{scope}[shift={(1,3.5)}]
      \input{x_meas_full.tex}
    \end{scope}

    \node at (3.4,4) {\( \mathcal{N}^{x}_i = \)};
    \begin{scope}[shift={(4,3.5)}]
      \input{x_noise_full.tex}
    \end{scope}

    % Bottom row: ZZ gates
    \node at (0,0.5) {\( (\mathcal{P}^{zz}_{i,m})^{\otimes 2} =\)};
    \begin{scope}[shift={(1,0)}]
      \input{zz_meas_full.tex}
    \end{scope}

    \node at (3.4,0.5) {\( \mathcal{N}^{zz}_i = \)};
    \begin{scope}[shift={(4,0)}]
      \input{zz_noise_full.tex}
    \end{scope}
  \end{scope}

\end{tikzpicture}

\caption{(a) The circuit architecture we consider and (b) individual circuit components. The dynamics comprise alternating layers of Pauli-X weak measurements (red) and dephasing (yellow), and Pauli-ZZ weak measurements (blue) and dephasing (yellow). We emphasize that the dynamics may be viewed in terms of the doubled state $\kett{\rho}$, which has forward and backward degrees of freedom. Measurements act separately on the degrees of freedom (illustrated by two decoupled gates) and dephasing acts on both degrees of freedom simultaneously (illustrated by a cubic gate each).}
\end{figure*}

We study the following circuit dynamics with parameters $\lambda_x,\lambda_{zz} \in [0,1]$ and $q_x,q_{zz} \in [0,1/2]$. Our system is a $1$d ring of $L$ qubits that are evolved under a circuit of $T$ layers. In each layer,
\begin{enumerate}
    \item Pauli-X is measured at each site with strength $\lambda_x$,
    \item Pauli-X dephasing is applied to each site with strength $q_x$,
    \item Pauli-ZZ is measured at each bond with strength $\lambda_{zz}$, and
    \item Pauli-ZZ dephasing is applied to each bond with strength $q_{zz}$.
\end{enumerate}

The measurements we consider are weak measurements. In particular, when a Pauli-X or Pauli-ZZ measurement is performed at site $i$, weak projector operators
\[
\calP^x_{i,\pm} &= \frac{1 \pm \lambda_x X_i}{\sqrt{2(1+\lambda_x^2)}}\\
\calP^{zz}_{i,\pm} &= \frac{1 \pm \lambda_{zz} Z_iZ_{i+1}}{\sqrt{2(1+\lambda_{zz}^2)}}
\]
are applied with signs selected according to the Born rule.

We consider dephasing channels of the form
\[
\calN^x(\rho) &= \prod_i\calN^x_i(\rho)\\
\calN^{zz}(\rho) &= \prod_i\calN^{zz}_i(\rho)
\]
where 
\[\label{eq:dephasing}
\calN^x_i(\rho) &= (1-q_x)\rho + q_xX_i\rho X_i\\
\calN^{zz}_i(\rho) &= (1-q_{zz})\rho + q_{zz}Z_iZ_{i+1}\rho Z_iZ_{i+1}.
\]

We take our initial state to be in the code space of the \emph{repetition code}, a stabilizer code with a threshold under $\bbZ_2$-symmetric noise. The stabilizer group of the $1$d repetition code is generated by $Z_iZ_{i+1}$ at each bond $(i,i+1)$. These operators stabilize a code space spanned by the GHZ states
\[
\ket{GHZ\pm} &= \frac{1}{\sqrt{2}}(\ket{{\uparrow} \cdots {\uparrow}}\pm\ket{{\downarrow} \cdots {\downarrow}}).
\]
These are eigenstates of the logical Pauli-X operator, $\bar{X} = \prod_i X_i$.

We are interested in running the dynamics for long times, and exploring the properties of the resulting steady-state density matrices on typical measurement trajectories.

When we consider the inclusion of unitary rotations in Sec.~\ref{sec:phases:unitaries}, the dynamics will be modified by applying $\calU_i^x = e^{i\theta_x X_i}$ at each site during the Pauli-X part of each layer, and applying $\calU_i^{zz} = e^{i\theta_{zz} Z_iZ_{i+1}}$ at each bond during the Pauli-ZZ part of each layer.

\subsection{Doubled state formalism}\label{sec:setup:doubled}
For any density matrix $\rho$, we may define a doubled state $\kett{\rho}$ such that $\braakett{b,b'}{\rho} = \mel{b}{\rho}{b'}$ for all $b,b' \in \bbZ_2^L$. This state lives in two copies of the original Hilbert space. This representation is useful because it puts each component of our dynamics on the same footing. To be concrete, we may express our circuit components as operators on the doubled space:
\[
\calP^{x}_{i,\pm}\ot \calP^{x}_{i,\pm}  &\propto e^{\pm ({\tanh^{-1}}\lambda_x) (X_i + X_i')}\label{eq:doubled_evolution_1}\\
\calP^{zz}_{i,\pm}\ot \calP^{zz}_{i,\pm}  &\propto e^{\pm ({\tanh^{-1}}\lambda_{zz}) (Z_iZ_{i+1} + Z_i'Z_{i+1}')}\\
\calN^{x}_i &\propto e^{({\tanh^{-1}}\frac{q_{x}}{1-q_{x}}) X_i X_i'}\\
\calN^{zz}_i &\propto e^{({\tanh^{-1}}\frac{q_{zz}}{1-q_{zz}}) Z_i Z_{i+1}Z_i' Z_{i+1}'}\label{eq:doubled_evolution_4}
\]
where the unprimed and primed operators act on the ket and bra Hilbert spaces, respectively.

\subsection{Symmetries and duality}\label{sec:setup:symmetries}
These dynamics have an obvious $\bbZ_2$ Ising symmetry, a hidden $U(1)$ symmetry, and a Kramers-Wannier duality \cite{PhysRev.60.252}.

The $\bbZ_2$ symmetry is generated by $\bar{X}$. This is a \emph{strong symmetry} since it commutes with every measurement and each Kraus operator of the dephasing channels. This strong symmetry also implies a \emph{weak symmetry}: if $\rho \to \rho'$ under the dynamics then $\bar{X}\rho \bar{X}^\dag \to \bar{X}\rho' \bar{X}^\dag$.

For a symmetry $U$, a mixed state $\rho$ is strongly symmetric if
\[\label{eq:strong-symmetry}
U\rho = e^{i\theta}\rho
\]
for some $\theta \in \bbR$, and weakly symmetric if
\[\label{eq:weak-symmetry}
U\rho U^\dag = \rho.
\]
Viewed in terms of the doubled state $\kett{\rho}$, these symmetry transformations are $U \ot \one$ and $U \ot U^*$ respectively.

The presence of the strong $\bbZ_2$ symmetry in our dynamics is important because it provides two ways for the steady state to spontaneously break the symmetries. The strong symmetry might be spontaneously broken leaving the weak symmetry still present, a so-called strong-to-weak SSB. Alternatively, all symmetries might be broken, both the strong and weak symmetries. Finally, one could imagine a phase where neither the strong nor the weak symmetries are broken.
As a result, one should not be surprised to find three steady-state phases in this dynamics, as we do in Sec.~\ref{sec:phases}. Indeed, from the perspective of symmetry breaking, there are precisely three possible phases.

An unexpected $U(1)$ symmetry is also present in the dynamics. As discussed further in App.~\ref{app:transformations}, there is a non-local transformation of the doubled space to new spin degrees of freedom under which
\[
X_i &= \sigma^x_{2i-1} \sigma^x_{2i}\label{eq:XXZ_transformation1}\\
X_i' &= \sigma^y_{2i-1} \sigma^y_{2i}\\
Z_iZ_{i+1} &= \sigma^y_{2i} \sigma^y_{2i+1}\\
Z_i'Z_{i+1}' &= \sigma^x_{2i} \sigma^x_{2i+1}\label{eq:XXZ_transformation4}
\]
such that all (anti-)commutation relations are preserved. Under this transformation, the measurement and dephasing operators become
\[
\calP^{x}_{i,\pm}\ot \calP^{x}_{i,\pm}  &\propto e^{\pm ({\tanh^{-1}}\lambda_x) (\sigma^x_{2i-1} \sigma^x_{2i} +  \sigma^y_{2i-1} \sigma^y_{2i})}\label{eq:x_meas_def}\\
\calP^{zz}_{i,\pm}\ot \calP^{zz}_{i,\pm}  &\propto e^{\pm ({\tanh^{-1}}\lambda_{zz}) (\sigma^x_{2i} \sigma^x_{2i+1} + \sigma^y_{2i} \sigma^y_{2i+1})}\label{eq:zz_meas_def}\\
\calN^{x}_i &\propto e^{-({\tanh^{-1}}\frac{q_{x}}{1-q_{x}}) \sigma^z_{2i-1} \sigma^z_{2i}}\label{eq:x_noise_def}\\
\calN^{zz}_i &\propto e^{-({\tanh^{-1}}\frac{q_{zz}}{1-q_{zz}}) \sigma^z_{2i-1} \sigma^z_{2i}}\label{eq:zz_noise_def}
\]
each of which clearly commutes with $Q = \sum_{i=1}^{2L} \sigma^z_i$. Alternatively, a related transformation yields a system of $L$ spinless complex fermions, where the $U(1)$ symmetry is fermion number conservation. In either case, the $U(1)$ symmetry is explicitly broken when unitary rotations of either type are introduced, unless measurements of that type are removed.

Lastly, these dynamics have a Kramers-Wannier duality under which
\[
X_i &\leftrightarrow Z_i Z_{i+1}\\
X_i' &\leftrightarrow Z_i' Z_{i+1}'
\]
as long as we confine ourselves to the $\bar{X} = 1$ symmetry sector. From the perspective of the quantum circuit, this transforms each trajectory of our model to a trajectory of a model with $\lambda_x \leftrightarrow \lambda_{zz}$, $q_x \leftrightarrow q_{zz}$, the order of Pauli\nobreakdash-X and Pauli-ZZ layers interchanged, and the initial state transformed. At late times in the thermodynamic limit, the reversed layer order and initial state do not impact steady-state phases, leading to an exact duality. This yields a line of self-dual points where $q_x=q_{zz}$ and $\lambda_x=\lambda_{zz}$. The duality can also be seen on the level of the partition function, as discussed further in Sec.~\ref{sec:path_integral} and App.~\ref{app:duality}. Of particular interest is the measurement-only point, which is a special disordered self-dual critical point.

\section{Path integral formulation}\label{sec:path_integral}
It is instructive to formulate our dynamics as a path integral in the doubled state representation. This exposes the statistical mechanics structure underlying our dynamics, which will be useful for relating our various observables in Sec.~\ref{sec:phases}.

In Sec.~\ref{sec:path_integral:derivation} we provide a high-level derivation of the path integral and corresponding partition function. In Sec.~\ref{sec:path_integral:symmetries}, we explore how the symmetries and duality of the dynamics act on the partition function. In Sec.~\ref{sec:path_integral:limits}, we show how the partition function reduces to the standard RBIM when either type of measurement is absent. In Sec.~\ref{sec:path_integral:disorder_free}, we explore two related theories (the $2$\nobreakdash-replica theory and the forced measurement dynamics) where disorder is absent and the phase diagram is well-understood. Finally, in Sec.~\ref{sec:path_integral:code_space} we discuss how the symmetry-breaking phase transitions manifest in the code space, with implications for information-theoretic observables.

Throughout this section, we endeavor to present only the most necessary details. More thorough derivations and discussions are contained in App.~\ref{app:path_integral} (deriving the partition function), App.~\ref{app:duality} (showing the Kramers-Wannier duality), App.~\ref{app:RBIM} (more carefully explaining connections to the RBIM), and App.~\ref{app:transformations} (on transformations to fermionic degrees of freedom and the XXZ chain).

\subsection{Deriving the path integral}\label{sec:path_integral:derivation}
We begin, in this subsection, by providing a high-level derivation of the path integral for the doubled state.

Each measurement trajectory $\m = (\m^x, \m^{zz})$ is an array of $\pm 1$ measurement results. Along each trajectory, the final state is
\[
\kett{\rho_{\m}} = \prod_{t=1}^T K_t \kett{\rho_0}
\]
where
\[\label{eq:K_t}
K_t = \prod_{i=1}^{L} \calN_i^{zz} ~(\calP^{zz}_{i,m^{zz}_{t,i}})^{\ot2}~\calN_i^{x} ~(\calP^{x}_{i,m^x_{t,i}})^{\ot2}.
\]
We remark that these states are not normalized but restoring appropriate constants will ensure that $p_{\m} = \tr \rho_{\m}$ is the Born probability of the trajectory $\m$.

We can formulate these dynamics in terms of a path integral by inserting $T+1$ resolutions of the identity in the computational basis. Upon defining  $\rho_{\m}(s,s') = \braakett{s,s'}{\rho_{\m}}$ we find that
\[
\rho_{\m}(s_T,s_T') = \sum_{s_0,s_0'} \calZ_{\m}(s_T,s_T',s_0,s_0')\rho_0(s_0,s_0')
\]
with $\tr \rho_{\m} = \sum_{\s} \braakett{s,s}{\rho_{\m}}$. Here
$\calZ_{\m}(s_T,s_T',s_0,s_0')$ encodes matrix elements of $\prod_{t=1}^T K_t$:
\[
\calZ_{\m}(s_T,s_T',s_0,s_0') = \sum_{s,s'} e^{-H_{m}(s,s')}
\]
where $H_{m}(s,s')$ is a generically complex Hamiltonian for the $(1+1)d$ dynamics:
\begin{widetext}
\vspace{-\abovedisplayskip}
\begin{equation}\label{eq:Hamiltonian}
\begin{aligned}
H_{\m}(s,s') = -\sum_{t=1}^{T}\sum_{i=1}^{L}\Bigg[ {m^{zz}_{t,i}}J_{zz} (s_{t,i}s_{t,i+1}+s_{t,i}'s_{t,i+1}') + K_{zz} s_{t,i}s_{t,i+1}s_{t,i}' s_{t,i+1}'\\
+ \left( J_x + \frac{1-{m^x_{t,i}}}{2}\frac{i\pi}{2} \right)(s_{t-1,i}s_{t,i} + s_{t-1,i}'s_{t,i}') + K_x s_{t-1,i}s_{t,i}s_{t-1,i}'s_{t,i}' \Bigg]
\end{aligned}
\end{equation}
\end{widetext}
with couplings
\[
J_{zz} &= \tanh^{-1}\lambda_{zz}\label{eq:coupling_1}\\
K_{zz} &= \tanh^{-1}\frac{q_{zz}}{1-q_{zz}}\\
J_x &= -\frac{1}{4}\log(-1 + \frac{1+\lambda_x^2}{1-(1-\lambda_x^2)q_x})\\
K_x &= -\frac{1}{4}\log \frac{\lambda_x^2}{(q_x+(1-q_x)\lambda_x^2)(1-q_x(1-\lambda_x^2))}\label{eq:coupling_4}
\]
determined by the original parameters of our dynamics.

A more detailed derivation can be found in App.~\ref{app:path_integral}, where we also include the case with unitary Pauli-X and Pauli-ZZ rotations.

\subsection{Symmetries and duality}\label{sec:path_integral:symmetries}
In this subsection, we discuss the symmetries and the duality introduced in Sec.~\ref{sec:setup:symmetries} and examine how they act on the partition function. 

The strong symmetry acts as $\bar{X} \ot \one$ on the doubled state and the weak symmetry as $\bar{X} \ot \bar{X}$. These are clearly symmetries of each operator in Eqs.~{\eqref{eq:doubled_evolution_1} to \eqref{eq:doubled_evolution_4}}, so it is easy to see how they act on individual classical paths: by moving $\bar{X} \ot \one$ through a single term in the path integral, we find that the strong symmetry implements $s \to -s$ (leaving $s'$ unchanged) and the weak symmetry implements both $s \to -s$ and $s' \to -s'$. These are consistent with the clear Ising symmetries of Eq.~\eqref{eq:Hamiltonian} for one set of spins only and both sets together.

It follows that the strong symmetry is broken down to a weak symmetry when $\expval{s_{t,i}s_{t,i}'}$ orders and the two sets of spins are locked together. The symmetries are completely broken when ordering is present among the unprimed and primed spins independently.

We also examine how the Kramers-Wannier duality arises in the partition function. Here we focus on the measurement-only case (where $q_x = q_{zz} = 0$) with the full duality studied in App.~\ref{app:duality}. This case is particularly interesting because the self-dual line of our dynamics is critical in the absence of noise, giving rise to a special self-dual critical point when $\lambda_x = \lambda_{zz}$. The duality is obtained by relating the high- and low-temperature loop expansions of the partition function. In this case, $K_x = K_{zz} = 0$ so that the unprimed and primed spins decouple and we can focus on just one set of spins. Accordingly, the full partition function may be decomposed as $\calZ_{\m} = \abs{Z_{\m}}^2$ and we may study just one copy of $Z_{\m}$, which in the absence of unitaries is real.

Performing a high-temperature expansion on one copy (which converges when $\lambda_{zz},1-\lambda_x \ll 1$) yields
\[
Z_{\m}^{\text{high}} &= \prod_{t,i} (1+ {m^x_{t,i}}\lambda_x) \sum_{\ell} W_{\ell}
\]
where each loop $\ell$ in the spacetime lattice is assigned a weight $W_{\ell}$. This weight is the product of weights $W_e$ on each constituent edge $e$, with 
\[
W_e = \begin{cases} 
{m^{zz}_{t,i}} \lambda_{zz} & \text{spatial } e\\
\left(\frac{1-\lambda_x}{1+\lambda_x}\right)^{m^x_{t,i}} & \text{temporal } e
\end{cases}
\]
associating $m^{zz}_{t,i}$ to $\{(t,i),(t,i+1)\}$ and $m^{x}_{t,i}$ to $\{(t-1,i), (t,i)\}$.

Performing a low-temperature expansion on one copy (which converges when $\lambda_{x},1-\lambda_{zz} \ll 1$) yields
\[
Z_{\m}^{\text{low}} &= \prod_{t,i} (1+ {m^{zz}_{t,i}}\lambda_{zz}) \sum_{\ell^*} W_{\ell^*}
\]
where each dual loop $\ell^*$ is assigned a weight $W_{\ell^*}$. This weight is the product of weights $W_{e^*}$ on each constituent dual edge $e^*$, with 
\[
W_{e^*} = \begin{cases} 
{m^{x}_{t,i}} \lambda_{x} & \text{spatial } e^*\\
\left(\frac{1-\lambda_{zz}}{1+\lambda_{zz}}\right)^{m^{zz}_{t,i}} & \text{temporal } e^*
\end{cases}
\]
with the same association of measurement results to edges.

It is clear, by comparing these two expansions, that the statistical physics is unchanged when $\lambda_x \leftrightarrow \lambda_{zz}$ and $\m^x \leftrightarrow \m^{zz}$. Furthermore, since the disorder is sampled according to the Born rule, these same partition functions determine the probabilities of measurement trajectories. Both $Z_{\m}^{\text{high}}$ and $Z_{\m}^{\text{low}}$ were defined including all their dependence on the measurement record, so their equivalence implies that $p_{\m^x,\m^{zz}}$ with parameters $\lambda_x$ and $\lambda_{zz}$ is equal to $p_{\m^{zz},\m^{x}}$ when the measurement strengths are interchanged. Consequently, the full disordered model has a Kramers-Wannier duality.

The self-dual point at $\lambda_x = \lambda_{zz}$ is particularly remarkable. In the high-temperature expansion, Pauli-X measurement results control the size of weights whereas Pauli-ZZ results control the signs of these weights. In the low-temperature expansion, the roles are exchanged. At the self-dual point, the two roles are in perfect balance: when all loops are summed over, the partition function is affected identically by negative signs on weights on one side of the duality as it is by systematically reduced weights on the other side. This self-dual point was recently studied in a related model \cite{wang2025decoherence}. 

\subsection{Classical limits}\label{sec:path_integral:limits}
There are two situations where the steady state is mixed but the dynamics reduce to only one set of spins. In each case, the partition function may be expressed as a particular RBIM on its Nishimori line, analogous to the usual noisy repetition code dynamics \cite{Dennis_2002} (see App.~\ref{app:RBIM}). Consequently, the dynamics have no sign problem in these limits. In contrast, we expect there is generally a sign problem away from these limits.

First, we consider the case where $\lambda_x = 0$. In this case, $Z_iZ_i'$ is a symmetry of the doubled state dynamics. As a result, the dynamics can be studied in the reduced space where $Z_i Z_i' = 1$, in which
\[
\calP^{zz}_{i,\pm} \otimes \calP^{zz}_{i,\pm} &\propto e^{\pm 2({\tanh^{-1}}\lambda_{zz}) Z_iZ_{i+1}}\\
\calN^{x}_i &\propto e^{({\tanh^{-1}}\frac{q_{x}}{1-q_{x}}) X_i}
\]
and $\calN^{zz}_i$ acts trivially. We remark that setting $Z_iZ_i' = 1$ implies that strong-to-weak symmetry breaking has occurred.

This reduction can also be seen in the partition function, where $K_x \to \infty$. This enforces $s_{t,i}s_{t,i}' = s_{t+1,i}s_{t+1,i}'$ and, if we take boundary conditions with $s_{0,i}s_{0,i}' = 1$, that $s_{t,i}s_{t,i'}' = 1$ everywhere. This reduces the partition function to just one copy of spins where the $K_{zz}$ term is an overall constant. Moreover, since $s_{t,i}s_{t+1,i} + s_{t,i}'s_{t+1,i}' = 2s_{t,i}s_{t+1,i}$ in this case, the imaginary energy when $m^x_{t,i} = -1$ is always $i \pi s_{t,i}s_{t+1,i}$ and thus has a spin-independent effect on the Boltzmann weight. Consequently, the temporal disorder vanishes and the partition function is non-negative (when appropriately normalized):
\begin{multline}\label{eq:RBIM1}
H_{\m}(\s) = -2\sum_{t,i} \big[ {m^{zz}_{t,i}} J_{zz} s_{t,i}s_{t,i+1} + J_x s_{t,i}s_{t+1,i}\big].
\end{multline}
As we discuss in App.~\ref{app:RBIM}, this is equivalent to the standard RBIM for observables that are invariant under RBIM gauge transformations. Of course, it automatically lives on the Nishimori line since the measurement results are sampled according to the partition function itself.

Second, we consider the case where $\lambda_{zz} = 0$. In this case, $X_i X_i'$ is a symmetry of the doubled state dynamics, allowing us to study the reduced space where $X_i X_i' = 1$ so that
\[
\calP^{x}_{i,\pm} \otimes \calP^{x}_{i,\pm} &\propto e^{\pm 2({\tanh^{-1}}\lambda_{x}) X_i}\\
\calN^{zz}_i &\propto e^{({\tanh^{-1}}\frac{q_{zz}}{1-q_{zz}}) Z_iZ_{i+1}}
\]
and $\calN^x_i$ acts trivially. To see the reduction in the partition function, it is useful to change variables to $\tilde{s}_{t,i} = s_{t,i}s_{t,i}'$. Then, the only remaining $s_{t,i}$ dependence is in $1$d Ising models along temporal lines, which are easily integrated out. The result is
\begin{multline}\label{eq:RBIM2}
H_{\m}(\tilde{\s}) = -\sum_{t,i}\big[K_{zz}\tilde{s}_{t,i}\tilde{s}_{t,i+1}\\ 
+ (A_x + \tfrac{1-{m^x_{t,i}}}{2}\tfrac{i\pi}{2}) \tilde{s}_{t,i}\tilde{s}_{t,i+1}\big]
\end{multline}
where $\tanh A_x = (1-\lambda_x)^2/(1+\lambda_x)^2$. Without further intervention, this partition function has negative weights. However, under Kramers-Wannier duality, it maps to the first case and is therefore sign-free. This emphasizes the important distinction between the disordered self-dual point discussed in Sec.~\ref{sec:path_integral:symmetries} and the standard RBIM: The standard RBIM maps to a model with negative weights under Kramers-Wannier duality and thus cannot be self-dual.

\subsection{Connections to disorder-free models}\label{sec:path_integral:disorder_free}
The random measurement outcomes in our dynamics make the corresponding partition function disordered. Nevertheless, there are two related disorder-free models that are interesting to consider. This makes it possible to study the time-continuum limit, where the measurement and dephasing strengths are small, in which case the steady-state phases correspond to the ground state phases of a $1$d quantum Hamiltonian. In this subsection, we show how each disorder-free model reduces to a quantum Ashkin-Teller model and compare the two models. In each case, we find it illuminating to perform a non-local transformation to a staggered XXZ chain.

The first disorder-free model is obtained by setting $m^{x}_{t,i} = m^{zz}_{t,i} = +1$ for all $t,i$, corresponding to ``forced'' measurements. This immediately reduces the Hamiltonian in Eq.~\eqref{eq:Hamiltonian} to a standard $2$d classical Ashkin-Teller model. By taking the time-continuum limit, we can go one step further. In this case, we take $\delta t$ to be a small time step and redefine our circuit parameters such that $\lambda_x \to \delta t\, \lambda_x$, $\lambda_{zz} \to \delta t\, \lambda_{zz}$, $q_x \to \delta t\, q_x$, and $q_{zz} \to \delta t\, q_{zz}$. Then, to linear order in $\delta t$,
\[
\kett{\rho} = e^{-T \delta t H_{1}}\kett{\rho}
\]
where 
\begin{multline}\label{eq:H1}
H_1 = -\sum_{i=1}^L \big[ \lambda_x  (X_i + X_i') +\lambda_{zz}(Z_iZ_{i+1}+Z_i' Z_{i+1}')\\
+ q_x X_iX_i' + q_{zz} Z_iZ_{i+1}Z_i'Z_{i+1}'\big]
\end{multline}
is a $1$d quantum Ashkin-Teller model. 

The second disorder-free model arises from the $2$\nobreakdash-replica theory. When studying quenched disorder, the replica trick leads one to write average quantities as limits of analogous quantities in $n$-replica theories where the disorder may be annealed. Although the limit itself is difficult to take (in this case, the Born rule leads us to take the limit as $n \to 1$) it can be instructive to examine more tractable values of $n$. Here, the simplest case is $n=2$.

In this case the object of interest is 
\[
\kett{\rho^{(2)}} 
&= \sum_{\m} \kett{\rho_{\m}}^{\ot 2}\\
&= \sum_{\m} \left(\prod_{t=1}^T K_t^{\ot 2}\kett{\rho_0}^{\ot 2}\right).
\]
Since each measurement result appears in exactly one gate, we can evaluate the $2$\nobreakdash-replica dynamics by computing $\sum_m (\calP^{x}_{i,m})^{\ot 4}$ and $\sum_m (\calP^{zz}_{i,m})^{\ot 4}$. This is particularly easy in the time-continuum limit; in this case, we redefine our circuit parameters such that $\lambda_x \to \sqrt{\delta t}\,\lambda_x$, $\lambda_{zz} \to \sqrt{\delta t}\,\lambda_{zz}$,  $q_x \to \delta t\, q_x$, and $q_{zz} \to \delta t\, q_{zz}$. Then, we find that
\[
\sum_{m} (\calP^x_{i,m})^{\ot 4} &\propto e^{\delta t\, \lambda_x^2 \sum_{a,b = 1}^4 X_i^{(a)}X_i^{(b)}}\\
\sum_{m} (\calP^{zz}_{i,m})^{\ot 4} &\propto e^{\delta t\, \lambda_{zz}^2 \sum_{a,b = 1}^4 Z_i^{(a)}Z_{i+1}^{(a)}Z_i^{(b)}Z_{i+1}^{(b)}}
\]
to linear order in $\delta t$ because only terms which are even powers of the measured operator survive when we sum over $m$. We also note that
\[
(\calN^{x}_i)^{\ot 2} &= e^{\delta t\, q_x (X_i^{(1)}X_i^{(2)}+X_i^{(3)}X_i^{(4)})}\\
(\calN^{zz}_i)^{\ot 2} &= e^{\delta t\, q_{zz} (Z_i^{(1)}Z_{i+1}^{(1)}Z_i^{(2)}Z_{i+1}^{(2)}+Z_i^{(3)}Z_{i+1}^{(3)}Z_i^{(4)}Z_{i+1}^{(4)})}.
\]
We can actually reduce these four flavors of operators down to only two. First, we notice that $\prod_{a=1}^4 X_i^{(a)}$ and $\prod_{a=1}^4 Z_i^{(a)}$ commute with each gate in the $2$\nobreakdash-replica theory. Thus, we may set these operators to be $+1$ and use this to remove the fourth flavor. Next, we notice that the remaining operators are not independent: $Z^{(1)}Z^{(2)} = (Z^{(1)}Z^{(3)})(Z^{(2)}Z^{(3)})$ and $X^{(1)}X^{(2)} = (X^{(1)}X^{(3)})(X^{(2)}X^{(3)})$. In fact, at each site, we can define
\[
Z &= Z^{(1)}Z^{(3)}\\
Z' &= Z^{(2)}Z^{(3)}\\
X &= X^{(2)}X^{(3)}\\
X' &= X^{(1)}X^{(3)}
\]
while preserving all (anti-)commutation relations. The new degrees of freedom may be interpreted as domain walls between flavors $1$ and $3$, and $2$ and $3$, respectively. Implementing this simplification, we conclude that
\[
\kett{\rho^{(2)}} = e^{-2 T \delta t H_2}
\]
where, to linear order in $\delta t$,
\begin{multline}\label{eq:H2}
H_2 = -\sum_{i=1}^L \big[ \lambda_x^2  (X_i + X_i') +\lambda_{zz}^2(Z_iZ_{i+1}+Z_i' Z_{i+1}')\\
+ (\lambda_x^2 + q_x) X_iX_i' + (\lambda_{zz}^2 + q_{zz}) Z_iZ_{i+1}Z_i'Z_{i+1}'\big].
\end{multline}
We immediately notice that our two disorder-free quantum Hamiltonians, $H_1$ and $H_2$, have an important difference. Although they are similar quantum Ashkin-Teller models, $H_2$ spontaneously generates couplings between the two flavors of spins even when $q_x = q_{zz} = 0$ whereas $H_1$ does not. As a result, the forced measurement dynamics displays a broader range of Ashkin-Teller phenomena than the $2$\nobreakdash-replica dynamics.

To see this, it is useful to transform these quantum Ashkin-Teller models to staggered XXZ chains according to one of the non-local transformations discussed in App.~\ref{app:transformations}. The resulting Hamiltonian may be written
\begin{multline}
H_{XXZ} = \sum_{j=1}^L \big[ (J-(-1)^j\Delta_1) (\sigma_j^x\sigma_{j+1}^x + \sigma_j^y\sigma_{j+1}^y)\\
+(K-(-1)^j\Delta_2)\sigma_j^z\sigma_{j+1}^z\big].
\end{multline}
In the forced measurement case, we have
\[
\Delta_1/J &= \frac{\lambda_x - \lambda_{zz}}{\lambda_x+\lambda_{zz}}\label{eq:forced_parameters_1}\\
\Delta_2/J &= \frac{q_x - q_{zz}}{\lambda_x+\lambda_{zz}}\\
K/J &= \frac{q_x + q_{zz}}{\lambda_x+\lambda_{zz}}\label{eq:forced_parameters_3}
\]
and in the $2$\nobreakdash-replica case, we have
\[
\Delta_1/J &= \frac{\lambda_x^2 - \lambda_{zz}^2}{\lambda_x^2+\lambda_{zz}^2}\label{eq:replica_parameters_1}\\
\Delta_2/J &= \Delta_1/J + \frac{q_x -q_{zz}}{\lambda_x^2+\lambda_{zz}^2}\\
K/J &= 1 + \frac{q_x+q_{zz}}{\lambda_x^2+\lambda_{zz}^2}.\label{eq:replica_parameters_3}
\]
For simplicity, we set $\Delta/J := \Delta_1/J = \Delta_2/J$ in both cases; in the forced measurement case this means setting $q_{x}-q_{zz} = \lambda_x - \lambda_{zz}$ and in the $2$\nobreakdash-replica case this means setting $q_x = q_{zz}$. Consequently, the phases of both models can be illustrated in a single phase diagram, as done in Fig.~\ref{fig:disorder-free-phase-diagram}.

When $\Delta = 0$, the model is critical for $K \leq J$. The end of this critical line, when $K = J$, has enhanced symmetry: this is the XXX point where the model has $SU(2)$ symmetry. The forced measurement dynamics explores the entire phase diagram, including a direct transition from Phase~1 to Phase~3 even in the presence of dephasing (when $0 < K \leq J$). Importantly, this is not the case for the $2$\nobreakdash-replica dynamics, where $K \geq J$, even in the absence of dephasing. This is particularly interesting because the region of the phase diagram in Fig.~\ref{fig:disorder-free-phase-diagram} accessed by the $2$\nobreakdash-replica theory closely resembles the phase diagram we obtain numerically for the full disordered dynamics in Sec.~\ref{sec:phases} (though the phases and phase transitions themselves are different).

More details regarding the transformation from the quantum Ashkin-Teller model to the XXZ chain, and from these models to an interacting spinless fermion model, can be found in App.~\ref{app:transformations}.

\begin{figure}
    \centering
    \begin{tikzpicture}[scale=7]
        % Vertical segment + curved critical lines
\draw [black,thick] (0.5,0) -- (0.5,0.3);
\draw [black,thick] plot [smooth, tension=1] coordinates {(0.5,0.3) (0.25,0.47) (0.0,0.75)};
\draw [black,thick] plot [smooth, tension=1] coordinates {(0.5,0.3) (0.75,0.47) (1.0,0.75)};
\draw [black,thick] (0,0) -- (1,0) -- (1,1) -- (0,1) -- (0,0);

% Blue fill
\fill[blue, opacity=0.3]
    (0.5,0) -- (0.5,0.3)
    -- plot [smooth, tension=1] coordinates {(0.5,0.3) (0.25,0.47) (0.0,0.75)}
    -- (0,0) -- cycle;

% Red fill
\fill[red, opacity=0.3]
    (0.5,0) -- (0.5,0.3)
    -- plot [smooth, tension=1] coordinates {(0.5,0.3) (0.75,0.47) (1.0,0.75)}
    -- (1,0) -- cycle;

% Yellow fill
\fill[yellow, opacity=0.3]
    plot [smooth, tension=1] coordinates {(0.0,0.75) (0.25,0.47) (0.5,0.3)}
    -- plot [smooth, tension=1] coordinates {(0.5,0.3) (0.75,0.47) (1.0,0.75)}
    -- (1,1) -- (0,1) -- cycle;

\fill[black, opacity=0.15]
(0,0.3) -- (1,0.3) -- (1,0) -- (0,0) -- (0,0.3);

    % Black dots at curve-box intersections
\fill[black] (0.0,0.75) circle (0.01);
\fill[black] (0.5,0.3) circle (0.01);
\fill[black] (1.0,0.75) circle (0.01);

% \fill[blue, opacity=0.3]
%     plot [smooth, tension=1] coordinates {(0.5,0) (0.20,0.4) (0.0,1.0)}
%     -- (0,0)
%     -- cycle;

% \fill[red, opacity=0.3]
%     -- plot [smooth, tension=1] coordinates {(0.5,0) (0.8,0.4) (1.0, 1.0)}
%     -- (1,0)
%     -- cycle;

% \fill[yellow, opacity=0.3]
%     -- plot [smooth, tension=1] coordinates {(0.5,0) (0.8,0.4) (1.0, 1.0)}
%     -- plot [smooth, tension=1] coordinates {(0.0,1.0) (0.20,0.4) (0.5,0)}
%     -- cycle;

% \node at (0.5,0.8) {\small \textbf{2: ferromagnet}};
% \node[anchor=center,align=center] at (0.25,0.2) {\small \textbf{1: dimer 1}};
% \node[anchor=center,align=center] at (0.75,0.2) {\small \textbf{3: dimer 2}\\\textbf{(unbroken)}};

\node[anchor=center,align=center] at (0.25,0.2) {\textbf{1:} \begin{tabular}[t]{@{}l@{}}\textbf{dimer 1}\\(\textbf{total SSB})\end{tabular}};
\node[anchor=center,align=center] at (0.5,0.8) {\textbf{2:} \begin{tabular}[t]{@{}l@{}}\textbf{Ising antiferromagnet}\\(\textbf{strong-to-weak SSB})\end{tabular}};
\node[anchor=center,align=center] at (0.75,0.2) {\textbf{3:} \begin{tabular}[t]{@{}l@{}}\textbf{dimer 2}\\(\textbf{unbroken})\end{tabular}};

\node[anchor=north] at (0.5,-0.06) {$\Delta/J$};
\node[anchor=north] at (0.5,0.0) {$0$};
\node[anchor=north] at (0.02,0.0) {$-1$};
\node[anchor=north] at (0.98,0.0) {$1$};
\node[anchor=east] at (0.0,0.3) {$1$};
\node[anchor=east] at (-0.02,0.5) {$K/J$};
% \node[anchor=north east] at (0.0,0.0) {$0.0$};

\draw [black,dashed,thick] (0,0.3) -- (1,0.3) -- (1,1) -- (0,1) -- (0,0.3);
    \end{tikzpicture}
    \caption{Phase diagram for the staggered XXZ chain (quantum Ashkin-Teller model). The forced measurement dynamics and the $2$-replica dynamics explore this phase diagram according to Eqs.~\eqref{eq:forced_parameters_1}~to~\eqref{eq:forced_parameters_3} and Eqs.~\eqref{eq:replica_parameters_1}~to~\eqref{eq:replica_parameters_3}, respectively. Consequently, both models see the phase diagram above the dashed line at $K/J = 1$, but only the forced measurement dynamics explores the shaded region below this line. Critical points of particular interest are the $SU(2)$-invariant point at $\Delta/J = 0$ and the critical points at $\Delta/J = \pm 1$ where the model can be reduced from $2L$ spins down to $L$ spins.}
    \label{fig:disorder-free-phase-diagram}
\end{figure}
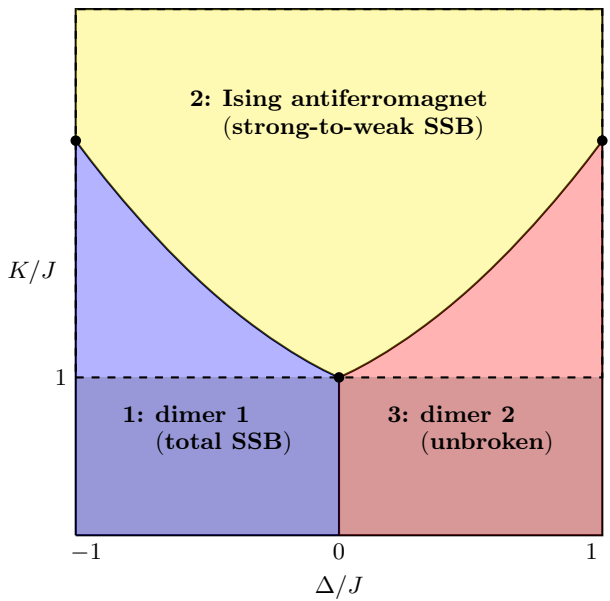

\subsection{Transitions in the code space}\label{sec:path_integral:code_space}
When considering information in the code space, it is convenient to consider the state that results from applying a layer of perfect Pauli-ZZ measurements to the final state of the dynamics. Although this destroys the local correlation structure of the state, it neither disrupts nor learns logical information (in the absence of unitary evolution).

Let the results of the layer of perfect measurements describe a domain-wall configuration consistent only with spin configurations $s_T$ and $-s_T$, and let the initial state be $\ket{{\uparrow} \cdots {\uparrow}}$. Then, suppressing $s_0 = s_0' = {+}1 \cdots {+}1$ for simplicity, the final state is
\[
\rho_{\m} \propto
\begin{pmatrix}
    \calZ_{\m}(s_T, s_T) & \calZ_{\m}(s_T, -s_T)\\
    \calZ_{\m}(-s_T, s_T) & \calZ_{\m}(-s_T,-s_T)
\end{pmatrix}
\]
in the Hilbert space spanned by $\ket{{\pm}s_T}$. Without loss of generality, let us assume that $\calZ_{\m}(s_T, s_T) \geq  \calZ_{\m}(-s_T, -s_T)$.

We note that $\calZ_{\m}(-s_T, s_T) = \calZ_{\m}(s_T, -s_T)^*$ differs from $\calZ_{\m}(s_T, s_T)$ by a defect insertion along the boundary in one set of spins and that $\calZ_{\m}(-s_T, -s_T)$ differs by a defect insertion along the boundary in both sets of spins. Therefore, we can rewrite the final state in terms of the free-energy costs of these single-defect and double-defect insertions:
\[
\rho_{\m} \propto
\begin{pmatrix}
    1 & e^{-\Delta F_1^*}\\
    e^{-\Delta F_1} & e^{-\Delta F_2}
\end{pmatrix}
\]
where 
\[\label{eq:free-energies}
\Delta F_1 &= -\log \frac{\calZ_{\m}(-s_T,s_T)}{\calZ_{\m}(s_T,s_T)}\\
\Delta F_2 &= -\log \frac{\calZ_{\m}(-s_T,-s_T)}{\calZ_{\m}(s_T,s_T)}.
\]
Following the possible patterns of symmetry breaking, $\Delta F_2 \propto L$ implies $\Delta F_1 \propto L$ but not vice versa. When both symmetries are broken, both defect insertions have extensive cost. In this case, the density matrix is equal to $|s_T \rangle \langle s_T|$ at late times and is easily decodable. When we have strong-to-weak SSB, we have $\Delta F_1 \propto L$ but $\Delta F_2 = 0$. In this case, the density matrix is maximally mixed and information has been lost. In the unbroken phase, we have $\Re(\Delta F_1) = \Delta F_2 = 0$. In this case, the state is pure again, with the dynamics acting as a strong logical Pauli-X projector in the code space:
\[
\rho_{\m} = \frac{1}{2}
\begin{pmatrix}
    1 & \pm 1\\
    \pm 1 & 1
\end{pmatrix}.
\]
Thus, the final state is in a single charge sector that has been learned by the measurement record.

As a result, we see that symmetry-breaking in the partition function relates, via defect free energy costs, to information in the code space. In Sec.~\ref{sec:phases} we sharpen this interpretation by introducing information-theoretic diagnostics with rigorous relations to decodability and learning.

\section{Steady-state phases}\label{sec:phases}
As we shall detail, we find that our dynamics have three steady-state phases. These phases may be understood in terms of information and in terms of symmetry breaking.

Phase~1 is a memory phase, where the system retains logical information in the code space of the repetition code. In this phase, both the strong and weak symmetries are spontaneously broken. Phase~2 is a trivial phase, where logical information leaks into the environment. In this phase, the strong symmetry is spontaneously broken down to a weak symmetry. Phase~3 is a learning phase, where the observer extracts logical Pauli-X information using measurements. In this phase, no symmetries are broken. The three phases are illustrated in Fig.~\ref{fig:intrinsic-phase-diagram}.

In Sec.~\ref{sec:phases:observables}, we introduce information-theoretic and symmetry-breaking observables for the three phases. In Sec.~\ref{sec:phases:results}, we discuss their behavior in each phase. In Sec.~\ref{sec:phases:connections}, we argue that these observables witness the same underlying phase transitions. We conclude with comments on the self-dual critical point at $q=0$ in Sec.~\ref{sec:phases:self-dual}, and on how the phase diagram changes when unitary evolution is included in Sec.~\ref{sec:phases:unitaries}.

\begin{figure}
\centering
\begin{tikzpicture}[scale=7]
\input{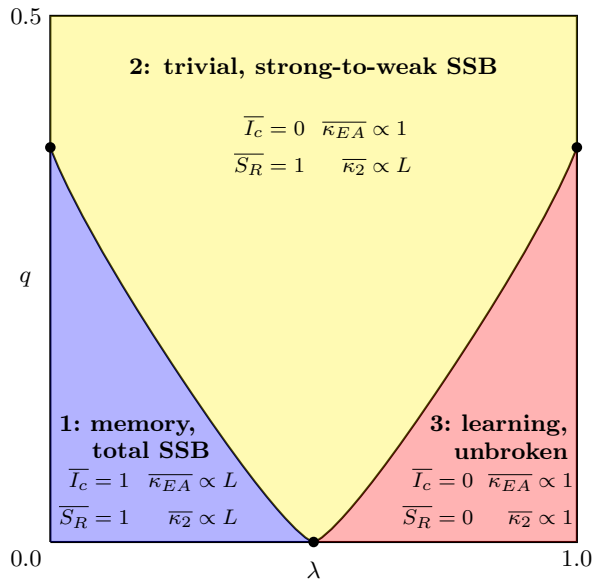}
\end{tikzpicture}
\caption{Schematic phase diagram for the intrinsic phases of our dynamics. Phase~1 is identified by $\overline{I_c} = 1$, indicating a memory, and $\overline{\kappa_{EA}} \propto L$, indicating SSB of the weak symmetry. Phase~3 is identified by $\overline{S_R} = 0$, indicating learning, and $\overline{\kappa_2} \propto 1$, indicating that the strong symmetry is unbroken. Phase~2 is the complement of these two phases, where neither a memory nor learning are possible, and where strong-to-weak SSB is present. $\lambda$ controls the relative strength of Pauli-X and Pauli-ZZ measurements, with $\lambda_x = \delta \lambda$ and $\lambda_{zz} = \delta(1-\lambda)$ where we have selected $\delta = 0.7$. $q_x = q_{zz} = q$ controls the strength of decoherence.}
\label{fig:intrinsic-phase-diagram}
\end{figure}

\subsection{Observables}\label{sec:phases:observables}
To identify the phases based on symmetry breaking, we must diagnose the specific SSB pattern. Since the dynamics have a strong symmetry, a symmetric initial state can never lead to a steady state with SSB. But such a steady state may have susceptibility to break the symmetries, which allows us to identify the phases.\footnote{Alternatively, we can identify the phases by their capacity to restore symmetry to a symmetry-breaking initial state. The degenerate steady states resulting from spontaneously broken symmetries enable an asymmetric initial state to survive at late times, but this is not possible for unbroken symmetries.} The average Edwards-Anderson susceptibility $\overline{\kappa_{EA}}$ and R\'enyi-$2$ susceptibility $\overline{\kappa_2}$ identify SSB of the weak and strong symmetries, respectively.

Along a measurement trajectory $\m$, the Edwards-Anderson susceptibility is 
\[
\kappa_{EA}(\m) &= \frac{1}{L}\sum_{i,j} \frac{\tr(\rho_{\m}Z_iZ_j)^2}{(\tr\rho_{\m})^2}\\
&= \frac{1}{L}\sum_{i,j} \frac{\mell{\one}{Z_iZ_j}{\rho_{\m}}^2}{\braakett{\one}{\rho_{\m}}^2},
\]
where the summand is simply $\expval{Z_iZ_j}$ along $\m$ in doubled-state notation.

The R\'enyi-$2$ susceptibility is 
\[
\kappa_2(\m) &= \frac{1}{L} \sum_{i,j} \frac{\tr(\rho_{\m}Z_iZ_j\rho_{\m}Z_iZ_j)}{\tr\rho_{\m}^2}\\
&= \frac{1}{L} \sum_{i,j} \frac{\mell{\rho_{\m}}{Z_iZ_jZ_i'Z_j'}{\rho_{\m}}}{\braakett{\rho_{\m}}{\rho_{\m}}}.
\]
We study the average quantities $\overline{\kappa_{EA}} = \sum_{\m} p_{\m} \kappa_{EA}(\m)$ and $\overline{\kappa_2} = \sum_{\m} p_{\m} \kappa_{2}(\m)$ where $p_{\m}$ is the Born probability of the measurement trajectory $\m$. In particular, $p_{\m} = \tr \rho_{\m} = \braakett{\one}{\rho_{\m}}$.

To identify the phases based on their information-theoretic properties, we must diagnose whether information is lost from the system and, if lost, whether it is lost to the environment or the observer. The average coherent information $\overline{I_c}$ and the reference entropy $\overline{S_R}$ distinguish these possibilities.

As usual, the von Neumann entropy is
\[
S(\rho) = -\tr \rho\log\rho.
\]

The coherent information along a measurement trajectory is defined as
\[\label{eq:Ic_m}
I_c(\m) = S(\rho_{Q,\m}) - S(\rho_{QR,\m})
\]
where $Q$ is the repetition code system, $R$ is an ancilla qubit initially entangled with the code words, and $\m$ is the measurement trajectory. The reference entropy is $S_R(\m) = S(\rho_{R,\m})$. We are interested in the average quantities $\overline{I_c} = \sum_{\m} p_{\m} I_c(\m)$ and $\overline{S_R} = \sum_{\m} p_{\m} S_R(\m)$.

\subsection{Results}\label{sec:phases:results}
\begin{figure*}[t]
    \centering
    \begin{overpic}[height=0.33\linewidth]{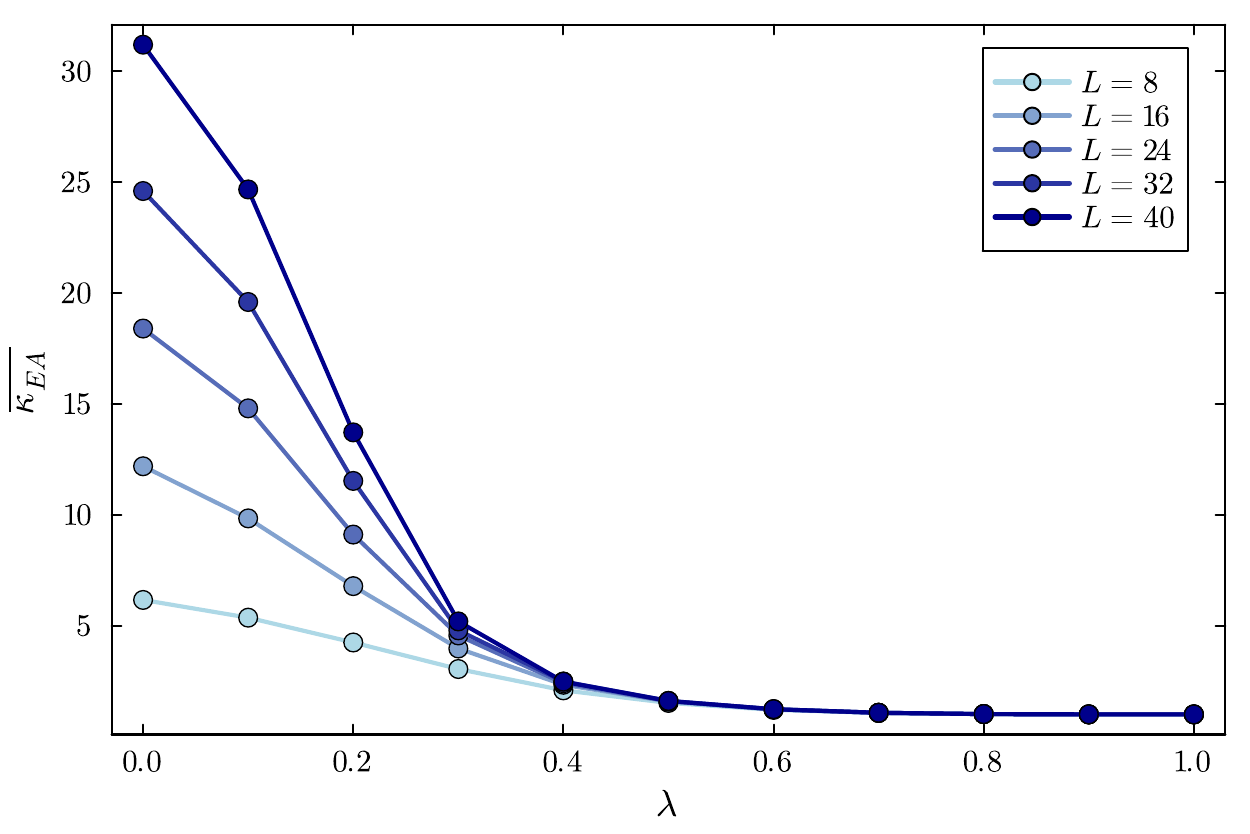}
    \put (0,65) {(a)}
    \end{overpic}%
    \begin{overpic}[height=0.33\linewidth]{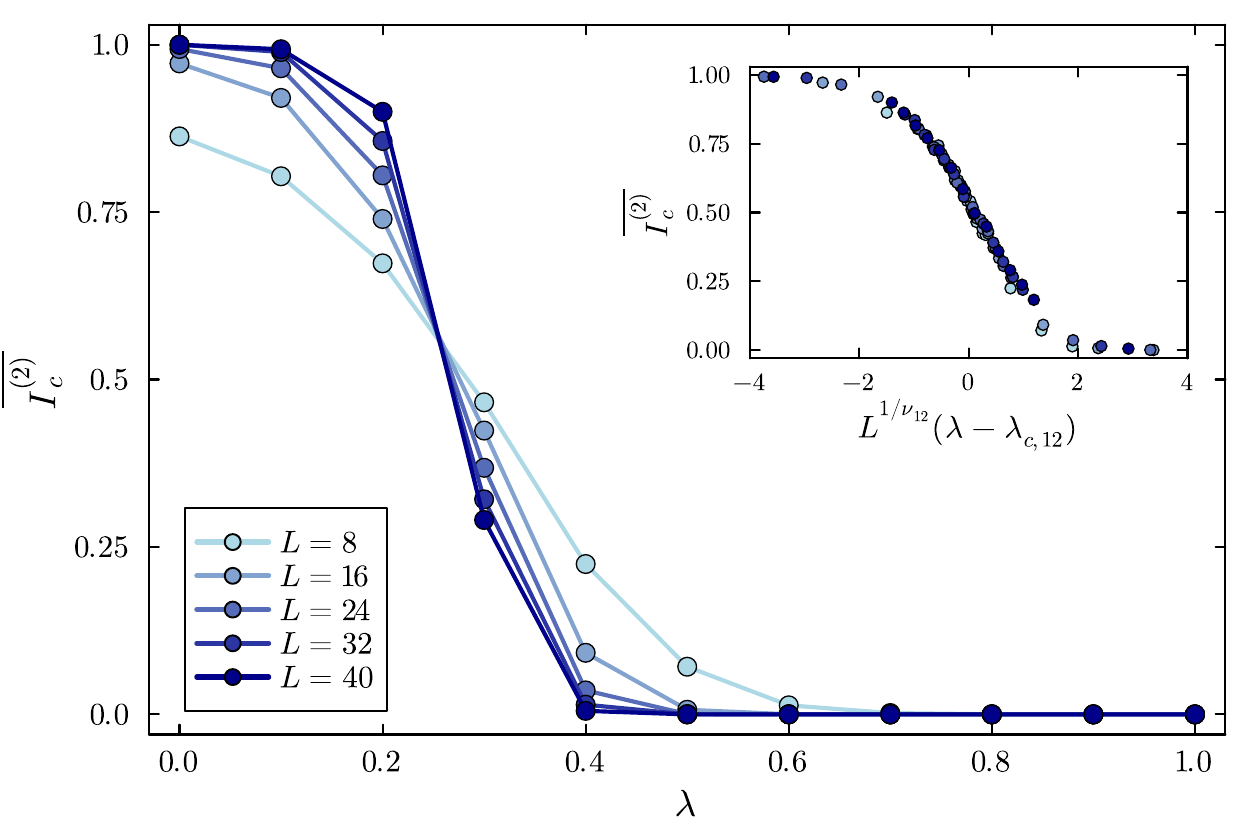}
    \put (1,65) {(c)}
    \end{overpic}
    \begin{overpic}[height=0.33\linewidth]{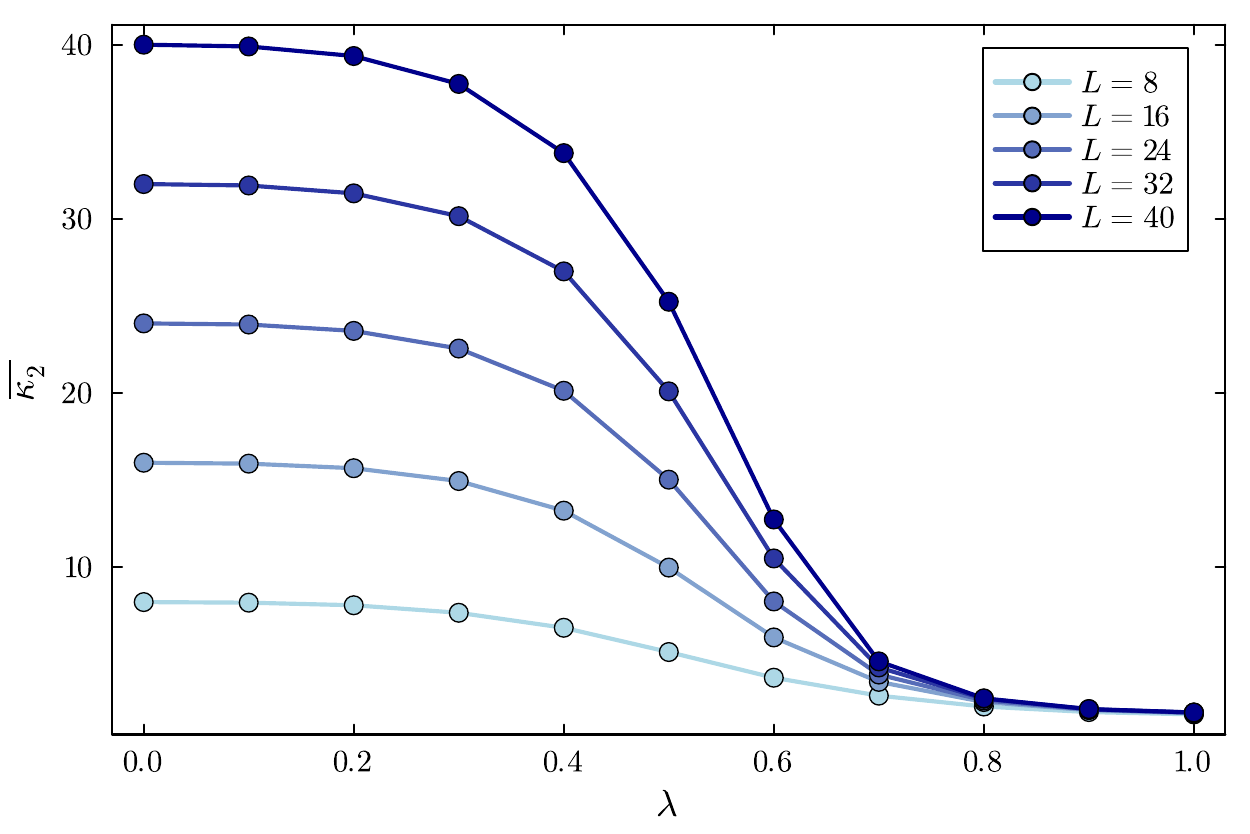}
    \put (0,65) {(b)}
    \end{overpic}%
    \begin{overpic}[height=0.33\linewidth]{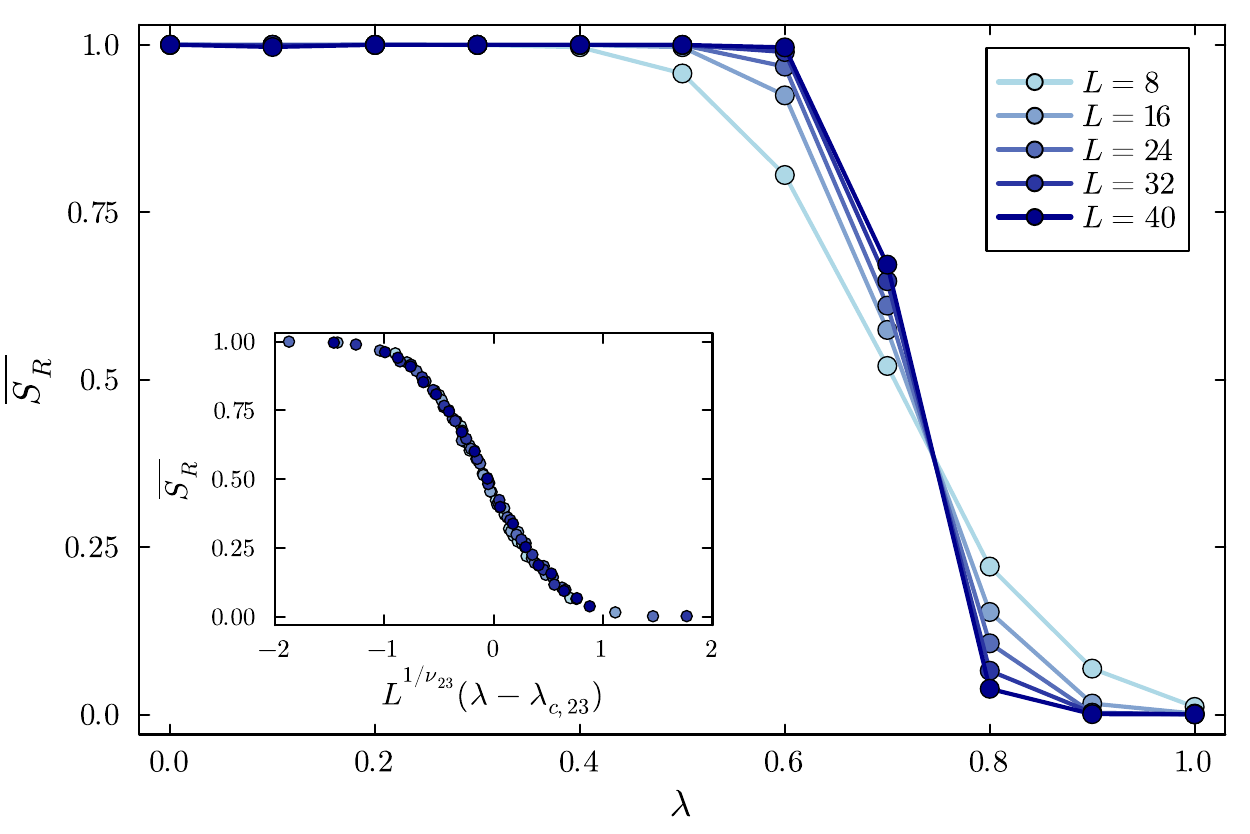}
    \put (1,65) {(d)}
    \end{overpic}
    \caption{Observables for the phase transitions at $q = 0.1$. (a) The Edwards-Anderson susceptibility diverges in Phase~1 where the $\bbZ_2$ symmetry is completely broken. (b) The R\'enyi-$2$ coherent information undergoes a transition between Phase~1 and Phase~2 at $\lambda_{c,{12}} \approx 0.265$. The data exhibit a scaling collapse with $\nu_{12} \approx 1.2$; however, we do not expect this to reflect the underlying criticality that would be diagnosed by the standard (R\'enyi-$1$) coherent information. (c) The R\'enyi-$2$ susceptibility diverges in Phases~1 and~2 where the strong $\bbZ_2$ symmetry is broken down to a weak symmetry.  (d) The reference entropy undergoes a transition between Phase~2 and Phase~3 at $\lambda_{c,{23}} \approx 0.725$. The data exhibit a scaling collapse with $\nu_{23} = 1.5$, consistent with RBIM criticality along the Nishimori line \cite{PhysRevB.111.094201}.}
    \label{fig:intrinsic}
\end{figure*}

The steady-state phases are illustrated in Fig.~\ref{fig:intrinsic-phase-diagram}, along with the values and behavior of our four diagnostics in each phase. We use the parameter $\lambda$ to tune the relative strengths of Pauli-X and Pauli-ZZ measurements; in particular, $\lambda_x = \delta \lambda$ and $\lambda_{zz} = \delta(1-\lambda)$ where $\delta = 0.7$. Also, we set $q_x = q_{zz} = q$. Choosing $\delta < 1$ ensures that fully projective measurements never arise in the phase diagram, but other choices for $\delta < 1$ expose the same physics. Numerical results for the diagnostics are given along the $q = 0.1$ cut of the phase diagram in Fig.~\ref{fig:intrinsic}. These numerics are the results of MPS tensor network simulations \cite{itensor,itensor-r0.3}. We remark that, with the exception of the numerics for Fig.~\ref{fig:self-dual-susceptibilities}, our simulations are done with open boundary conditions for computational efficiency; however, we do not expect this to alter any of our conclusions.

First, we consider the symmetry-breaking observables. A state that spontaneously breaks both symmetries has long-range $\expval{Z_iZ_j}$ correlations. Since our random measurement dynamics can produce $\expval{Z_iZ_j}$ of either sign, one must study an Edwards-Anderson susceptibility to see diverging behavior in the fully broken phase. Consequently, $\overline{\kappa_{EA}} \propto L$ in Phase~1, where total SSB occurs, and is constant elsewhere. Data at $q=0.1$ are given in Fig.~\hyperref[fig:intrinsic]{\ref*{fig:intrinsic}(a)}.

A state that spontaneously breaks the strong symmetry down to a weak symmetry does not have long-range $\expval{Z_iZ_j}$ correlations. Instead, it has long-range correlations in higher moments of the state, namely $\langle \langle Z_iZ_i'Z_jZ_j'\rangle\rangle$. In particular, $\overline{\kappa_2} \propto L$ in Phase~1 and Phase~2, where the strong symmetry is broken, and is constant elsewhere. Data at $q=0.1$ are given in Fig.~\hyperref[fig:intrinsic]{\ref*{fig:intrinsic}(b)}. We can view this as a tendency to develop $\langle\langle Z_iZ_i' \rangle\rangle$ magnetization corresponding to spins in the two halves of the doubled state locking together. This is reminiscent of the locking of forward and backward degrees of freedom that arises under decoherence in the Caldeira-Leggett model, but with many qubits instead of a single particle \cite{Caldeira_1981}.

We remark that the R\'enyi-$2$ correlator we study is not as fundamental an indicator of strong symmetry breaking as an analogous R\'enyi-$1$ quantity like the fidelity correlator in Ref.~\cite{Lessa_2024}. However, not only is the R\'enyi-$2$ correlator a numerically-accessible proxy that can be intuitively understood in terms of the doubled state---in this setting, we can actually argue that it coincides with the strong-to-weak symmetry breaking transition, as discussed in Sec.~\ref{sec:phases:connections}.

Next, we consider the information-theoretic observables. The system is a quantum memory, meaning that it may be restored to its initial state with perfect fidelity, if and only if the coherent information is maximal \cite{Schumacher_1996, Schumacher_2001}. For the repetition code, which encodes one logical qubit, this maximal value is $1$. Consequently, $\overline{I_c} = 1$ in Phase~1 and $\overline{I_c} = 0$ elsewhere. In practice, the coherent information is not tractable in our MPS simulations, so we evaluate the average R\'enyi-$2$ coherent information $\overline{I_c^{(2)}}$ instead. This is defined using the R\'enyi-$n$ entanglement entropy,
\[
S^{(n)}(\rho) = \frac{1}{1-n}\log\tr\rho^n.
\]
Data at $q=0.1$ are given in Fig.~\hyperref[fig:intrinsic]{\ref*{fig:intrinsic}(c)}. We find a scaling collapse with $\nu_{12} \approx 1.2$, but do not expect this to be a probe of the underlying quantum memory phase transition. 

The observer learns logical information with perfect fidelity if and only if the average reference entropy $\overline{S_R}$ is zero, thereby disentangling the ancilla, as shown in App.~\ref{app:info}. Consequently, $\overline{S_R} = 0$ in Phase~3 and $\overline{S_R} = 1$ elsewhere. The value of $\overline{S_R} = 1$ in the trivial Phase~2 can be understood as a ``thermal'' entropy, while in Phase~1 the ancilla remains entangled with the system qubits, so one also has $\overline{S_R} = 1$. Data at $q=0.1$ are given in Fig.~\hyperref[fig:intrinsic]{\ref*{fig:intrinsic}(d)}. We find a scaling collapse with $\nu_{12} \approx 1.5$, consistent with RBIM criticality along the Nishimori line.

Our ability to efficiently compute the R\'enyi-$1$ reference entropy underscores a benefit of the Kramers-Wannier duality in this model. The R\'enyi-$2$ coherent information does not reflect the true criticality between Phase~1 and Phase~2, but we can learn about this transition by studying the criticality between Phase~2 and Phase~3.

\subsection{Connecting our observables}\label{sec:phases:connections}
In this section, we argue that the symmetry-breaking and information-theoretic phase transitions in our model coincide. First, we show how all observables of interest may be written as properties of the same partition function, obtained in Sec.~\ref{sec:path_integral:derivation}. Second, we discuss why these observables must witness the same bulk phase transitions, as a result of our dynamics being $(1+1)$d.

\begin{figure}
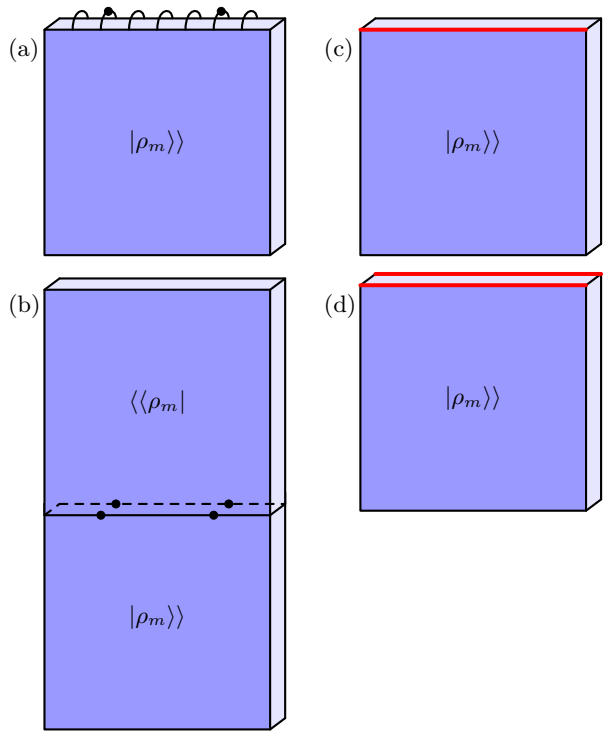

    \centering
\begin{tikzpicture}
  \def\xsep{4.2}
  \def\ysep{3.4}

  % (a) top-left
  \begin{scope}[shift={(0,0)}]
    \input{obs_zz}
    \node[anchor=north west] at (-0.6,3) {(a)};
  \end{scope}

  % (b) taller figure, shifted further down
  \begin{scope}[shift={(0,-1.9*\ysep)}]
    \input{obs_zzzz}
  \end{scope}
  % label for (b) at same y as (a), but x aligned with left column
  \node[anchor=north west] at (-0.6,3-\ysep) {(b)};

  % (c) top-right
  \begin{scope}[shift={(\xsep,0)}]
    \input{obs_F1}
    \node[anchor=north west] at (-0.6,3) {(c)};
  \end{scope}

  % (d) below (c)
  \begin{scope}[shift={(\xsep,-\ysep)}]
    \input{obs_F2}
    \node[anchor=north west] at (-0.6,3) {(d)};
  \end{scope}
\end{tikzpicture}

    \caption{Illustrations of how observables probe the partition function. (a) $\expval{Z_iZ_j}$ is a spin-spin correlator in one set of spins with trace boundary conditions, and identifies weak symmetry breaking. (b) $\langle\langle Z_iZ_jZ_i'Z_j'\rangle\rangle$ is a spin-spin correlator in both sets of spins at the interface between two copies of the partition function, and identifies strong symmetry breaking.  (c) The free energy cost of a single defect insertion changes scaling when the strong symmetry is broken, affecting $S_R$. Shorter defects arise in disorder parameters probing strong symmetry breaking. (d) The free energy cost of a double defect insertion changes scaling when the weak symmetry is broken, affecting $I_c$.}
    \label{fig:observables}
\end{figure}

The symmetry-breaking correlation functions are relatively straightforward probes of the partition function. The Edwards-Anderson susceptibility $\kappa_{EA}(\m)$ is built from squares of $\expval{Z_iZ_j}$ correlations, each of which may be written
\[
\expval{Z_iZ_j} &= \frac{\mell{\one}{Z_iZ_j}{\rho_{\m}}}{\braakett{\one}{\rho_{\m}}}\\
&= \expval{s_{T,i}s_{T,j}}
\]
in $\calZ_{\m}^{\cap}(s_0,s_0') = \sum_{s_T}\calZ_{\m}(s_T,s_T,s_0,s_0')$. In other words, this is a boundary correlator in the partition function with the ``trace'' temporal boundary condition $s_{T,i} = s_{T,i}'$, as illustrated in Fig.~\hyperref[fig:observables]{\ref*{fig:observables}(a)}.

The R\'enyi-$2$ susceptibility $\kappa_{2}(\m)$ is built from R\'enyi-$2$ correlation functions, each of which may be written
\[
\langle\langle Z_i Z_j Z_i' Z_j' \rangle\rangle &= \frac{\mell{\rho_{\m}}{Z_iZ_jZ_i'Z_j'}{\rho_{\m}}}{\braakett{\rho_{\m}}{\rho_{\m}}}\\
&= \expval{s_{T,i}s_{T,j}s_{T,i}'s_{T,j}'}
\]
in $\calZ_{\m}^{(2)}(s_0,s_0') = \sum_{s_T,s_T'}\abs{\calZ_{\m}(s_0,s_0',s_T,s_T')}^2$. In other words, this is a bulk correlator in a $L \times 2T$ version of the partition function with couplings doubled at the interface, as illustrated in Fig.~\hyperref[fig:observables]{\ref*{fig:observables}(b)}.

Our information-theoretic observables may be written in terms of defect free energies, where the defects are flipped temporal couplings for all sites $i$ at some time step $(t,t+1)$. As discussed in Sec.~\ref{sec:path_integral:code_space}, there are two types of defect insertions in this partition function. A defect may be inserted in only one set of spins, with corresponding free-energy cost $\Delta F_1$, or a defect may be inserted in both sets of spins simultaneously with corresponding free-energy cost $\Delta F_2$. The two types of defect insertions are illustrated in Fig.~\hyperref[fig:observables]{\ref*{fig:observables}(c)} and Fig.~\hyperref[fig:observables]{\ref*{fig:observables}(d)}, respectively.

\renewcommand{\arraystretch}{1.25}  % Increase row height
\setlength{\tabcolsep}{6pt}  % Increase column padding
\begin{table}[]
    \centering
    \begin{tabular}{c|c|c|c|c}
        & \( \Re(\Delta F_1) \) & \( \Delta F_2 \) & \( I_c \) & \( S_R \) \\\hline
    Phase $1$ & \( \propto L \) & \( \propto L \) & \( \log 2 \) & \( \log 2 \) \\\hline
    Phase $2$ & \( \propto L \) & \( 0 \) & \( 0 \) & \( \log 2 \) \\\hline
    Phase $3$ & \( 0 \) & \( 0 \) & \( 0 \) & \( 0 \) \\
    \end{tabular}
    \caption{There are three possible scaling combinations for the two types of defect free energies in the partition function, and each corresponds to a phase. In App.~\ref{app:defect_insertion}, we evaluate $I_c$ and $S_R$ in the thermodynamic limit in each case, obtaining the results in the table up to exponentially small corrections.}
    \label{tab:defect_insertions}
\end{table}

There are three possible combinations of scaling for these defect free energies, enumerated in Tab.~\ref{tab:defect_insertions}. In Phase~1, when both the strong and weak symmetries are broken, the $s$ and $s'$ spins order separately. In this case, both defect insertions have extensive cost. In Phase~2, when the strong symmetry is broken to a weak symmetry, the product of spins $ss'$ orders. In this case, a single defect insertion has extensive cost but a double defect insertion is free. In Phase~3, when no symmetries are broken, both defect insertions are free.

To write our information-theoretic observables in terms of defect free energies, it is convenient to consider the state that results from applying a layer of perfect Pauli-ZZ measurements. In App.~\ref{app:defect_insertion}, we write the observables explicitly in terms of $\Delta F_1$ and $\Delta F_2$. Furthermore, we evaluate the observables in each phase in the thermodynamic limit, obtaining the results given in Tab.~\ref{tab:defect_insertions} up to exponentially small corrections.

It is also interesting to consider the disorder parameters that are the Kramers-Wannier duals of the correlation functions already discussed. In particular, we may define the Edwards-Anderson disorder susceptibility
\[\label{eq:disorder}
D_{EA}(\m) = \frac{1}{L}\sum_{i,j} \expval{\prod_{k=i}^{j-1}X_k}
\]
and the R\'enyi-$2$ disorder susceptibility
\[\label{eq:disorder-2}
D_{2}(\m) = \frac{1}{L}\sum_{i,j} \left\langle\left\langle\prod_{k=i}^{j-1}X_kX_k'\right\rangle\right\rangle.
\]
Each correlator in the Edwards-Anderson disorder susceptibility translates to a defect insertion from site $i$ to $j-1$ in one set of spins, at the boundary, with trace boundary conditions. This can be visualized as a defect like in Fig.~\hyperref[fig:observables]{\ref*{fig:observables}(c)}, but extending over only part of the boundary and with boundary conditions from Fig.~\hyperref[fig:observables]{\ref*{fig:observables}(a)}. Likewise, each correlator in the R\'enyi-$2$ disorder susceptibility translates to a defect insertion from site $i$ to $j-1$ in both sets of spins, at the interface between two bulk partition functions. This can be visualized as a defect like in Fig.~\hyperref[fig:observables]{\ref*{fig:observables}(d)}, but extending over only part of the boundary and at the interface between two bulk partition functions like in Fig.~\hyperref[fig:observables]{\ref*{fig:observables}(b)}.

Having related each observable of interest to our partition function, we must now argue that they witness the same phase transitions. The partition function hosts two bulk phase transitions: when $ss'$ orders and when $s$ and $s'$ order individually. The first transition corresponds to long-range $\expval{s_{T,i}s_{T,j}s_{T,i}'s_{T,j}'}$ correlations and extensive single-defect free energies. The second transition corresponds to long-range $\expval{s_{T,i}s_{T,j}}$ correlations and extensive double-defect free energies. Since the correlation functions and information-theoretic observables diagnose the same symmetry-breaking patterns for each disorder configuration $\m$, all are order one, and all are averaged in the same way (according to the Born rule), it follows that the disorder-averaged observables witness the same phase transitions.

Nevertheless, one might worry that variations in boundary conditions (e.g. Fig.~\hyperref[fig:observables]{\ref*{fig:observables}(a)} vs. Fig.~\hyperref[fig:observables]{\ref*{fig:observables}(d)}) or whether the observable is at the boundary or in the bulk (e.g. Fig.~\hyperref[fig:observables]{\ref*{fig:observables}(a)} vs. Fig.~\hyperref[fig:observables]{\ref*{fig:observables}(b)}) could cause the various observables to see different phase transitions. In particular, in $(d+1)$d dynamics with $d > 1$, different couplings on a $d$-dimensional surface could induce a surface phase transition distinct from the bulk transition. For example, the R\'enyi-$2$ correlator examines spin-spin correlations at an interface with couplings that are twice as strong as the bulk, which allows it to disagree with R\'enyi-$n$ correlations for other $n$, and from the fidelity correlator. 

Fortunately, the fact that our dynamics is $(1+1)$d ensures that no such problems should arise. A $1$d surface cannot sustain an independent transition. Consequently, correlation functions at a $1$d surface that has been modified relative to the bulk---by symmetry-preserving boundary conditions or by increasing couplings at an interface---witness the same bulk transition as any other correlation functions due to the proximity. As a corollary, we expect that any R\'enyi-$n$ correlators for finite $n$ witness the same transition in this special case, unlike in higher-dimensional dynamics. 

We note, however, that this argument is insufficient when our $(1+1)$d system is critical. Modified boundaries can modify criticality, so we do not necessarily expect the universal data for all of our observables to agree at critical points.

\subsection{The self-dual critical point}\label{sec:phases:self-dual}
The self-dual critical point at $q = 0$ and $\lambda = 1/2$ is of particular interest. In this section, we show numerical evidence of the duality along $q=0$ and argue that the transition is not in the RBIM universality class. We remark that when $q=0$ the state remains pure at all times and that the two sets of spins in our partition function decouple.

To observe the duality numerically, we study $\overline{\kappa_{EA}}$ and its Kramers-Wannier dual $\overline{D_{EA}}$. Since the observables are dual, the value of $\overline{\kappa_{EA}}$ with circuit parameter $\lambda$ should match $\overline{D_{EA}}$ with circuit parameter $1-\lambda$. It is important to note that the duality exchanges Pauli-X circuit layers with Pauli-ZZ circuit layers; although this is unimportant in the bulk of the circuit, it impacts correlations at the boundary. Consequently, for both the order and the disorder susceptibility, we study the average after Pauli-X evolution and Pauli-ZZ evolution in the final circuit layer. These data are given in Fig.~\ref{fig:self-dual-susceptibilities}.

To argue that this critical point is not in the RBIM universality class, we study $\overline{S_R}$ near the critical point. We find a scaling collapse consistent with the critical exponent $\nu = 1.72$ obtained in \cite{wang2025decoherence}, which recently studied a model with an equivalent self-dual point. Our data are given in Fig.~\ref{fig:self-dual-scaling-collapse}, and are inconsistent with the standard RBIM \cite{PhysRevB.111.094201}.

\begin{figure}
    \centering
    \includegraphics[width=\linewidth]{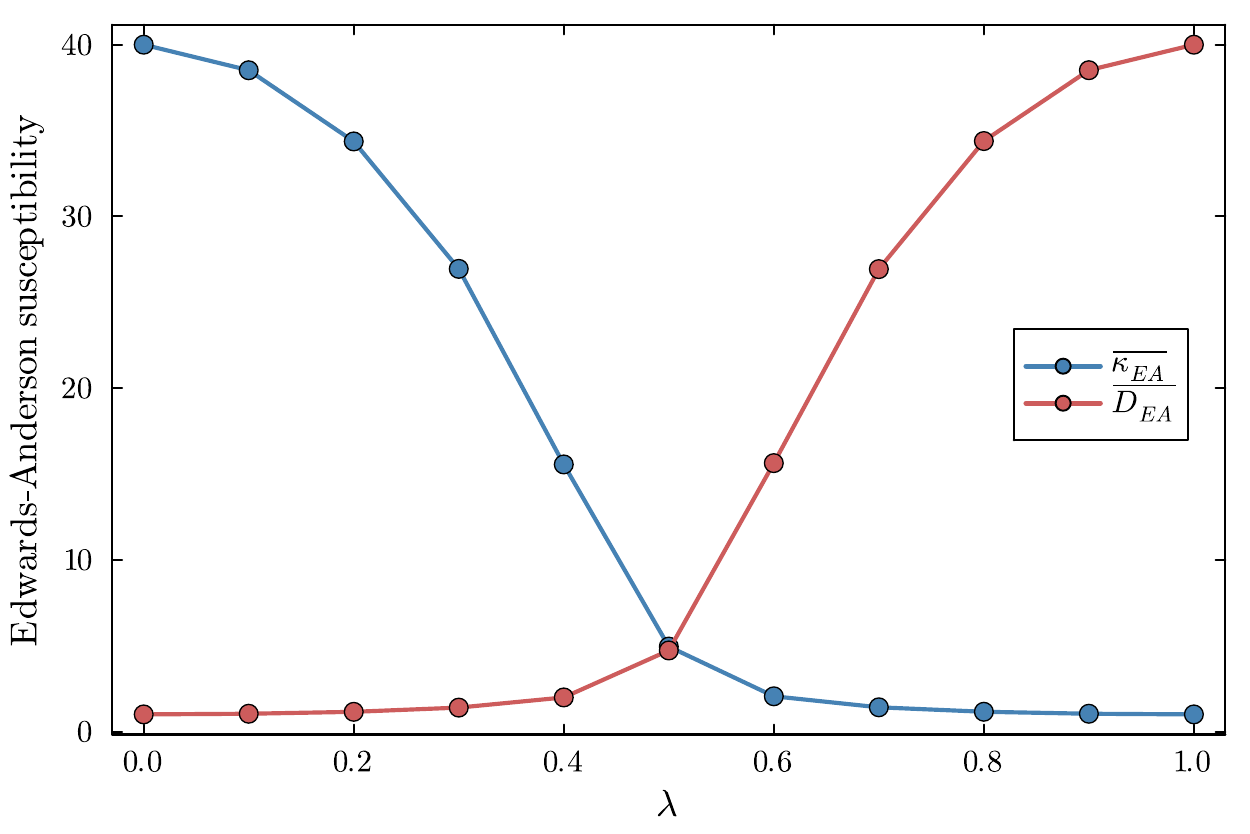}
    \caption{Edwards-Anderson order and disorder susceptibilities at $q=0.0$ with $L = 40$, demonstrating the Kramers-Wannier duality under which $\lambda \to 1-\lambda$ and $\overline{\kappa_{EA}} \leftrightarrow \overline{D_{EA}}$. Since the precise relationship between these two observables is somewhat sensitive at small system sizes, the data here differ from elsewhere in two ways: the tensor networks were simulated with periodic boundary conditions and the average of the observables after Pauli-X evolution and Pauli-ZZ evolution is taken in the steady state.}
    \label{fig:self-dual-susceptibilities}
\end{figure}

\begin{figure}
    \centering
    \includegraphics[width=\linewidth]{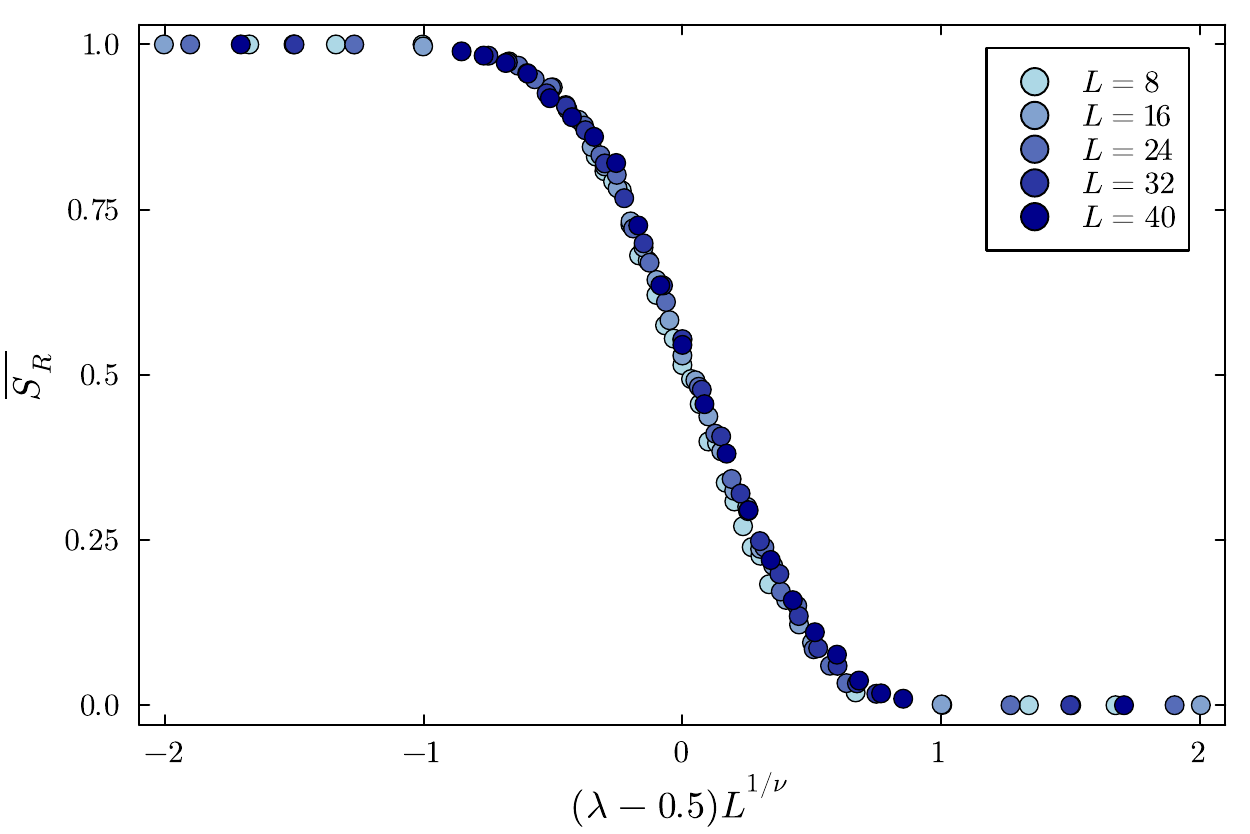}
    \caption{Finite-size scaling collapse of $\overline{S_R}$ at $q=0$ with $\nu = 1.72$, consistent with the results in \cite{wang2025decoherence}. We note that for pure states, $S_R = I_c$.}
    \label{fig:self-dual-scaling-collapse}
\end{figure}

\subsection{The effect of unitary evolution}\label{sec:phases:unitaries}

\begin{figure}
    \centering
    \begin{tikzpicture}[scale=7]
    \input{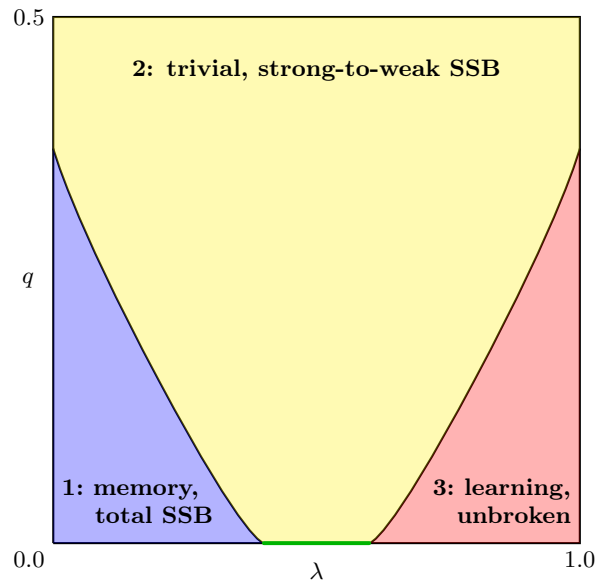}
    \end{tikzpicture}
    \caption{Schematic phase diagram when unitary evolution is also present, with $\theta_x = \theta_{zz} > 0$. We find that the three main phases are unchanged by the addition of unitary evolution, but that the direct transition between Phase~1 and Phase~3 is no longer present. Notably, the region between these two phases at $q=0$ (indicated in green) is consistent with a critical phase.}
    \label{fig:unitary-phase-diagram}
\end{figure}

\begin{figure}
    \centering
    \begin{overpic}[width=\linewidth]{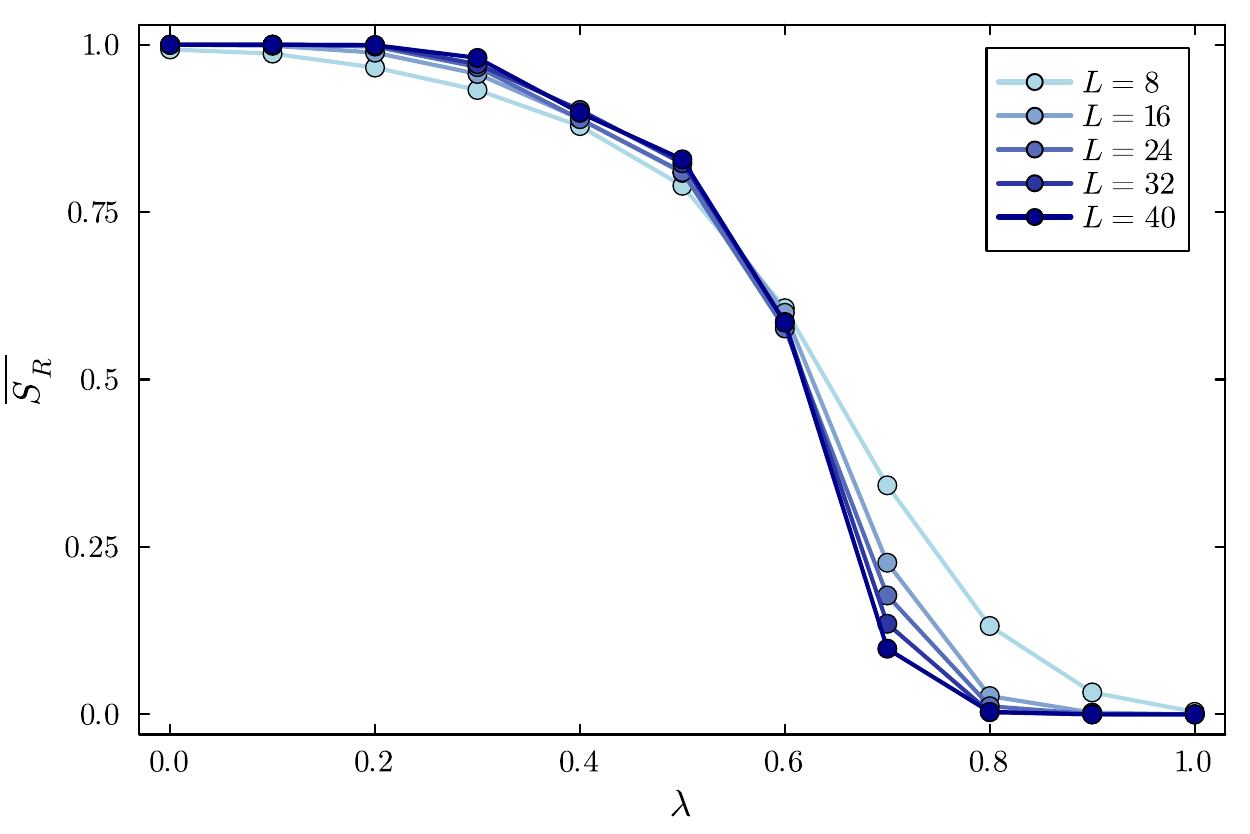}
    \put (0,63) {(a)}
    \end{overpic}
    
    \begin{overpic}[width=\linewidth]{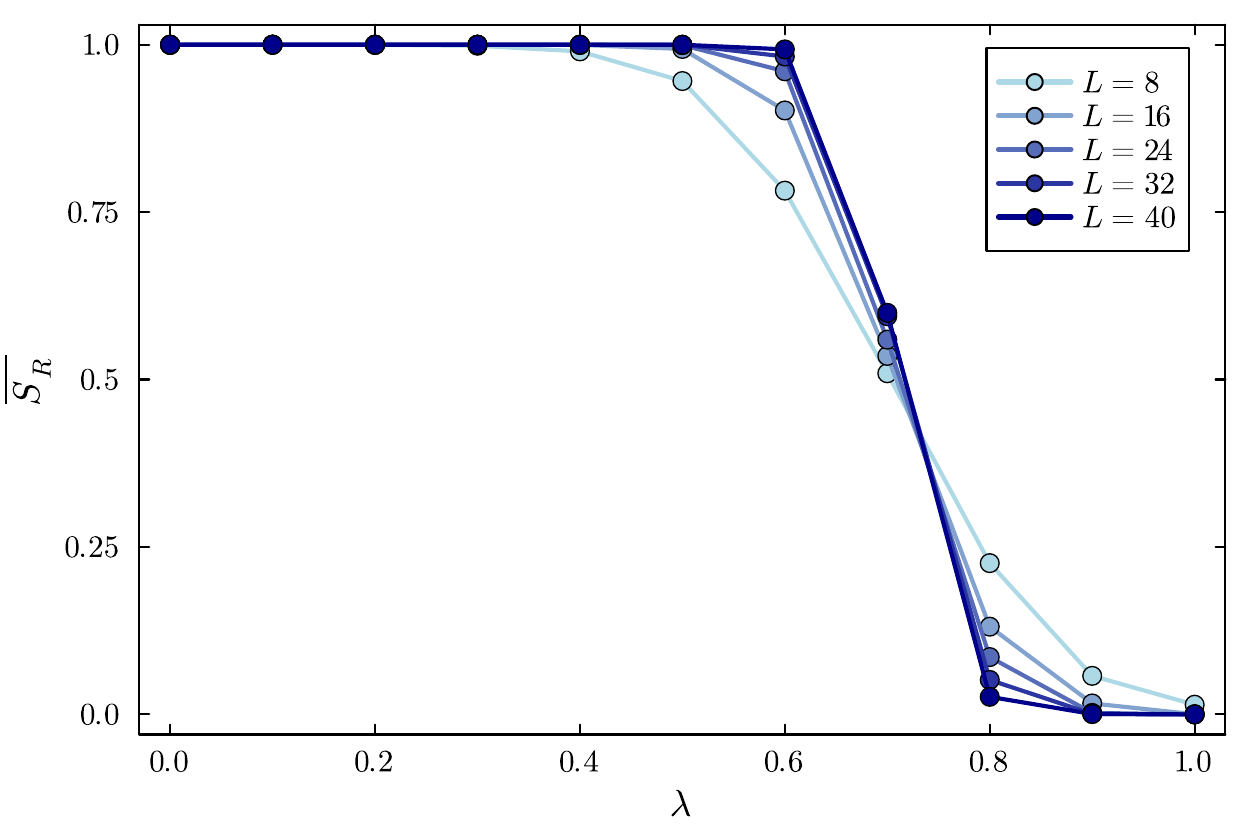}
    \put (0,63) {(b)}
    \end{overpic}%
    \caption{(a) $\overline{S_R}$ at $q=0$. Near $\lambda = 0.5$, the data is consistent with scale invariance over a critical region. (b) $\overline{S_R}$ at $q=0.05$. Once dephasing noise is reintroduced, there is no longer evidence of a critical region and the critical point is shifted as expected.}
    \label{fig:coherent_incoherent_SR}
\end{figure}

Even without unitary evolution in our dynamics, we already observe all three phases we expect in generic $\bbZ_2$-symmetric open quantum dynamics. For completeness, however, we also study dynamics with the addition of Pauli-X and Pauli-ZZ coherent rotations. In this section, we present the resulting modified phase diagram along with numerical data. As discussed in Sec.~\ref{sec:setup:dynamics}, we intersperse layers of rotations with angles $\theta_x$ and $\theta_{zz}$ for Pauli-X and Pauli-ZZ, respectively.

In Fig.~\ref{fig:unitary-phase-diagram}, a schematic phase diagram is given when $\theta_x = \theta_{zz} > 0$. When $q = 0$, there is evidence of a critical phase between Phase~1 and Phase~3. This is not surprising: coherent errors should reduce the memory and learning phases, but the phase diagram retains its Kramers-Wannier symmetry and the free fermion dynamics (at $q=0$) cannot sustain a volume law entangled phase. When $q > 0$, it appears that the critical line immediately turns into Phase~2, and our three generic phases are restored.

Illustrative data are given in Fig.~\ref{fig:coherent_incoherent_SR}, with $\theta_x = \theta_{zz} = 0.2$. The case where $q=0$ and the state is pure is given in Fig.~\hyperref[fig:coherent_incoherent_SR]{\ref*{fig:coherent_incoherent_SR}(a)}. We observe that in the neighborhood of $\lambda = 0.5$, the value of $\overline{S_R}$ seems to be scale invariant, suggesting the existence of a critical region. Nevertheless, this behavior is not generic in $\bbZ_2$-symmetric open quantum systems since it seems to immediately vanish when incoherent noise is introduced, as seen in Fig.~\hyperref[fig:coherent_incoherent_SR]{\ref*{fig:coherent_incoherent_SR}(b)}.

\section{Discussion}\label{sec:discussion}
In this work, we studied a model of $1$d $\bbZ_2$-symmetric quantum dynamics with measurements and decoherence that serves as a minimal model for generic steady-state behavior of open quantum dynamics. We found that the steady-state phases may be characterized by symmetry-breaking observables and information-theoretic observables. Furthermore, we found that all of these observables probe bulk phase transitions in the same disordered $(1+1)$d path integral of the doubled state. Although the full disordered model is not solvable, we discussed related models where a sensible time-continuum limit may be taken to yield quantum $1$d Hamiltonians with well-understood ground-state physics.

In particular, these observables characterize three phases: (1) a phase where both the strong and weak symmetries are spontaneously broken, which serves as quantum memory; (2) a phase where the strong symmetry is spontaneously broken to a weak symmetry, in which information is lost to the environment; and (3) a phase where neither symmetry is broken and the observer learns logical information. We also drew particular attention to certain points of interest in the phase diagram. There are two classical limits, when all of the measurements are Pauli-ZZ or when all of the measurements are Pauli-X. In these cases, the bra and ket degrees of freedom of the doubled state are perfectly locked together and the physics reduces to a RBIM on the Nishimori line for the combined degrees of freedom. When no noise is present, there is a special self-dual disordered critical point when $\lambda_{x} = \lambda_{zz}$. 

It is not surprising that we can describe the steady-state phases of our model in terms of both symmetry breaking and information theory---there are natural connections between the two perspectives when the dynamics have a symmetry $U$ that is also a logical operator of a quantum code. Then if $\rho_{\m}$ is weakly symmetric, such that $U\rho_{\m} U^\dag = \rho_{\m}$, it is unchanged by a logical error. Thus, it cannot contain logical information associated with any logical operator that does not commute with $U$. If $\rho_{\m}$ is strongly symmetric, such that $U\rho_{\m} = e^{i\theta} \rho_{\m}$ then $U$ has sharpened along the measurement trajectory.

This approach is most useful if we take our initial state to explicitly break both symmetries. In the case of our dynamics, this could be the state $\ket{0\dots0}$, which stores logical Pauli-Z information and is fuzzy with respect to logical Pauli-X. In the unbroken phase the strong symmetry is restored in the steady state along typical trajectories, meaning that the logical Pauli-X charge is sharpened and learned by the observer \cite{PhysRevLett.129.120604,PhysRevX.12.041002,PhysRevLett.129.200602,PRXQuantum.5.020304,PhysRevX.14.041012,singh2025mixedstatelearnabilitytransitionsmonitored,putz2025learningtransitionsclassicalising}. In the strong-to-weak symmetry-breaking phase, only the weak symmetry is restored in the steady state, so logical Pauli-Z information is lost but logical Pauli-X is not sharpened. In the fully broken phase, no symmetries are restored in the steady state, and our asymmetric initial state can persist at late times. This suggests that strong and weak symmetry breaking may relate to the flow of logical information in a broader set of monitored, decohered quantum codes.

This work does not describe a protocol for efficiently observing the steady-state phases. In practice, there are two problems that make it difficult to observe phase transitions driven by measurements: the transitions may not be robust against decohering noise, and the obvious observables may suffer from a postselection problem. This model solves one of these issues, providing noise-robust transitions, albeit ones that rely on the unphysical requirement of $\bbZ_2$-symmetric noise. Observing similar transitions in practice would likely require a model with robust phases and postselection-free observables in the presence of generic noise. In a forthcoming work, we describe such a model, where steady-state phase transitions in $(2+1)$d dynamics are robust against generic noise and may be efficiently observed despite the postselection problem. 

There are several other related topics that may warrant further study. The fact that our dynamics is in $(1+1)$d provided certain simplifications, because it excluded the possibility of boundary transitions occurring independent of bulk transitions. This is especially important when diagnosing the transition between Phase~$2$ and Phase~$3$, where one should generally be careful about detecting the correct strong-to-weak symmetry breaking transition. Therefore, it would be interesting to better understand how open quantum dynamics phase transitions should be diagnosed in higher dimensions, and particularly how these transitions relate to literature regarding strong-to-weak breaking in non-dynamical settings. In such cases, perhaps viewing these transitions as learning transitions diagnosed by information-theoretic observables provides advantages.

\section{Acknowledgments}
We thank Utkarsh Agrawal, Yimu Bao, Rushikesh Patil, Shengqi Sang, Yaodong Li, and Stephen Yan for helpful discussions.

This material is based upon work supported by the National Science Foundation Graduate Research Fellowship Program under Grant No. 2139319 (J.H.) and by the Simons Collaboration on Ultra-Quantum Matter, which is a grant from the Simons Foundation (651457, M.P.A.F. and J.H.). M.P.A.F. and S.V. are also supported by a Quantum Interactive Dynamics grant from the William M. Keck Foundation. This research was also supported in
part by the National Science Foundation under Grant No.
NSF PHY-1748958 and NSF PHY-2309135, the Heising-
Simons Foundation, and the Simons Foundation (216179,
LB). 

This research was done using services provided by the OSG Consortium \cite{osg07,osg09,https://doi.org/10.21231/0kvz-ve57,https://doi.org/10.21231/906p-4d78}, which is supported by the National Science Foundation awards \#2030508 and \#2323298.

\bibliography{refs}

@article{fortuin1972random,
  title={On the random-cluster model: I. Introduction and relation to other models},
  author={Fortuin, Cornelius Marius and Kasteleyn, Piet W},
  journal={Physica},
  volume={57},
  number={4},
  pages={536--564},
  year={1972},
  publisher={Elsevier}
}

@article{Li_2018,
  title = {Quantum {Zeno} effect and the many-body entanglement transition},
  author = {Li, Yaodong and Chen, Xiao and Fisher, Matthew P. A.},
  journal = {Phys. Rev. B},
  volume = {98},
  issue = {20},
  pages = {205136},
  numpages = {9},
  year = {2018},
  month = {Nov},
  publisher = {American Physical Society},
  doi = {10.1103/PhysRevB.98.205136},
  url = {https://link.aps.org/doi/10.1103/PhysRevB.98.205136}
}

@article{Li_2019,
  title = {Measurement-driven entanglement transition in hybrid quantum circuits},
  author = {Li, Yaodong and Chen, Xiao and Fisher, Matthew P. A.},
  journal = {Phys. Rev. B},
  volume = {100},
  issue = {13},
  pages = {134306},
  numpages = {26},
  year = {2019},
  month = {Oct},
  publisher = {American Physical Society},
  doi = {10.1103/PhysRevB.100.134306},
  url = {https://link.aps.org/doi/10.1103/PhysRevB.100.134306}
}

@article{Skinner_2019,
  title = {Measurement-Induced Phase Transitions in the Dynamics of Entanglement},
  author = {Skinner, Brian and Ruhman, Jonathan and Nahum, Adam},
  journal = {Phys. Rev. X},
  volume = {9},
  issue = {3},
  pages = {031009},
  numpages = {21},
  year = {2019},
  month = {Jul},
  publisher = {American Physical Society},
  doi = {10.1103/PhysRevX.9.031009},
  url = {https://link.aps.org/doi/10.1103/PhysRevX.9.031009}
}

@article{Sang_2021,
  title = {Measurement-protected quantum phases},
  author = {Sang, Shengqi and Hsieh, Timothy H.},
  journal = {Phys. Rev. Res.},
  volume = {3},
  issue = {2},
  pages = {023200},
  numpages = {9},
  year = {2021},
  month = {Jun},
  publisher = {American Physical Society},
  doi = {10.1103/PhysRevResearch.3.023200},
  url = {https://link.aps.org/doi/10.1103/PhysRevResearch.3.023200}
}

@article{Gullans_2020,
  title = {Scalable Probes of Measurement-Induced Criticality},
  author = {Gullans, Michael J. and Huse, David A.},
  journal = {Phys. Rev. Lett.},
  volume = {125},
  issue = {7},
  pages = {070606},
  numpages = {6},
  year = {2020},
  month = {Aug},
  publisher = {American Physical Society},
  doi = {10.1103/PhysRevLett.125.070606},
  url = {https://link.aps.org/doi/10.1103/PhysRevLett.125.070606}
}

@article{Li_2023b,
  title = {Decodable hybrid dynamics of open quantum systems with $\mathbb{Z}_{2}$ symmetry},
  author = {Li, Yaodong and Fisher, Matthew P. A.},
  journal = {Phys. Rev. B},
  volume = {108},
  issue = {21},
  pages = {214302},
  numpages = {28},
  year = {2023},
  month = {Dec},
  publisher = {American Physical Society},
  doi = {10.1103/PhysRevB.108.214302},
  url = {https://link.aps.org/doi/10.1103/PhysRevB.108.214302}
}

@article{Bao_2021,
title = {Symmetry enriched phases of quantum circuits},
journal = {Annals of Physics},
volume = {435},
pages = {168618},
year = {2021},
note = {Special issue on Philip W. Anderson},
issn = {0003-4916},
doi = {https://doi.org/10.1016/j.aop.2021.168618},
url = {https://www.sciencedirect.com/science/article/pii/S0003491621002244},
author = {Yimu Bao and Soonwon Choi and Ehud Altman},
keywords = {Quantum circuit, Random unitary circuit, Entanglement entropy, Phase transition, Symmetry protected topological state, Fermion circuit}
}

@article{Bao_2020,
  title = {Theory of the phase transition in random unitary circuits with measurements},
  author = {Bao, Yimu and Choi, Soonwon and Altman, Ehud},
  journal = {Phys. Rev. B},
  volume = {101},
  issue = {10},
  pages = {104301},
  numpages = {26},
  year = {2020},
  month = {Mar},
  publisher = {American Physical Society},
  doi = {10.1103/PhysRevB.101.104301},
  url = {https://link.aps.org/doi/10.1103/PhysRevB.101.104301}
}

@article{Choi_2020,
  title = {Quantum Error Correction in Scrambling Dynamics and Measurement-Induced Phase Transition},
  author = {Choi, Soonwon and Bao, Yimu and Qi, Xiao-Liang and Altman, Ehud},
  journal = {Phys. Rev. Lett.},
  volume = {125},
  issue = {3},
  pages = {030505},
  numpages = {6},
  year = {2020},
  month = {Jul},
  publisher = {American Physical Society},
  doi = {10.1103/PhysRevLett.125.030505},
  url = {https://link.aps.org/doi/10.1103/PhysRevLett.125.030505}
}

@article{Dennis_2002,
  title={Topological quantum memory},
  author={Dennis, Eric and Kitaev, Alexei and Landahl, Andrew and Preskill, John},
  journal={Journal of Mathematical Physics},
  volume={43},
  number={9},
  pages={4452--4505},
  year={2002},
  publisher={American Institute of Physics}
}

@article{Schumacher_1996,
    author = "Schumacher, Benjamin and Nielsen, M. A.",
    title = "{Quantum data processing and error correction}",
    eprint = "quant-ph/9604022",
    archivePrefix = "arXiv",
    doi = "10.1103/PhysRevA.54.2629",
    journal = "Phys. Rev. A",
    volume = "54",
    pages = "2629",
    year = "1996"
}

@article{Schumacher_2001,
	author = {Schumacher, Benjamin and Westmoreland, Michael D.},
	doi = {10.1023/A:1019653202562},
	id = {Schumacher2002},
	isbn = {1573-1332},
	journal = {Quantum Information Processing},
	number = {1},
	pages = {5--12},
	title = {Approximate Quantum Error Correction},
	url = {https://doi.org/10.1023/A:1019653202562},
	volume = {1},
	year = {2002},
	bdsk-url-1 = {https://doi.org/10.1023/A:1019653202562},
        archivePrefix = {arXiv},
        eprint = {quant-ph/0112106},
        primaryClass = {quant-ph}
}

@article{Lee_2023,
  title={Quantum criticality under decoherence or weak measurement},
  author={Lee, Jong Yeon and Jian, Chao-Ming and Xu, Cenke},
  journal={PRX quantum},
  volume={4},
  number={3},
  pages={030317},
  year={2023},
  publisher={APS}
}

@article{Ma_2023,
  title = {Topological Phases with Average Symmetries: The Decohered, the Disordered, and the Intrinsic},
  author = {Ma, Ruochen and Zhang, Jian-Hao and Bi, Zhen and Cheng, Meng and Wang, Chong},
  journal = {Phys. Rev. X},
  volume = {15},
  issue = {2},
  pages = {021062},
  numpages = {33},
  year = {2025},
  month = {May},
  publisher = {American Physical Society},
  doi = {10.1103/PhysRevX.15.021062},
  url = {https://link.aps.org/doi/10.1103/PhysRevX.15.021062}
}

@article{Lessa_2024,
  title={Strong-to-weak spontaneous symmetry breaking in mixed quantum states},
  author={Lessa, Leonardo A and Ma, Ruochen and Zhang, Jian-Hao and Bi, Zhen and Cheng, Meng and Wang, Chong},
  journal={arXiv preprint arXiv:2405.03639},
  year={2024}
}

@article{Sala_2024,
  title={Spontaneous strong symmetry breaking in open systems: Purification perspective},
  author={Sala, Pablo and Gopalakrishnan, Sarang and Oshikawa, Masaki and You, Yizhi},
  journal={arXiv preprint arXiv:2405.02402},
  year={2024}
}

@article{Huang_2024,
  title={Hydrodynamics as the effective field theory of strong-to-weak spontaneous symmetry breaking},
  author={Huang, Xiaoyang and Qi, Marvin and Zhang, Jian-Hao and Lucas, Andrew},
  journal={arXiv preprint arXiv:2407.08760},
  year={2024}
}

@article{Zhang_2024a,
  title={Strong-to-weak spontaneous breaking of 1-form symmetry and intrinsically mixed topological order},
  author={Zhang, Carolyn and Xu, Yichen and Zhang, Jian-Hao and Xu, Cenke and Bi, Zhen and Luo, Zhu-Xi},
  journal={arXiv preprint arXiv:2409.17530},
  year={2024}
}

@article{Zhang_2024b,
  title={Fluctuation-Dissipation Theorem and Information Geometry in Open Quantum Systems},
  author={Zhang, Jian-Hao and Xu, Cenke and Xu, Yichen},
  journal={arXiv preprint arXiv:2409.18944},
  year={2024}
}

@article{Albert_2014,
  title = {Symmetries and conserved quantities in Lindblad master equations},
  author = {Albert, Victor V. and Jiang, Liang},
  journal = {Phys. Rev. A},
  volume = {89},
  issue = {2},
  pages = {022118},
  numpages = {14},
  year = {2014},
  month = {Feb},
  publisher = {American Physical Society},
  doi = {10.1103/PhysRevA.89.022118},
  url = {https://link.aps.org/doi/10.1103/PhysRevA.89.022118}
}

@article{Lieu_2020,
  title = {Symmetry Breaking and Error Correction in Open Quantum Systems},
  author = {Lieu, Simon and Belyansky, Ron and Young, Jeremy T. and Lundgren, Rex and Albert, Victor V. and Gorshkov, Alexey V.},
  journal = {Phys. Rev. Lett.},
  volume = {125},
  issue = {24},
  pages = {240405},
  numpages = {7},
  year = {2020},
  month = {Dec},
  publisher = {American Physical Society},
  doi = {10.1103/PhysRevLett.125.240405},
  url = {https://link.aps.org/doi/10.1103/PhysRevLett.125.240405}
}

@article{Caldeira_1981,
  title = {Influence of Dissipation on Quantum Tunneling in Macroscopic Systems},
  author = {Caldeira, A. O. and Leggett, A. J.},
  journal = {Phys. Rev. Lett.},
  volume = {46},
  issue = {4},
  pages = {211--214},
  numpages = {0},
  year = {1981},
  month = {Jan},
  publisher = {American Physical Society},
  doi = {10.1103/PhysRevLett.46.211},
  url = {https://link.aps.org/doi/10.1103/PhysRevLett.46.211}
}

@article{Li_Luo_2023,
  title={Exact, Average, and Broken Symmetries in a Simple Adaptive Monitored Circuit},
  author={Li, Zhi and Luo, Zhu-Xi},
  journal={arXiv preprint arXiv:2312.17309},
  year={2023}
}

@article{itensor,
	title={{The ITensor Software Library for Tensor Network Calculations}},
	author={Matthew Fishman and Steven R. White and E. Miles Stoudenmire},
	journal={SciPost Phys. Codebases},
	pages={4},
	year={2022},
	publisher={SciPost},
	doi={10.21468/SciPostPhysCodeb.4},
	url={https://scipost.org/10.21468/SciPostPhysCodeb.4}
}

@article{itensor-r0.3,
	title={{Codebase release 0.3 for ITensor}},
	author={Matthew Fishman and Steven R. White and E. Miles Stoudenmire},
	journal={SciPost Phys. Codebases},
	pages={4-r0.3},
	year={2022},
	publisher={SciPost},
	doi={10.21468/SciPostPhysCodeb.4-r0.3},
	url={https://scipost.org/10.21468/SciPostPhysCodeb.4-r0.3}
}

@article{Ashkin_1943,
  title = {Statistics of Two-Dimensional Lattices with Four Components},
  author = {Ashkin, J. and Teller, E.},
  journal = {Phys. Rev.},
  volume = {64},
  issue = {5-6},
  pages = {178--184},
  numpages = {0},
  year = {1943},
  month = {Sep},
  publisher = {American Physical Society},
  doi = {10.1103/PhysRev.64.178},
  url = {https://link.aps.org/doi/10.1103/PhysRev.64.178}
}

@article{PhysRevB.111.094201,
  title = {Tensor network Monte Carlo simulations for the two-dimensional random-bond Ising model},
  author = {Chen, Tao and Guo, Erdong and Zhang, Wanzhou and Zhang, Pan and Deng, Youjin},
  journal = {Phys. Rev. B},
  volume = {111},
  issue = {9},
  pages = {094201},
  numpages = {14},
  year = {2025},
  month = {Mar},
  publisher = {American Physical Society},
  doi = {10.1103/PhysRevB.111.094201},
  url = {https://link.aps.org/doi/10.1103/PhysRevB.111.094201}
}

@article{wang2025decoherence,
  title={Decoherence-induced self-dual criticality in topological states of matter},
  author={Wang, Qingyuan and Vasseur, Romain and Trebst, Simon and Ludwig, Andreas WW and Zhu, Guo-Yi},
  journal={arXiv preprint arXiv:2502.14034},
  year={2025}
}

@article{lindblad1976generators,
  title={On the generators of quantum dynamical semigroups},
  author={Lindblad, Goran},
  journal={Communications in mathematical physics},
  volume={48},
  number={2},
  pages={119--130},
  year={1976},
  publisher={Springer}
}

@article{gorini1976completely,
  title={Completely positive dynamical semigroups of N-level systems},
  author={Gorini, Vittorio and Kossakowski, Andrzej and Sudarshan, Ennackal Chandy George},
  journal={Journal of Mathematical Physics},
  volume={17},
  number={5},
  pages={821--825},
  year={1976},
  publisher={American Institute of Physics}
}

@article{PhysRevA.52.R2493,
  title = {Scheme for reducing decoherence in quantum computer memory},
  author = {Shor, Peter W.},
  journal = {Phys. Rev. A},
  volume = {52},
  issue = {4},
  pages = {R2493--R2496},
  numpages = {0},
  year = {1995},
  month = {Oct},
  publisher = {American Physical Society},
  doi = {10.1103/PhysRevA.52.R2493},
  url = {https://link.aps.org/doi/10.1103/PhysRevA.52.R2493}
}

@inproceedings{osg07,
  title  = {The open science grid},
  author = {
    Pordes, Ruth
    and Petravick, Don
    and Kramer, Bill
    and Olson, Doug
    and Livny, Miron
    and Roy, Alain
    and Avery, Paul
    and Blackburn, Kent
    and Wenaus, Torre
    and W{\"u}rthwein, Frank
    and Foster, Ian
    and Gardner, Rob
    and Wilde, Mike
    and Blatecky, Alan
    and McGee, John
    and Quick, Rob
  },
  doi       = {10.1088/1742-6596/78/1/012057},
  booktitle = {J. Phys. Conf. Ser.},
  volume    = {78},
  series    = {78},
  pages     = {012057},
  year      = {2007},
}

@inproceedings{osg09,
  title        = {The pilot way to grid resources using glideinWMS},
  author       = {
    Sfiligoi, Igor
    and Bradley, Daniel C
    and Holzman, Burt
    and Mhashilkar, Parag
    and Padhi, Sanjay
    and Wurthwein, Frank
  },
  doi          = {10.1109/CSIE.2009.950},
  booktitle    = {2009 WRI World Congress on Computer Science and Information Engineering},
  volume       = {2},
  series       = {2},
  pages        = {428--432},
  year         = {2009},
}

@misc{https://doi.org/10.21231/906p-4d78,
  doi = {10.21231/906P-4D78},
  url = {https://osg-htc.org/services/open_science_pool.html},
  author = {{OSG}},
  title = {OSPool},
  publisher = {OSG},
  year = {2006}
}

@misc{https://doi.org/10.21231/0kvz-ve57,
  doi = {10.21231/0KVZ-VE57},
  url = {https://osdf.osg-htc.org/},
  author = {{OSG}},
  title = {Open Science Data Federation},
  publisher = {OSG},
  year = {2015}
}

@misc{zhao2025noncommutativeweakmeasurementsentanglement,
      title={Non-Commutative weak measurements: Entanglement, Symmetry Breaking, and the Role of Readout}, 
      author={Yuanchen Zhao and Li Rao and Dong E. Liu},
      year={2025},
      eprint={2508.15280},
      archivePrefix={arXiv},
      primaryClass={quant-ph},
      url={https://arxiv.org/abs/2508.15280}, 
}

@article{PhysRevLett.129.120604,
  title = {Field Theory of Charge Sharpening in Symmetric Monitored Quantum Circuits},
  author = {Barratt, Fergus and Agrawal, Utkarsh and Gopalakrishnan, Sarang and Huse, David A. and Vasseur, Romain and Potter, Andrew C.},
  journal = {Phys. Rev. Lett.},
  volume = {129},
  issue = {12},
  pages = {120604},
  numpages = {7},
  year = {2022},
  month = {Sep},
  publisher = {American Physical Society},
  doi = {10.1103/PhysRevLett.129.120604},
  url = {https://link.aps.org/doi/10.1103/PhysRevLett.129.120604}
}

@article{PhysRevX.12.041002,
  title = {Entanglement and Charge-Sharpening Transitions in U(1) Symmetric Monitored Quantum Circuits},
  author = {Agrawal, Utkarsh and Zabalo, Aidan and Chen, Kun and Wilson, Justin H. and Potter, Andrew C. and Pixley, J. H. and Gopalakrishnan, Sarang and Vasseur, Romain},
  journal = {Phys. Rev. X},
  volume = {12},
  issue = {4},
  pages = {041002},
  numpages = {29},
  year = {2022},
  month = {Oct},
  publisher = {American Physical Society},
  doi = {10.1103/PhysRevX.12.041002},
  url = {https://link.aps.org/doi/10.1103/PhysRevX.12.041002}
}

@article{PhysRevLett.129.200602,
  title = {Transitions in the Learnability of Global Charges from Local Measurements},
  author = {Barratt, Fergus and Agrawal, Utkarsh and Potter, Andrew C. and Gopalakrishnan, Sarang and Vasseur, Romain},
  journal = {Phys. Rev. Lett.},
  volume = {129},
  issue = {20},
  pages = {200602},
  numpages = {7},
  year = {2022},
  month = {Nov},
  publisher = {American Physical Society},
  doi = {10.1103/PhysRevLett.129.200602},
  url = {https://link.aps.org/doi/10.1103/PhysRevLett.129.200602}
}

@article{PRXQuantum.5.020304,
  title = {Learnability Transitions in Monitored Quantum Dynamics via Eavesdropper's Classical Shadows},
  author = {Ippoliti, Matteo and Khemani, Vedika},
  journal = {PRX Quantum},
  volume = {5},
  issue = {2},
  pages = {020304},
  numpages = {24},
  year = {2024},
  month = {Apr},
  publisher = {American Physical Society},
  doi = {10.1103/PRXQuantum.5.020304},
  url = {https://link.aps.org/doi/10.1103/PRXQuantum.5.020304}
}

@article{PhysRevX.14.041012,
  title = {Observing Quantum Measurement Collapse as a Learnability Phase Transition},
  author = {Agrawal, Utkarsh and Lopez-Piqueres, Javier and Vasseur, Romain and Gopalakrishnan, Sarang and Potter, Andrew C.},
  journal = {Phys. Rev. X},
  volume = {14},
  issue = {4},
  pages = {041012},
  numpages = {12},
  year = {2024},
  month = {Oct},
  publisher = {American Physical Society},
  doi = {10.1103/PhysRevX.14.041012},
  url = {https://link.aps.org/doi/10.1103/PhysRevX.14.041012}
}

@misc{singh2025mixedstatelearnabilitytransitionsmonitored,
      title={Mixed-state learnability transitions in monitored noisy quantum dynamics}, 
      author={Hansveer Singh and Romain Vasseur and Andrew C. Potter and Sarang Gopalakrishnan},
      year={2025},
      eprint={2503.10308},
      archivePrefix={arXiv},
      primaryClass={quant-ph},
      url={https://arxiv.org/abs/2503.10308}, 
}

@article{Chubb:2021htd,
  title={Statistical mechanical models for quantum codes with correlated noise},
  author={Chubb, Christopher T and Flammia, Steven T},
  journal={Annales de l’Institut Henri Poincar{\'e} D},
  volume={8},
  number={2},
  pages={269--321},
  year={2021}
}

@article{PhysRevB.110.085158,
  title = {Tapestry of dualities in decohered quantum error correction codes},
  author = {Su, Kaixiang and Yang, Zhou and Jian, Chao-Ming},
  journal = {Phys. Rev. B},
  volume = {110},
  issue = {8},
  pages = {085158},
  numpages = {30},
  year = {2024},
  month = {Aug},
  publisher = {American Physical Society},
  doi = {10.1103/PhysRevB.110.085158},
  url = {https://link.aps.org/doi/10.1103/PhysRevB.110.085158}
}

@article{PRXQuantum.6.010327,
  title = {Perturbative Stability and Error-Correction Thresholds of Quantum Codes},
  author = {Li, Yaodong and O'Dea, Nicholas and Khemani, Vedika},
  journal = {PRX Quantum},
  volume = {6},
  issue = {1},
  pages = {010327},
  numpages = {35},
  year = {2025},
  month = {Feb},
  publisher = {American Physical Society},
  doi = {10.1103/PRXQuantum.6.010327},
  url = {https://link.aps.org/doi/10.1103/PRXQuantum.6.010327}
}

@article{lavasani2025stability,
  title={Stability of gapped quantum matter and error-correction with adiabatic noise},
  author={Lavasani, Ali and Vijay, Sagar},
  journal={Physical Review Research},
  volume={7},
  number={2},
  pages={023166},
  year={2025},
  publisher={APS},
  url = {https://journals.aps.org/prresearch/abstract/10.1103/PhysRevResearch.7.023166}
}

@article{hlfh-86yz,
  title = {Exact Calculations of Coherent Information for Toric Codes under Decoherence: Identifying the Fundamental Error Threshold},
  author = {Lee, Jong Yeon},
  journal = {Phys. Rev. Lett.},
  volume = {134},
  issue = {25},
  pages = {250601},
  numpages = {7},
  year = {2025},
  month = {Jun},
  publisher = {American Physical Society},
  doi = {10.1103/hlfh-86yz},
  url = {https://link.aps.org/doi/10.1103/hlfh-86yz}
}

@article{Niwa_2025,
   title={Coherent information for Calderbank-Shor-Steane codes under decoherence},
   volume={111},
   ISSN={2469-9934},
   url={http://dx.doi.org/10.1103/PhysRevA.111.032402},
   DOI={10.1103/physreva.111.032402},
   number={3},
   journal={Physical Review A},
   publisher={American Physical Society (APS)},
   author={Niwa, Ryotaro and Lee, Jong Yeon},
   year={2025},
   month=mar }

@misc{lyons2024understandingstabilizercodeslocal,
      title={Understanding Stabilizer Codes Under Local Decoherence Through a General Statistical Mechanics Mapping}, 
      author={Anasuya Lyons},
      year={2024},
      eprint={2403.03955},
      archivePrefix={arXiv},
      primaryClass={quant-ph},
      url={https://arxiv.org/abs/2403.03955}, 
}

@misc{hauser2024informationdynamicsdecoheredquantum,
      title={Information dynamics in decohered quantum memory with repeated syndrome measurements: a dual approach}, 
      author={Jacob Hauser and Yimu Bao and Shengqi Sang and Ali Lavasani and Utkarsh Agrawal and Matthew P. A. Fisher},
      year={2024},
      eprint={2407.07882},
      archivePrefix={arXiv},
      primaryClass={quant-ph},
      url={https://arxiv.org/abs/2407.07882}, 
}

@article{PhysRev.60.252,
  title = {Statistics of the Two-Dimensional Ferromagnet. Part I},
  author = {Kramers, H. A. and Wannier, G. H.},
  journal = {Phys. Rev.},
  volume = {60},
  issue = {3},
  pages = {252--262},
  numpages = {0},
  year = {1941},
  month = {Aug},
  publisher = {American Physical Society},
  doi = {10.1103/PhysRev.60.252},
  url = {https://link.aps.org/doi/10.1103/PhysRev.60.252}
}

@misc{lee2022decodingmeasurementpreparedquantumphases,
      title={Decoding Measurement-Prepared Quantum Phases and Transitions: from Ising model to gauge theory, and beyond}, 
      author={Jong Yeon Lee and Wenjie Ji and Zhen Bi and Matthew P. A. Fisher},
      year={2022},
      eprint={2208.11699},
      archivePrefix={arXiv},
      primaryClass={cond-mat.str-el},
      url={https://arxiv.org/abs/2208.11699}, 
}

@article{PhysRevLett.134.150405,
  title = {Efficient Detection of Strong-to-Weak Spontaneous Symmetry Breaking via the R\'enyi-1 Correlator},
  author = {Weinstein, Zack},
  journal = {Phys. Rev. Lett.},
  volume = {134},
  issue = {15},
  pages = {150405},
  numpages = {7},
  year = {2025},
  month = {Apr},
  publisher = {American Physical Society},
  doi = {10.1103/PhysRevLett.134.150405},
  url = {https://link.aps.org/doi/10.1103/PhysRevLett.134.150405}
}

@misc{liu2024diagnosingstrongtoweaksymmetrybreaking,
      title={Diagnosing Strong-to-Weak Symmetry Breaking via Wightman Correlators}, 
      author={Zeyu Liu and Langxuan Chen and Yuke Zhang and Shuyan Zhou and Pengfei Zhang},
      year={2024},
      eprint={2410.09327},
      archivePrefix={arXiv},
      primaryClass={quant-ph},
      url={https://arxiv.org/abs/2410.09327}, 
}

@misc{gu2024spontaneoussymmetrybreakingopen,
      title={Spontaneous symmetry breaking in open quantum systems: strong, weak, and strong-to-weak}, 
      author={Ding Gu and Zijian Wang and Zhong Wang},
      year={2024},
      eprint={2406.19381},
      archivePrefix={arXiv},
      primaryClass={quant-ph},
      url={https://arxiv.org/abs/2406.19381}, 
}

@misc{sun2025schemedetectstrongtoweaksymmetry,
      title={Scheme to Detect the Strong-to-weak Symmetry Breaking via Randomized Measurements}, 
      author={Ning Sun and Pengfei Zhang and Lei Feng},
      year={2025},
      eprint={2412.18397},
      archivePrefix={arXiv},
      primaryClass={quant-ph},
      url={https://arxiv.org/abs/2412.18397}, 
}

@misc{feng2025hardnessobservingstrongtoweaksymmetry,
      title={Hardness of observing strong-to-weak symmetry breaking}, 
      author={Xiaozhou Feng and Zihan Cheng and Matteo Ippoliti},
      year={2025},
      eprint={2504.12233},
      archivePrefix={arXiv},
      primaryClass={quant-ph},
      url={https://arxiv.org/abs/2504.12233}, 
}

@misc{song2025strongtoweakspontaneoussymmetrybreaking,
      title={Strong-to-weak spontaneous symmetry breaking of higher-form non-invertible symmetries in Kitaev's quantum double model}, 
      author={Zijian Song and Jian-Hao Zhang},
      year={2025},
      eprint={2509.24179},
      archivePrefix={arXiv},
      primaryClass={quant-ph},
      url={https://arxiv.org/abs/2509.24179}, 
}

@misc{sang2025mixedstatephaseslocalreversibility,
      title={Mixed-state phases from local reversibility}, 
      author={Shengqi Sang and Leonardo A. Lessa and Roger S. K. Mong and Tarun Grover and Chong Wang and Timothy H. Hsieh},
      year={2025},
      eprint={2507.02292},
      archivePrefix={arXiv},
      primaryClass={quant-ph},
      url={https://arxiv.org/abs/2507.02292}, 
}

@misc{schafernameki2025symtftapproachmixedstates,
      title={SymTFT Approach for Mixed States with Non-Invertible Symmetries}, 
      author={Sakura Schafer-Nameki and Apoorv Tiwari and Alison Warman and Carolyn Zhang},
      year={2025},
      eprint={2507.05350},
      archivePrefix={arXiv},
      primaryClass={quant-ph},
      url={https://arxiv.org/abs/2507.05350}, 
}

@article{5ywn-6d3q,
  title = {Decoherence and Wave-Function Deformation of ${D}_{4}$ Non-Abelian Topological Order},
  author = {Sala, Pablo and Alicea, Jason and Verresen, Ruben},
  journal = {Phys. Rev. X},
  volume = {15},
  issue = {3},
  pages = {031002},
  numpages = {43},
  year = {2025},
  month = {Jul},
  publisher = {American Physical Society},
  doi = {10.1103/5ywn-6d3q},
  url = {https://link.aps.org/doi/10.1103/5ywn-6d3q}
}

@article{zhang2025quantum,
  title={Quantum communication and mixed-state order in decohered symmetry-protected topological states},
  author={Zhang, Zhehao and Agrawal, Utkarsh and Vijay, Sagar},
  journal={Physical Review B},
  volume={111},
  number={11},
  pages={115141},
  year={2025},
  publisher={APS}
}

@article{Guo_2025,
   title={Strong-to-weak spontaneous symmetry breaking meets average symmetry-protected topological order},
   volume={111},
   ISSN={2469-9969},
   url={http://dx.doi.org/10.1103/PhysRevB.111.L201108},
   DOI={10.1103/physrevb.111.l201108},
   number={20},
   journal={Physical Review B},
   publisher={American Physical Society (APS)},
   author={Guo, Yuchen and Yang, Shuo},
   year={2025},
   month=may }

@misc{kim2024errorthresholdsykcodes,
      title={Error Threshold of SYK Codes from Strong-to-Weak Parity Symmetry Breaking}, 
      author={Jaewon Kim and Ehud Altman and Jong Yeon Lee},
      year={2024},
      eprint={2410.24225},
      archivePrefix={arXiv},
      primaryClass={quant-ph},
      url={https://arxiv.org/abs/2410.24225}, 
}

@article{Chen_2025,
   title={Strong-to-weak symmetry breaking and entanglement transitions},
   volume={111},
   ISSN={2469-9969},
   url={http://dx.doi.org/10.1103/PhysRevB.111.L060304},
   DOI={10.1103/physrevb.111.l060304},
   number={6},
   journal={Physical Review B},
   publisher={American Physical Society (APS)},
   author={Chen, Langxuan and Sun, Ning and Zhang, Pengfei},
   year={2025},
   month=feb }

@article{PRXQuantum.4.030318,
  title = {Mixed-State Long-Range Order and Criticality from Measurement and Feedback},
  author = {Lu, Tsung-Cheng and Zhang, Zhehao and Vijay, Sagar and Hsieh, Timothy H.},
  journal = {PRX Quantum},
  volume = {4},
  issue = {3},
  pages = {030318},
  numpages = {25},
  year = {2023},
  month = {Aug},
  publisher = {American Physical Society},
  doi = {10.1103/PRXQuantum.4.030318},
  url = {https://link.aps.org/doi/10.1103/PRXQuantum.4.030318}
}

@article{PRXQuantum.6.010348,
  title = {Symmetry-Protected Topological Phases of Mixed States in the Doubled Space},
  author = {Ma, Ruochen and Turzillo, Alex},
  journal = {PRX Quantum},
  volume = {6},
  issue = {1},
  pages = {010348},
  numpages = {33},
  year = {2025},
  month = {Mar},
  publisher = {American Physical Society},
  doi = {10.1103/PRXQuantum.6.010348},
  url = {https://link.aps.org/doi/10.1103/PRXQuantum.6.010348}
}

@article{PhysRevLett.132.170602,
  title = {Separability Transitions in Topological States Induced by Local Decoherence},
  author = {Chen, Yu-Hsueh and Grover, Tarun},
  journal = {Phys. Rev. Lett.},
  volume = {132},
  issue = {17},
  pages = {170602},
  numpages = {6},
  year = {2024},
  month = {Apr},
  publisher = {American Physical Society},
  doi = {10.1103/PhysRevLett.132.170602},
  url = {https://link.aps.org/doi/10.1103/PhysRevLett.132.170602}
}

@article{PRXQuantum.6.010313,
  title = {Noisy Approach to Intrinsically Mixed-State Topological Order},
  author = {Sohal, Ramanjit and Prem, Abhinav},
  journal = {PRX Quantum},
  volume = {6},
  issue = {1},
  pages = {010313},
  numpages = {20},
  year = {2025},
  month = {Jan},
  publisher = {American Physical Society},
  doi = {10.1103/PRXQuantum.6.010313},
  url = {https://link.aps.org/doi/10.1103/PRXQuantum.6.010313}
}

@article{PRXQuantum.6.010315,
  title = {Toward a Classification of Mixed-State Topological Orders in Two Dimensions},
  author = {Ellison, Tyler D. and Cheng, Meng},
  journal = {PRX Quantum},
  volume = {6},
  issue = {1},
  pages = {010315},
  numpages = {44},
  year = {2025},
  month = {Jan},
  publisher = {American Physical Society},
  doi = {10.1103/PRXQuantum.6.010315},
  url = {https://link.aps.org/doi/10.1103/PRXQuantum.6.010315}
}

@article{PRXQuantum.6.010314,
  title = {Intrinsic Mixed-State Topological Order},
  author = {Wang, Zijian and Wu, Zhengzhi and Wang, Zhong},
  journal = {PRX Quantum},
  volume = {6},
  issue = {1},
  pages = {010314},
  numpages = {29},
  year = {2025},
  month = {Jan},
  publisher = {American Physical Society},
  doi = {10.1103/PRXQuantum.6.010314},
  url = {https://link.aps.org/doi/10.1103/PRXQuantum.6.010314}
}

@misc{negari2025spacetimemarkovlengthdiagnostic,
      title={Spacetime Markov length: a diagnostic for fault tolerance via mixed-state phases}, 
      author={Amir-Reza Negari and Tyler D. Ellison and Timothy H. Hsieh},
      year={2025},
      eprint={2412.00193},
      archivePrefix={arXiv},
      primaryClass={quant-ph},
      url={https://arxiv.org/abs/2412.00193}, 
}

@misc{putz2025learningtransitionsclassicalising,
      title={Learning transitions in classical Ising models and deformed toric codes}, 
      author={Malte Pütz and Samuel J. Garratt and Hidetoshi Nishimori and Simon Trebst and Guo-Yi Zhu},
      year={2025},
      eprint={2504.12385},
      archivePrefix={arXiv},
      primaryClass={cond-mat.stat-mech},
      url={https://arxiv.org/abs/2504.12385}, 
}

@article{behrends2024surface,
  title={The surface code beyond Pauli channels: Logical noise coherence, information-theoretic measures, and errorfield-double phenomenology},
  author={Behrends, Jan and B{\'e}ri, Benjamin},
  journal={arXiv preprint arXiv:2412.21055},
  year={2024}
}

@article{review_2023,
   author = "Fisher, Matthew P.A. and Khemani, Vedika and Nahum, Adam and Vijay, Sagar",
   title = "Random Quantum Circuits", 
   journal= "Annual Review of Condensed Matter Physics",
   year = "2023",
   volume = "14",
   number = "Volume 14, 2023",
   pages = "335-379",
   doi = "https://doi.org/10.1146/annurev-conmatphys-031720-030658",
   url = "https://www.annualreviews.org/content/journals/10.1146/annurev-conmatphys-031720-030658",
   publisher = "Annual Reviews",
   issn = "1947-5462",
   type = "Journal Article",
   keywords = "quantum simulators",
   keywords = "quantum measurements",
   keywords = "quantum chaos",
   keywords = "nonequilibrium dynamics",
   keywords = "entanglement",
   abstract = "Quantum circuits—built from local unitary gates and local measurements—are a new playground for quantum many-body physics and a tractable setting to explore universal collective phenomena far from equilibrium. These models have shed light on longstanding questions about thermalization and chaos, and on the underlying universal dynamics of quantum information and entanglement. In addition, such models generate new sets of questions and give rise to phenomena with no traditional analog, such as dynamical phase transitions in quantum systems that are monitored by an external observer. Quantum circuit dynamics is also topical in view of experimental progress in building digital quantum simulators that allow control of precisely these ingredients. Randomness in the circuit elements allows a high level of theoretical control, with a key theme being mappings between real-time quantum dynamics and effective classical lattice models or dynamical processes. Many of the universal phenomena that can be identified in this tractable setting apply to much wider classes of more structured many-body dynamics.",
  }

\appendix

\section{Path integral formulation}\label{app:path_integral}
\subsection{Measurement and noise dynamics}\label{app:path_integral:baseline}
In Sec.~\ref{sec:path_integral:derivation}, we provided a high-level derivation of the path integral for the doubled state in our dynamics. Here, we fill in the details by deriving $\calZ_{\m}(s_T,s_T',s_0,s_0')$ explicitly. This partition function is a matrix element of a product of $T$ matrices:
\[
\calZ_{\m}(s_T,s_T',s_0,s_0') = \sum_{s,s'} \prod_{t=1}^T \mell{s_t,s_t'}{K_t}{s_{t-1},s_{t-1}'}
\]
where each $K_t$ is a layer of our circuit, as given in Eq.~\eqref{eq:K_t}, and the sum is over spins $s_{t,i}^{(\prime)}$ such that $0 < t < T$. The action of Pauli-ZZ dynamics in this basis is straightforward. Referring to Eqs.~\eqref{eq:zz_meas_def} and~\eqref{eq:zz_noise_def}, we immediately find that
\[
\braa{s_t,s_t'}(\calP^{zz}_{i,m})^{\ot 2} \propto \braa{s_t,s_t'}e^{mJ_{zz}(s_{t,i}s_{t,i+1}+s_{t,i}'s_{t,i+1}')}
\]
and
\[
\braa{s_t,s_t'} \calN^{zz}_{i} \propto \braa{s_t,s_t'}e^{K_{zz}s_{t,i}s_{t,i+1}s_{t,i}'s_{t,i+1}'}
\]
with couplings $J_{zz} = \tanh^{-1}\lambda_{zz}$ and $K_{zz} = \tanh^{-1}(q_{zz}/(1-q_{zz}))$, respectively.

The action of Pauli-X dynamics is more nontrivial. In the computational basis, we find that
\[
\calN_i^x (\calP^x_{i,m})^{\ot 2} \propto
\begin{pmatrix}
1 & ma & ma & b\\
ma & 1 & b & ma\\
ma & b & 1 & m a\\
b & ma & ma & 1
\end{pmatrix}
\]
where
\[
a &= \frac{\lambda_x}{1-(1-\lambda_x^2)q_x}\\
b &= -1 + \frac{1+\lambda_x^2}{1-(1-\lambda_x^2)q_x}
\]
and $m = \pm 1$.
We seek to find couplings $\tilde{J}_x$ and $K_x$ such that
\begin{multline}
    \mell{s_t,s_t'}{\calN_i^x (\calP^x_{i,m})^{\ot 2}}{s_{t-1},s_{t-1}'}\\ \propto e^{\tilde{J}_x(s_{t-1,i}s_{t,i}+s_{t-1,i}'s_{t,i}') + K_xs_{t-1,i}s_{t,i}s_{t-1,i}'s_{t,i}'}.
\end{multline}
By examining ratios of different spin configurations, we conclude that this is satisfied if
\[
\frac{e^{-K_x}}{e^{2\tilde{J}_x+K_x}} &= m a\\
\frac{e^{-2\tilde{J}_x+K_x}}{e^{2\tilde{J}_x+K_x}} &= b
\]
from which we obtain
\[
2(\tilde{J}_x+K_x) &= -\log a + \left(\tfrac{1-m}{2}  + 2z_1\right)i \pi\\
4\tilde{J}_x &= -\log b + 2 z_2 i \pi
\]
for any $z_1,z_2 \in \bbZ$. It follows that
\[
\tilde{J}_x &= -\frac{1}{4}\log b + \frac{1-m}{2}\frac{i \pi}{2}\\
K_x &= -\frac{1}{4}\log \frac{a^2}{b}
\]
having made the choice $z_1 = 0$ and $z_2 = \frac{1-m}{2}$. This choice is desirable because it makes $K_x$ real and ensures that $K_x = 0$ when $q_x = 0$, regardless of the measurement outcome. It is convenient to define $J_x = \Re \tilde{J_x}$. Having done this, plugging in $a$ and $b$ and simplifying yields the couplings given in Eqs.~\eqref{eq:coupling_1} to \eqref{eq:coupling_4}. 

\subsection{Including unitary evolution}\label{app:path_integral:unitary}
In Sec.~\ref{sec:phases:unitaries}, we discuss a modified version of the dynamics including unitary evolution. Here, we derive the modified partition function in this case. In particular, the dynamics now includes Pauli-X and Pauli-ZZ rotations with angles $\theta_x$ and $\theta_{zz}$, respectively. This corresponds to a doubled-state evolution
\[
\calU_i^{x} \ot (\calU_i^{x})^* &= e^{i\theta_x (X_i- X_i')}\\
\calU_i^{zz} \ot (\calU_i^{zz})^* &= e^{i\theta_{zz} (Z_iZ_{i+1}- Z_i'Z_{i+1}')}.
\]
Once again, the action of the Pauli-ZZ evolution is straightforward:
\[
\braa{s_t,s_t'}(\calU_i^{zz} \ot (\calU_i^{zz})^*) = \braa{s_t,s_t'}e^{i\theta_{zz}(s_{t,i}s_{t,i+1}-s_{t,i}'s_{t,i+1}')}.
\]
The Pauli-X evolution is now more complicated:
\[
\calN_i^x (\calP_{i,m}^x)^{\ot 2} (\calU_{i}^x\ot (\calU_i^{x})^{*}) \propto \begin{pmatrix}
1 & w & w^* & b\\
w & 1 & b & w^*\\
w^* & b & 1 & w\\
b & w^* & w & 1
\end{pmatrix}
\]
where
\[
w &= \frac{m\lambda_x(1+\theta_x^2)(1+\tfrac{q_x}{1-q_x}) + i\theta_x(1-\lambda_x^2)(1-\tfrac{q_x}{1-q_x})}{1+\theta_x^2\lambda_x^2 + \tfrac{q_x}{1-q_x}(\lambda_x^2+\theta_x^2)}\\
b &= \frac{\lambda_x^2+\theta_x^2 + \tfrac{q_x}{1-q_x}(1+\theta_x^2\lambda_x^2)}{1+\theta_x^2\lambda_x^2 + \tfrac{q_x}{1-q_x}(\lambda_x^2+\theta_x^2)}.
\]
We can rewrite $w = \abs{w} e^{i\phi(\lambda_x,\theta_x,q_x)}$ where
\[
\phi(\lambda_x,\theta_x,q_x) &= \tan^{-1}\frac{\theta_x(1-\lambda_x^2)(1-\tfrac{q_x}{1-q_x})}{\lambda_x(1+\theta_x^2)(1+\tfrac{q_x}{1-q_x})} + \frac{1-m}{2}\pi
\]
which conveniently reduces to $\phi = \frac{1-m}{2}a$ when $\theta_x = 0$.

Guided by our result in App.~\ref{app:path_integral:baseline}, together with the expectation that our path integral will have a Hermiticity symmetry (composed of conjugating and swapping primed and unprimed spins), we seek to find couplings $\tilde{J}_x$ and $K_x$ such that
\begin{multline}\label{eq:app_unitary_transfer_matrix}
    \mell{s_t,s_t'}{\calN_i^x (\calP^x_{i,m})^{\ot 2}(\calU_{i}^x\ot (\calU_i^{x})^{*})}{s_{t-1},s_{t-1}'}\\ \propto e^{\tilde{J}_x s_{t-1,i}s_{t,i}+\tilde{J}_x^*s_{t-1,i}'s_{t,i}' + K_xs_{t-1,i}s_{t,i}s_{t-1,i}'s_{t,i}'}.
\end{multline}
Again, we proceed by examining ratios of different spin configurations. We find that
\[
e^{-2\tilde{J}_x^*-2K_x} &= w\\
e^{-2\tilde{J}_x-2K_x} &= w^*\\
e^{-2\tilde{J}_x - 2\tilde{J}_x^*} &= b
\]
and thus that
\[
4 \Re \tilde{J}_x + 4K_x &= -\log \abs{w}^2\\
4 \Im \tilde{J}_x &= 2\phi\\
4 \Re \tilde{J}_x &= -\log b.
\]
It follows that
\[
\tilde{J}_x &= -\frac{1}{4}\log b + \frac{i\phi}{2}\\
K_x &= -\frac{1}{4}\log \frac{\abs{w}^2}{b}.
\]
We note that this result, and indeed our initial goal in Eq.~\eqref{eq:app_unitary_transfer_matrix}, appear to be in tension with the result in App.~\ref{app:path_integral:baseline}. Previously, we obtained a single $\tilde{J}_x$ for both the unprimed and primed spins instead of using the conjugate coupling for the primed spins. But the two results are consistent. When $\theta_x = 0$, we have
\[
\tilde{J}_x^* = \tilde{J}_x - \frac{1-m}{2} i\pi
\]
and since $e^{i\pi} = e^{-i\pi}$, this contributes a spin-independent overall constant. 

\section{Partition function duality}\label{app:duality}
In this appendix, we show the Kramers-Wannier duality of our model on the level of the $2$d partition function. Since our model is disordered and this disorder is sampled according to a Nishimori condition (because $\tr\rho_m \propto Z_m$), the duality has an important subtlety. When demonstrating that two dual Hamiltonians yield the same partition function $Z_m$, we can ignore overall constants that are independent of $m$ (as usual) but must carefully track overall constants that depend on $m$ (to ensure that disorder is sampled equivalently on either side of the duality). Thus, we proceed in three steps. First, we derive the duality without worrying about these measurement trajectory-dependent constants. Second, we show the conditions for self-duality. Third, we address the measurement trajectory-dependent constants and discuss boundary details.

\subsection{Obtaining a dual theory}\label{app:duality:overview}
To begin, we note that our Hamiltonian in Eq.~\eqref{eq:Hamiltonian} may be rewritten, up to constant factors, as
\begin{align}\label{eq:app_potts}
H_m(s,s') = &-\sum_{\expval{ij}} J_{ij} (\delta(s_i, s_j) + \delta(s_i',s_j'))\nonumber\\
&-\sum_{\expval{ij}} K_{ij} \delta(s_i,s_j)\delta(s_i',s_j')
\end{align}
where
\[
J_{ij} &= \begin{cases}
    2J_{zz}m_{ij}^{zz} - 2K_{zz} & \expval{ij}\text{ spatial}\\
    2J_x + 2\frac{1-m_{ij}^{x}}{2} \frac{i\pi}{2} - 2K_{x} & \expval{ij}\text{ temporal}
\end{cases}\\
K_{ij} &= \begin{cases}
    4K_{zz} & \expval{ij}\text{ spatial}\\
    4K_x & \expval{ij}\text{ temporal}.
\end{cases}
\]

We may expand the partition function corresponding to Eq.~\eqref{eq:app_potts} as
\[
  Z_m &= \sum_{s,s'}\prod_{\expval{ij}}\left[(e^{J_{ij}} - 1)\delta(s_i,s_j) + 1\right]\left[(e^{J_{ij}} - 1)\delta(s'_i,s'_j) + 1\right]\nonumber\\
&\hspace{4em}\times
\left[(e^{K_{ij}} - 1)\delta(s_i,s_j)\delta(s'_i,s'_j) + 1\right]\\
&= \sum_{s,s'}\prod_{\expval{ij}}\big[1 + (e^{J_{ij}} - 1)(\delta(s_i,s_j) + \delta(s'_i,s'_j)) \nonumber \\ 
&\hspace{1em} + (e^{2J_{ij}+K_{ij}} - 2 e^{J_{ij}} + 1)\delta(s_i,s_j)\delta(s'_i,s'_j) \big].
\]
This enables a convenient diagrammatic approach, similar to the Fortuin-Kasteleyn representation of two-dimensional stat. mech. models \cite{fortuin1972random}. Upon performing the product over edges, each term in the sum over spins is a product of Kronecker deltas. This product may be labeled by a pair of graphs $G$ and $G'$ where edge $\expval{ij}$ in $G$ ($G'$) is colored if the product includes $\delta(s_i,s_j)$ ($\delta(s_i',s_j')$). The corresponding weights are illustrated in Fig.~\ref{fig:app-duality-weights} and an example graph is given (in red) in Fig.~\ref{fig:app-graph-example}. When the sum over spins is performed, each connected cluster of Kronecker delta terms contributes a factor of two. Therefore,
\[
Z_m &= \sum_{G,G'} \prod_{\expval{ij}} (e^{J_{ij}}-1)^{e_{ij}(G)+e_{ij}(G')-2e_{ij}(G)e_{ij}(G')}\nonumber\\
&\hspace{5em}\times (1-2e^{J_{ij}} + e^{2J_{ij}+K_{ij}})^{e_{ij}(G)e_{ij}(G')}\nonumber\\
&\hspace{5em}\times2^{C(G)+C(G')}\\
&= \sum_{G,G'}\prod_{\expval{ij}} w_{ij}^{e_{ij}(G)+e_{ij}(G')}t_{ij}^{e_{ij}(G)e_{ij}(G')}\nonumber\\
&\hspace{5em}\times2^{C(G)+C(G')}\label{eq:app_duality1}
\]
where
\[
w_{ij} &= e^{J_{ij}}-1\\
t_{ij} &= (1-2e^{J_{ij}}+e^{2J_{ij}+K_{ij}})/(e^{J_{ij}}-1)^2
\]
and $e_{ij}(G), e_{ij}(G') \in \{0,1\}$ indicate whether or not an edge in graph $G$ or $G'$ is colored, and $C(G)$ and $C(G')$ count the connected components of $G$ and $G'$, respectively.

\begin{figure}
    \centering
    \begin{tikzpicture}[scale=1.2]
        \input{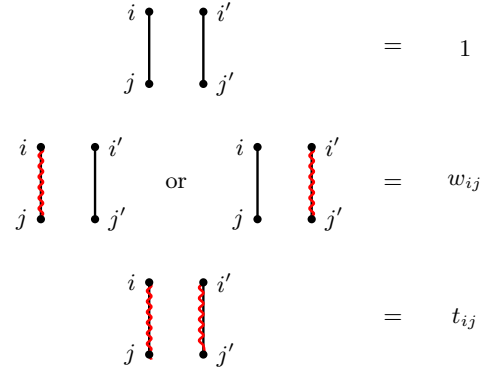}
    \end{tikzpicture}
    \caption{Edges contained in $G$ and $G'$ are selected from the underlying spacetime lattice. Such selections are denoted by red squiggly lines, with the three cases for a pair of lattice edges in $G$ and $G'$ given above. If neither edge is included, the weight is $1$; if just one is included, the weight is $w_{ij}$; if both are included, the weight is $t_{ij}$.}
    \label{fig:app-duality-weights}
\end{figure}

\begin{figure}
    \centering
    \begin{tikzpicture}[scale=1]
      \begingroup
        \input{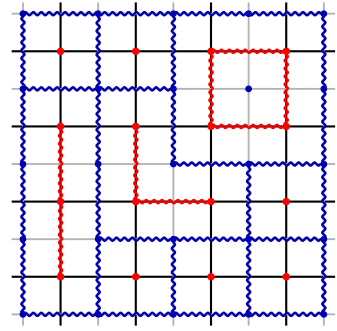}
      \endgroup
    \end{tikzpicture}
    
    \caption{Each term in the partition function corresponds to a pair of graphs $G$ and $G'$. Here, we illustrate one of these graphs ($G$, in red) and its dual ($\overline{G}$, in blue). In this example, $V(G) = 16, E(G) = 8, F(G) = 2$, and $C(G) = 9$. Likewise, $V(\overline{G}) = 25, E(\overline{G}) = 32, F(\overline{G}) = 10,$ and $C(\overline{G}) = 2$. Each of these satisfies Euler's formula. We also observe that connected components of $G$ and faces of $\overline{G}$ may be paired together, with a caveat: there is an extra face ``outside'' $\overline{G}$ such that $C(G) = F(\overline{G})$ is not satisfied. However, this mismatch is a boundary effect that may be ignored when studying the bulk duality. We discuss this and related issues in App.~\ref{app:duality:details}.}

    \label{fig:app-graph-example}
\end{figure}

This diagrammatic expansion is well-suited to dualizing the model. In particular, we may color edges on the dual lattice such that $e_{\bi\bj}(\overline{G}^{(\prime)}) = 1-e_{ij}(G^{(\prime)})$ on each graph, as illustrated in Fig.~\ref{fig:app-graph-example}. It then follows that each connected component of $G$ and $G'$ is contained in a face of $\overline{G}$ and $\overline{G'}$, respectively.\footnote{This is only precisely true away from the boundary; we discuss boundary considerations in App.~\ref{app:duality:details}.} Therefore, invoking Euler's formula for a planar graph,\footnote{Our spacetime is cylindrical, and graphs on cylinders are planar.}
\[
C(G) = V(G) - E(G) + F(G) - 1
\]
it follows that
\[
C(G) &= F(\overline{G})\\
&= C(\overline{G}) - V(\overline{G}) + E(\overline{G}) + 1
\]
and analogously for $G'$ and $\overline{G'}$. Consequently, we may write
\[
Z_m &\propto \sum_{\overline{G},\overline{G'}} \prod_{\expval{ij}}  (2/w_{ij}t_{ij})^{e_{\bi\bj}(\overline{G})+e_{\bi\bj}(\overline{G'})}t_{ij}^{e_{\bi\bj}(\overline{G})e_{\bi\bj}(\overline{G'})}\nonumber\\
&\hspace{5em}\times 2^{C(\overline{G}) + C(\overline{G'})}\label{eq:app_duality2}
\]
up to constant factors. Thus, we observe that Eq.~\eqref{eq:app_duality1} and Eq.~\eqref{eq:app_duality2} take precisely the same form. Therefore, we conclude that this sum over dual graphs corresponds to the dual Hamiltonian
\[
\overline{H}_m(s,s') = &-\sum_{\expval{\bi\bj}} \overline{J}_{\bi\bj} (\delta(s_{\bi}, s_{\bj}) + \delta(s_{\bi}',s_{\bj}'))\nonumber\\
&-\sum_{\expval{\bi\bj}} \overline{K}_{\bi\bj} \delta(s_{\bi}, s_{\bj})\delta(s_{\bi}',s_{\bj}'))
\]
where the dual couplings may be obtained from $\overline{w}_{\bi\bj} = 2/w_{ij}t_{ij}$ and $\overline{t}_{\bi\bj} = t_{ij}$.

\subsection{Conditions for self-duality}\label{app:duality:self}
The model is self-dual when it is unchanged by the duality operation. In the model, spatial couplings are all equal and temporal couplings are all equal. Thus, we denote by $J_0$ and $K_0$ the spatial couplings and by $J_1$ and $K_1$ the temporal couplings, with $w_0$, $w_1$, $t_0$, and $t_1$ defined accordingly. Then, since edges $\expval{ij}$ and $\expval{\bi\bj}$ are perpendicular, the duality conditions are
\[
w_0 &= 2/w_1t_1\label{eq:app_self_dual1}\\
t_0 &= t_1\label{eq:app_self_dual2}
\]
with corresponding implications for the measurement and dephasing strengths in the model. Since the self-dual couplings $\lambda_x = \lambda_{zz} = \lambda$ and $q_x = q_{zz} = q$ may be anticipated from the quantum Kramers-Wannier duality of the dynamics, we are content in this appendix to verify that this same choice satisfies Eq.~\eqref{eq:app_self_dual1} and Eq.~\eqref{eq:app_self_dual2}. First, we observe (after some algebra) that
\[
e^{J_0(m)} &= (1-2q)\frac{1+m\lambda}{1-m\lambda}\\
e^{J_1(m)} &= \frac{m\lambda}{q+(1-q)\lambda^2}
\]
so that
\[
\frac{w_0}{w_1} = \frac{2[q+(1-q)\lambda^2]}{(1-m\lambda)^2}.
\]
Furthermore,
\[
w_0^2t_0 &= 1-2e^{J_0(m)}+e^{2J_0(m)+K_0}\\
&= \frac{4[q+(1-q)\lambda^2]}{(1-m\lambda)^2}
\]
and
\[
w_1^2t_1 &= 1-2e^{J_1(m)}+e^{2J_1(m)+K_1}\\
&= \frac{(1-m\lambda)^2}{q+(1-q)\lambda^2}.
\]
It immediately follows that
\[
w_0 w_1 t_1 &= \frac{w_0}{w_1}w_1^2t_1\\
&= 2
\]
and
\[
\frac{t_0}{t_1} &= \left(\frac{w_0}{w_1}\right)^{-2} \frac{w_0^2t_0}{w_1^2t_1}\\
&= 1
\]
satisfying Eq.~\eqref{eq:app_self_dual1} and Eq.~\eqref{eq:app_self_dual2}.

\subsection{Trajectory-dependent constants and boundary effects}\label{app:duality:details}
We begin by carefully treating constants in the partition function that depend on $m$. When we rewrote our Hamiltonian from Eq.~\eqref{eq:Hamiltonian} in the form of Eq.~\eqref{eq:app_potts}, we dropped a constant factor $\sum_{\expval{ij}} (J_{ij} + K_{ij}/4)$. We are still not interested in trajectory-independent constants, so we may ignore the $K_{ij}$ term. Thus, our true partition function should have been written
\[
Z_m &= \sum_{G,G'}\prod_{\expval{ij}} \frac{1}{w_{ij}+1}w_{ij}^{e_{ij}(G)+e_{ij}(G')}t_{ij}^{e_{ij}(G)e_{ij}(G')}\nonumber\\
&\hspace{5em}\times2^{C(G)+C(G')}.
\]
Furthermore, when we rewrote our partition function in terms of dual graph degrees of freedom, we dropped a factor of $w_{ij}^2t_{ij}$ arising from the constant factor in $e_{\bi\bj}(\overline{G}^{(\prime)}) = 1-e_{ij}(G^{(\prime)})$. Thus, tracking trajectory-dependent constant factors gives us a modified version of Eq.~\eqref{eq:app_duality2}:
\[
Z_m &\propto \sum_{\overline{G},\overline{G'}} \prod_{\expval{ij}} \frac{w_{ij}^2t_{ij}}{w_{ij}+1}  (2/w_{ij}t_{ij})^{e_{\bi\bj}(\overline{G})+e_{\bi\bj}(\overline{G'})}\nonumber\\
&\hspace{5em}\times t_{ij}^{e_{\bi\bj}(\overline{G})e_{\bi\bj}(\overline{G'})}2^{C(\overline{G}) + C(\overline{G'})}.
\]
We want to verify that $1/(\overline{w}_{\bi\bj}+1) = w_{ij}^2t_{ij}/(w_{ij}+1)$, at least up to trajectory-independent constant factors. To test this, we compute
\[
(\overline{w}_1+1) \frac{w_0^2t_0}{w_0+1} &= \frac{2+w_0t_0}{w_0t_0}\frac{w_0^2t_0}{w_0+1}\\
&= \frac{2w_0+w_0^2t_0}{w_0+1}\\
&= \frac{e^{2J_0+K_0}-1}{e^{J_0}}\\
&= 2e^{K_0/2}\sinh(2J_{zz}m_{zz})\\
&= 2m_{zz}e^{K_0/2}\sinh(2J_{zz}).
\]
An equivalent computation yields
\[
(\overline{w}_0+1) \frac{w_1^2t_1}{w_1+1} &= 2m_{x}e^{K_1/2}\sinh(2J_{x})
\]
since $\sinh(-x) = \sinh(x+i\pi)=-\sinh(x)$. We observe that, under the duality, the measurement-dependent constants in the partition function change sign, but not weight. This is sufficient: the partition function will be positive (with the sign change under duality absorbed by spin-dependent factors in the partition function) whenever it corresponds to a probability.\footnote{One might worry, in particular, about the self-dual case where the spin-dependent factors in the partition function are unchanged under the duality and thus cannot absorb a sign. For this to be precisely true, the measurement trajectory must also be self-dual, in that the Pauli-X and Pauli-ZZ measurement records are identical. In this case, each sign appears twice and there is no overall change.}

Having dealt with the trajectory-dependent constants, all that remains is to discuss boundary effects where the pairing between connected components of $G$ and faces of $\overline{G}$ breaks down. We observe that at the temporal boundaries of the graph and dual graph in Fig.~\hyperref[fig:app-duality-boundary]{\ref*{fig:app-duality-boundary}(a)}, connected components cannot be surrounded by dual faces and vice versa. To remedy this, we can imagine dressing the dual graph with extra edges, as illustrated in Fig.~\hyperref[fig:app-duality-boundary]{\ref*{fig:app-duality-boundary}(b)}. In terms of this dressed graph, we have $C(G) = F(\overline{G}_{dressed}) - 2$, where the subtracted factor accounts for the external faces below and above the initial and final temporal boundaries, respectively. As usual, we may write $F(\overline{G}_{dressed}) = C(\overline{G}_{dressed}) - V(\overline{G}_{dressed}) + E(\overline{G}_{dressed}) + 1$. It is straightforward to see that $V(\overline{G}_{dressed}) = V(\overline{G}) + L$ and $E(\overline{G}_{dressed}) = E(\overline{G}) + 3L$. Furthermore, we observe that the dressed graph connects any disconnected components that touch either temporal boundary, so $0 \leq C(\overline{G}) - C(\overline{G}_{dressed}) \leq 2L$. We therefore conclude that
\[
C(G) = C(\overline{G}) - V(\overline{G}) + E(\overline{G}) + O(L)
\]
where the $O(L)$ factor may be attributed entirely to boundary effects. Consequently, even though this factor does depend on $G$ and $G'$ (and thus one should worry that it should be accounted for), it has vanishing energy density and may be ignored when considering the duality.

\begin{figure}
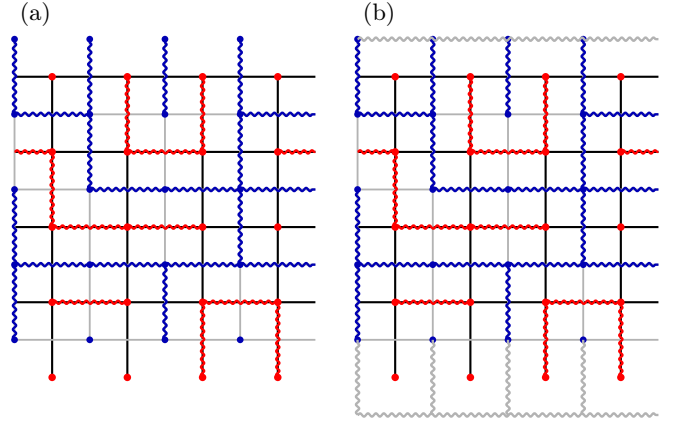

    \centering
\begin{tikzpicture}
  \input{app_graph_boundary}
  \node[anchor=north west]
       at ([xshift=0pt,yshift=16pt]current bounding box.north west) {(a)};
\end{tikzpicture}\hspace{0.5cm}%
\begin{tikzpicture}
  \input{app_graph_boundary2}
  \node[anchor=north west]
       at ([xshift=0pt,yshift=16pt]current bounding box.north west) {(b)};
\end{tikzpicture}
    
    \caption{In Fig.~\ref{fig:app-graph-example} we illustrated what a graph and its dual might look like in the bulk of spacetime. Here, we focus on the boundary conditions. In particular, we have open temporal boundaries and periodic spatial boundaries. (a) An example graph $G$ and its dual $\overline{G}$ are given, in which it is clear that connected components of $G$ are not paired with faces of $\overline{G}$ at the temporal boundaries. (b) This issue is alleviated if we instead study $\overline{G}_{dressed}$, in which case connected components and faces are paired together (with the exception of boundary faces of $\overline{G}_{dressed}$ at the top and bottom of the cylindrical spacetime).}
    \label{fig:app-duality-boundary}
\end{figure}

\section{Transforming the repetition code path integral to a standard RBIM}\label{app:RBIM}
In Sec.~\ref{sec:path_integral:limits}, we showed that when $\lambda_{x} = 0$ or $\lambda_{zz} = 0$, the path integral may be reduced to only one set of spin degrees of freedom. This yields the Hamiltonians in Eq.~\eqref{eq:RBIM1} and Eq.~\eqref{eq:RBIM2}, respectively. But neither of these Hamiltonians is the standard RBIM, where disorder is on both spatial and temporal bonds and is sampled i.i.d. In this appendix we show how these Hamiltonians are equivalent to the standard RBIM for observables that are invariant under RBIM gauge transformations.

This relationship has important conceptual relevance. It is well-known (see, e.g., Ref.~\cite{Sang_2021}) that dynamics dominated by Pauli-ZZ measurements can sustain a spin-glass phase where $\overline{\expval{Z_iZ_j}^2} = O(1)$ but $\overline{\expval{Z_iZ_j}} = 0$. In our path integral, $\expval{Z_iZ_j} = \expval{s_{T,i}s_{T,j}}$ is a boundary correlation function of a RBIM on the Nishimori line. But the standard $2$d RBIM has a ferromagnetic phase at low temperatures, where both $\overline{\expval{Z_iZ_j}} = O(1)$ and $\overline{\expval{Z_iZ_j}^2} = O(1)$. How do we reconcile the ferromagnetic phase of the standard RBIM with the spin glass phase of our dynamics? As we will see in this section, the particular RBIM that arises as our path integral (with only spatial disorder) is meaningfully different from the standard RBIM yet coincides for gauge-invariant observables---which $\overline{\expval{Z_iZ_j}^2} $ is but $\overline{\expval{Z_iZ_j}}$ is not.

To begin, we show that the second Hamiltonian (Eq.~\eqref{eq:RBIM2}), which has complex weights, is equivalent to the first Hamiltonian under Kramers-Wannier duality, such that both take the form 
\[\label{eq:app_RBIM1}
H_{\m}(\s) = -\sum_{t,i} \big[m_{t,i}J_x s_{t,i}s_{t,i+1} + J_{\tau} s_{t,i}s_{t+1,i}\big]
\]
where disorder $m_{t,i} = \pm 1$ is only on spatial bonds. We observe that the temporal coupling in the dual theory is $-\frac{1}{2}\log \tanh K_{zz}$, which is always positive, and the spatial coupling in the dual theory is
\[
-\tfrac{1}{2}\log \tanh\left(A_x + \tfrac{1-m}{2}\tfrac{i\pi}{2} \right) &= -\tfrac{1}{2}\log (\tanh A_x)^{m}\\
&= m (-\frac{1}{2}\log\tanh A_x)
\]
which has $\pm 1$ spatial disorder, like Eq.~\eqref{eq:app_RBIM1}. Therefore, we focus on Eq.~\eqref{eq:app_RBIM1} for the remainder of this appendix. 

We distinguish this model from the standard RBIM given by
\[\label{eq:app_RBIM2}
H_{\bmeta}(\s) = -\sum_{t,i} \big[\eta^x_{t,i}J_x s_{t,i}s_{t,i+1} + \eta^{\tau}_{t,i}J_{\tau} s_{t,i}s_{t+1,i}\big]
\]
where bond disorder is sampled i.i.d. We define corresponding partition functions $\calZ_{\m} = \sum_{\s} e^{-H_{\m}(\s)}$ and $\calZ_{\bmeta} = \sum_{\s} e^{-H_{\bmeta}(\s)}$ and correlation functions $\expval{s_{t,i}s_{t,j}}_{\m}$ and $\expval{s_{t,i}s_{t,j}}_{\bmeta}$.

The Born rule ensures that $p_{\m} = Z_{\m}$ after appropriate normalization of the partition function. We identify this as satisfying the Nishimori condition.

For the standard RBIM, the Nishimori condition can be expressed as a local condition requiring that $p^x/(1-p^x) = e^{-2J_x}$ and $p^{\tau}/(1-p^{\tau}) = e^{-2J_{\tau}}$, where $p^x$ and $p^{\tau}$ are the probabilities of having $-J$ bonds on spatial and temporal bonds, respectively.

The disorder configurations that result may be organized in terms of frustrated plaquette classes. These are classes of disorder configurations that are equivalent under RBIM gauge transformations: $s_{t,i} \to -s_{t,i}$ and $\eta_{t,j}^x, \eta_{t,j}^{\tau} \to -\eta_{t,j}^x, -\eta_{t,j}^{\tau}$ for $j = i-1,i$. In other words, a spin is flipped along with its four adjacent couplings. The partition function is invariant under such transformations, though certain correlation functions may not be.

Since these gauge transformations always flip an even number of couplings surrounding any given plaquette, the product of couplings around each plaquette is invariant, allowing us to use frustrated plaquettes---those plaquettes with a negative product of couplings---as gauge-invariant labels. We denote by $[\bmeta]$ the equivalence class of a disorder configuration $\bmeta$. As a consequence of the local Nishimori condition defined above for the standard RBIM, it also follows that $\sum_{\bmeta \in [\bmeta_0]} p_{\bmeta} = \calZ_{\bmeta_0}$.

We are now ready to show why, from a statistical mechanics perspective, $\overline{\expval{Z_iZ_j}^2}$ identifies a standard RBIM transition while $\overline{\expval{Z_iZ_j}}$ does not.

We begin by noting that 
\[
\overline{\expval{Z_iZ_j}^2} &= \overline{\expval{s_{T,i}s_{T,j}}^2_{\m}}\\
&= \sum_{\m} p_{\m} \frac{\left(\sum_{\s} s_{T,i}s_{T,j}e^{-H_{\m}(\s}\right)^2}{\calZ_{\m}^2}.
\]
Although the Born rule allows us to cancel one factor of $\calZ_{\m}$ in the denominator with $p_{\m}$, this does not yield a significant simplification. Any explicit averaging over the measurement disorder requires the replica trick. 

Nevertheless, we can show that this order parameter probes the analogous correlation function in the standard RBIM. First, we note that $p_{\m} = \calZ_{\m} = \sum_{\bmeta \in [\m]} p_{\bmeta}$, so that
\[
\overline{\expval{Z_iZ_j}^2} &= \sum_{\m} \left(\sum_{\bmeta \in [\m]} p_{\bmeta} \right) \frac{\left(\sum_{\s} s_{T,i}s_{T,j}e^{-H_{\m}(\s)}\right)^2}{\calZ_{\m}^2}.
\]
Although $\sum_{\s} s_{T,i}s_{T,j}e^{-H_{\m}(\s)}$ transforms to $\pm \sum_{\s} s_{T,i}s_{T,j}e^{-H_{\bmeta}(\s)}$ under a gauge transformation (since $s_{T,i}$ or $s_{T,j}$ may be flipped by the transformation), the quantity-squared is gauge-invariant. Thus, since $\calZ_{\m}$ is automatically gauge-invariant, we perform gauge transformations to move the correlation function inside the sum over $\bmeta$:
\[
\overline{\expval{Z_iZ_j}^2} &= \sum_{\m} \sum_{\bmeta \in [\m]} p_{\bmeta} \frac{\left(\sum_{\s} s_{T,i}s_{T,j}e^{-H_{\bmeta}(\s)}\right)^2}{\calZ_{\bmeta}^2}\\
&= \overline{\expval{s_{T,i}s_{T,j}}^2_{\bmeta}}
\]
since the combined sums $\sum_{\m} \sum_{\bmeta \in [\m]}$ yield a sum over every disorder configuration (up to a set of global constraints requiring that the number of frustrated plaquettes in each column be even). We conclude that $\overline{\expval{Z_iZ_j}^2}$, and any other observables that probe RBIM gauge-invariant properties of the partition, can be understood as observables in the standard RBIM.

The situation is quite different for correlators like $\overline{\expval{s_{T,i}s_{T,j}}}$ that are not gauge-invariant. For example, in the particular RBIM corresponding to our quantum dynamics, the disorder is annealed for linear spin correlators:
\[
\overline{\expval{s_{T,i}s_{T,j}}_{\m}} &= \sum_{\m} p_{\m} \frac{\sum_{\s} s_{T,i}s_{T,j}e^{-H_{\m}(\s)}}{\calZ_{\m}}\\
&= \sum_{\s} s_{T,i}s_{T,j} \sum_{\m} e^{-H_{\m}(\s)}
\]
because $p_{\m} = \calZ_{\m}$ given appropriate normalization of the partition function.

This probes very different physics from the analogous quantity in the standard RBIM:
\[
\overline{\expval{s_{T,i}s_{T,j}}_{\bmeta}} &= \sum_{\bmeta} p_{\bmeta} \frac{\sum_{\s} s_{T,i}s_{T,j}e^{-H_{\bmeta}(\s)}}{\calZ_{\bmeta}}
\]
where the disorder is not annealed, because $p_{\bmeta} \neq \calZ_{\bmeta}$. We note that although $\sum_{\bmeta \in [\m]} p_{\bmeta} = \sum_{\bmeta \in [\m]} \calZ_{\bmeta} $, this does not imply equality on the level of individual disorder configurations $\bmeta$.

This explains why $\overline{\expval{Z_iZ_j}}$ does not see any phase transition (even though the standard RBIM has a ferromagnetic phase along the Nishimori line) but $\overline{\expval{Z_iZ_j}^2}$ does.

\section{Complex fermions, the XXZ chain, and the time-continuum limit}\label{app:transformations}
There are two non-local transformations of our doubled Hilbert space that are useful to consider. We can transform to complex fermions with spinless Hubbard interactions or to new spin degrees of freedom with XXZ interactions. 

\subsection{Spinless Fermi-Hubbard}
\subsubsection{Jordan-Wigner transformation}
The transformation to complex fermions proceeds in two steps. First, we may transform to Majorana fermions via the Jordan-Wigner transformation. We define
\begin{equation}
X_ j = i \eta_j \gamma_j ;   \quad  Z_j Z_{j+1} = i \eta_j \gamma_{j+1} ,
\end{equation}
with Majorana fermions $\{ \eta_i , \eta_j \} = 2 \delta_{ij}$, $\{ \gamma_i , \gamma_j \} = 2 \delta_{ij}$
and $\{ \eta_i , \gamma_j \} = 0$, and similar definitions for the primed operators,
\begin{equation}
X^\prime_ j = i \eta^\prime_j \gamma^\prime_j ;   \quad Z^\prime_j Z^\prime_{j+1} = i \eta^\prime_j \gamma^\prime_{j+1} .
\end{equation}
Next, we introduce Dirac (complex) Fermions, $a^\dagger_n, a_n$ with $n=1,2,...,2L$, which couple the forward and backward paths,
\begin{equation}
a_{2j} = \frac{1}{2} ( \eta_j + i \eta_j^\prime); \quad a_{2j-1} = \frac{i}{2} ( \gamma_j + i \gamma_j^\prime) .
\end{equation}

It follows that
\[
X_i+X_i' &= 2(a^\dag_{2i-1}a_{2i} + h.c.)\\
Z_iZ_{i+1}+Z_i'Z_{i+1}' &= 2(a^\dag_{2i}a_{2i+1} + h.c.)
\]
and
\[\label{eq:XX_complex_fermion}
X_iX_i' &= -(2a^\dag_{2i-1}a_{2i-1}-1)(2a^\dag_{2i}a_{2i}-1)\\
Z_iZ_{i+1}Z_i'Z_{i+1}' &= -(2a^\dag_{2i}a_{2i}-1)(2a^\dag_{2i+1}a_{2i+1}-1)
\]
It follows that the disorder-free Hamiltonians defined in Eq.~\eqref{eq:H1} and Eq.~\eqref{eq:H2} (for the forced measurement and $2$\nobreakdash-replica cases) may be written in the form
\begin{multline}\label{eq:Hamiltonian_complex_fermions}
H =-2\sum_{n=1}^{2L} J_n (a^\dag_n a_{n+1} + h.c.)\\
+ \sum_{n=1}^{2L} K_n (2a^\dag_n a_n - 1)(2a^\dag_{n+1}a_{n+1}-1).
\end{multline}
For example, in the forced-measurement case $J_n = \lambda_x$ and $K_n = q_x$ for odd $n$, and $J_n = \lambda_{zz}$ and $K_n = q_{zz}$ for even $n$.
This first term in the Hamiltonian corresponds to a 1d free fermion hopping model 
with a staggered hopping strength, $\delta J = \lambda_x - \lambda_{zz}$. The second term is a staggered nearest-neighbor repulsive interaction with $\delta K = q_x - q_{zz}$.

When coherent Pauli-X and Pauli-ZZ rotations are added, the Hamiltonian gains a non-Hermitian term,
\begin{equation}\label{eq:coherent_Hamiltonian_complex_fermions}
H_{coh} = -i  \sum_{n} \theta_n ( a_n a_{n+1} + h.c.).
\end{equation}
In the forced-measurement case, $\theta_n = \theta_x$ for odd $n$ and $\theta_n = \theta_{zz}$ for even $n$.

Eq.~\eqref{eq:Hamiltonian_complex_fermions} and Eq.~\eqref{eq:coherent_Hamiltonian_complex_fermions} make clear, from the perspective of complex fermions, that a $U(1)$ fermion number symmetry is present when no coherent rotations are included. Alternatively, when coherent rotations are present but no measurements are performed, a particle-hole transformation on odd sites yields another theory with clear fermion number symmetry.

\subsubsection{Observables}

It is instructive to use this fermionic representation to evaluate observables of interest deep in the various phases. To do this, one first needs to re-express the Bell state in terms of the complex fermions.

First we note that the Bell state must satisfy,
\begin{equation}
X_j X_j^\prime \kett{\one} = \kett{\one} ;  \quad
Z_j Z_j^\prime \kett{\one} = \kett{\one} ,
\end{equation}
and be normalized as $\braakett{\one}{\one} = 2^{L}$.
One can show that the Bell state is given by,
\begin{equation}\label{eq:bell_def}
\kett{\one} = \prod_{j=1}^L \sqrt{2} a_{2j}^\dagger | 0 \rangle_a  ,
\end{equation}
where $| 0 \rangle_a$ denotes the ``vacuum" of the $a$-Fermions, satisfying
$a_n | 0 \rangle_a =0$ for all $n$.  This can be verified by checking that $\kett{\one}$ is stabilized by each $X_iX_i'$ and $Z_iZ_i'$. It is easy to check that the state is stabilized by each $X_iX_i'$ (given in Eq.~\eqref{eq:XX_complex_fermion}) and by
\begin{equation}\label{eq:ZZZZ_fermion}
Z_i Z_j  Z^\prime_i Z^\prime_j  =  (-1)^{j-i}  \prod_{n=2i}^{2j -1} e^{i \pi(a^\dagger_n a_n - 1)}  ,
\end{equation}
for all $i$ and $j$. What remains is to show that it is stabilized by some $Z_iZ_i'$. Although this corresponds to a string of fermionic operators for most $i$, the path of the Jordan-Wigner string can be chosen such that $Z_{L}Z_{L}' = i\eta_L \eta_L' = 2a^{\dag}_{2L}a_{2L} - 1$. Consequently, it is easy to see that $\kett{\one}$ as written in Eq.~\eqref{eq:bell_def} is correct.

Next, we seek to write down paradigmatic states $\kett{\rho_i}$ for each phase in terms of complex fermions. For simplicity, we consider the forced measurement case. Deep in Phase~1, when $\lambda_x = q_x = 0$, the complex fermion Hamiltonian has hopping and repulsion only on bonds $(n,n+1)$ for even $n$. The corresponding steady state is a bonding orbital across each of the even bonds,
\[
\kett{\rho_1} = \prod_{j=1}^{L} \frac{1}{\sqrt{2}}(a_{2j}^\dag + a_{2j+1}^\dag)\ket{0}_a.
\]
This density matrix is appropriately normalized since $\braakett{\one}{\rho_1} = 1$. Moreover, $\braakett{\rho_1}{\rho_1} = 1$, so the state is pure---as it must be, since the state is a GHZ state in the original spin variables. Likewise, deep in Phase~3, when $\lambda_{zz} = q_{zz}$, one has the doubled state
\[
\kett{\rho_3} = \prod_{j=1}^{L} \frac{1}{\sqrt{2}}(a_{2j-1}^\dag + a_{2j}^\dag)\ket{0}_a
\]
which is again an appropriately normalized pure state density matrix. Generally, with no decoherence, we expect the system density matrix to be pure for arbitrary ratio of $\lambda_x$ and $\lambda_{zz}$.

We have already seen the steady state deep in Phase~2, when $\lambda_x = \lambda_{zz} = 0$:
\[
\kett{\rho_2} = \frac{1}{2^L}\kett{\one}.
\]
This state is a $\pi$-charge density wave, appropriately normalized (with $\braakett{\one}{\rho_2} = 1$) and corresponding to a maximally mixed density matrix with purity
\[
\braakett{\rho_2}{\rho_2} = \frac{1}{2^L}.
\]
We can investigate the Edwards-Anderson and R\'enyi-$2$ correlations in each case in terms of complex fermions. We note that
\[
Z_iZ_j &= \prod_{k=i}^{j-1} i\eta_k\gamma_{k+1}\\
&= \prod_{k=i}^{j-1} (a_{2k}+a_{2k}^\dag)(a_{2k+1}-a_{2k+1}^\dag)
\]
and that $Z_iZ_jZ_i'Z_j'$ is given in Eq.~\eqref{eq:ZZZZ_fermion}. The expectation values $\mell{\one}{Z_iZ_j}{\rho_i}$ and $\mell{\rho_i}{Z_iZ_i'Z_jZ_j'}{\rho_i}$ may be readily evaluated deep in each phase. As expected, one finds $\mell{\one}{Z_iZ_j}{\rho_i}$ to be one in Phase~1 and zero in phases $2$ and $3$, whereas $\mell{\rho_i}{Z_iZ_i'Z_jZ_j'}{\rho_i}$ is one in phases $1$ and $2$ and zero in Phase~3.

One can also evaluate information-theoretic quantities for these states. Consider entangling a reference ancilla to the system qubits in a GHZ state $\kett{\rho_1}$, viewing the first qubit at site $j=1$
as the reference qubit and the remaining $L-1$ qubits as the system qubits.
To get the reduced density matrix on the reference qubit $\kett{\rho_1}_R$, one needs to trace out the system qubits, using the Bell state 
\begin{equation}
|\one \rangle \rangle_Q = \prod_{j=2}^L \sqrt{2} a^\dagger_{2j} | 0 \rangle_a  ,
\end{equation}
i.e. $\kett{\rho_1}_R = {}_Q\braakett{\one}{\rho_1}$ which gives
$\kett{\rho_1}_R = \frac{1}{\sqrt{2}} a_2^\dagger | 00 \rangle_{12}$,
appropriately normalized. One can easily compute the purity of the state,
${}_R\braakett{\rho_1}{\rho_1}_R = 1/2$, and therefore that the R\'enyi-$2$ entropy for the reference qubit is $\log 2 $ as expected.

Formally, one can study the purification of the reference qubit under the dynamics
of Pauli-X measurement on the system qubits, 
i.e. \begin{equation}
|meas \rangle \rangle_T = T_t e^{-\int_0^T dt H_{\lambda_x}(t)} \kett{\rho_1},
\end{equation}
with $H_{\lambda_x}$ given by the full Hamiltonian upon setting $q_x=q_{zz}=0$ and $\lambda_{zz}=0$.
Similarly, one can study the purification under the dynamics
with Pauli-X decoherence on the system qubits,
\begin{equation}
|decoh \rangle \rangle_T = T_t e^{-\int_0^T dt H_{p_x} (t) }| \rho_1 \rangle \rangle .
\end{equation}
When $L=2$, with just one system qubit and under the assumption of forced measurements, one can explicitly integrate the dynamics.  At long times, after tracing out the system qubit,
one can extract $| meas \rangle \rangle_R =  \frac{1}{\sqrt{2}} (a^\dagger_1 + a^\dagger_2)  | 0 \rangle_{12}$ and $| decoh \rangle \rangle_R = \frac{1}{\sqrt{2}} a_2^\dagger   | 0 \rangle_{12}$.  One thereby finds the R\'enyi-$2$ entropy of the reference qubit,  $S_R(meas)^{(2)}=0$ and $S_R(decoh)^{(2)} = \log 2$, as expected.

One can also explore the coherent information. When the state is pure, one has $I_c = S_R$, namely $I_c = \log 2 $ for $\kett{\rho_1}$ and
$I_c = 0$ with forced Pauli-X measurements. For Pauli-X decoherence one finds $I_c = 0$.

\subsection{XXZ chain}
To transform to new spin degrees of freedom with XXZ interactions, we employ a second Jordan-Wigner transformation. In particular, we define
\[
a_i &= \sigma_i^-\prod_{j=1}^{i-1} (-\sigma_j^z)\\
a_i^\dag &= \sigma_i^+\prod_{j=1}^{i-1} (-\sigma_j^z)
\]
and find that
\[
2(a^\dag_{n}a_{n+1} + h.c.) &= \sigma^x_{n}\sigma^x_{n+1} + \sigma^y_{n}\sigma^y_{n+1}
\]
and
\[
(2a^\dag_n a_n -1) = \sigma^z_n.
\]
It follows that the disorder-free Hamiltonian in Eq.~\eqref{eq:Hamiltonian_complex_fermions} may be rewritten as an XXZ model with staggered couplings,
\begin{equation}
H = - \sum_n J_n ( \sigma^x_n \sigma^x_{n+1} + \sigma^y_n \sigma^y_{n+1}) +  \sum_n K_n \sigma^z_n \sigma^z_{n+1}.
\end{equation}
Alternatively, the transformations from spins to fermions and back again may be combined, yielding the direct mapping given in Eqs.~\eqref{eq:XXZ_transformation1}-\eqref{eq:XXZ_transformation4}.

For $\lambda_{zz}=\lambda_x$ and $q_{zz} = q_x$ the staggering in the couplings vanishes. Then, for weak decoherence $q_x = q_{zz} < \lambda_x= \lambda_{zz}$ (in the case of forced measurements) the model describes a critical Luttinger liquid with continuously varying exponents. When $q_x = q_{zz} = \lambda_x = \lambda_{zz}$ (in the case of forced measurements) or $q_x = q_{zz} = 0$ (for the $2$\nobreakdash-replica theory) one obtains an $XXX$ model. For stronger decoherence, the model describes an Ising ordered antiferromagnet. These phases (along with dimer phases in the case of staggered couplings) are illustrated in Fig.~\ref{fig:disorder-free-phase-diagram}. In all cases, a $U(1)$ symmetry corresponding to Pauli-Z rotations is present. At the $XXX$ point, an enhanced $SU(2)$ symmetry arises.

\section{Strong and weak SSB in the classical limit}
As discussed in Sec.~\ref{sec:path_integral:limits} and App.~\ref{app:RBIM}, the model reduces to classical RBIM physics when $\lambda = 0$ or $\lambda = 1$. In this appendix, we study correlation functions in such cases.

With large dephasing (large $q_x$) the Ising model is in its paramagnetic phase, corresponding to the trivial (strong-to-weak) SSB phase in the phase diagram in Fig.~\ref{fig:intrinsic-phase-diagram}.  However, for strong (stabilizer) measurements (large $\lambda_{zz}$) the weak Ising symmetry
will be spontaneously broken, (since $Z_i$ is charged under the weak $\bbZ_2$ symmetry),
and one enters into the memory phase.  In this phase we expect a divergent Edwards-Anderson susceptibility.

We can use this model to try and compute some correlation functions. 
It is first convenient to define the ``propagator'',
\begin{equation}
U_m = T_t e^{- \int_0^T dt \hat{H}_m (t)} .
\end{equation}
Now, let us 
start the dynamics in a maximally mixed state, $|\one \rangle \rangle$, which satisfies $X_iX_i' |\one \rangle \rangle = \kett{\one}$.
The ferromagnetic corrrelator is,
\begin{equation}
\overline{\langle Z_i Z_j \rangle} = \sum_m p_m \langle Z_i Z_j \rangle_m = 
\sum_m \langle \langle \one | Z_i Z_j U_m \kett{\one},
\end{equation}
and we have
\begin{equation}
\sum_m U_m  = e^{-T H_{eff}},
\end{equation}
where the Hamiltonian is simply,
\begin{equation}
H_{eff} = -q_x \sum_i X_i X_i'. 
\end{equation}
We thus have,
\begin{equation}
\overline{\langle Z_i Z_j \rangle} = \langle \langle \one | Z_iZ_j \kett{\one} = \delta_{ij}  ,
\end{equation}
so we have vanishing ferromagnetic correlations.

Next consider the Edwards-Anderson correlator,
\begin{equation}
\overline{\langle Z_i Z_j \rangle^2} = \sum_m p_m \langle Z_i Z_j \rangle_m^2 =\sum_m  \frac{\langle \langle \one| Z_iZ_j U_m |\one \rangle \rangle^2 }{\langle \langle \one| U_m \kett{\one}}.
\end{equation}
Here, to perform the summation over the measurement outcomes, we will have to use replicas,
\begin{equation}
\overline{\langle Z_i Z_j \rangle^2}  = \lim_{Q \rightarrow 0}  \langle \langle \one| Z_iZ_j U_m |\one \rangle \rangle^2 \langle \langle \one| U_m \kett{\one}^{Q-1} .
\end{equation}
It is convenient to introduce a generating functional,
\begin{equation}
{\cal Z}_Q = \sum_m \langle \langle \one| U_m \kett{\one}^{Q+1} = e^{-T H_Q} ,
\end{equation}
with 
\begin{equation}
H_Q = - J_{zz} \sum_i \sum_{{\alpha ,\beta} = 1}^{Q+1} Z^{(\alpha)}_i Z^{(\alpha)}_{i+1}
 Z^{(\beta)}_i Z^{(\beta)}_{i+1}  - p_x \sum_i \sum_{\alpha = 1}^{Q+1} X^{(\alpha)}_iX^{\prime(\alpha)}_i
\end{equation}
with $J_{zz} = \delta t \lambda_{zz}^2$, where we have performed the $m$ summation
at every space-time point and worked to leading order in small 
$\delta t$.
The final expression for our Edwards-Anderson correlator is then,
\begin{equation}
\overline{\langle Z_i Z_j \rangle^2} = \lim_{Q \rightarrow 0} \langle \langle \one |  Z^{(\alpha)}_i Z^{(\alpha)}_{j}
 Z^{(\beta)}_i Z^{(\beta)}_{j} e^{-T H_Q} \kett{\one} .
\end{equation}

Evaluating this is complicated by the requirement of taking the replica limit.
Here, we consider the case $Q=1$, where we then have just two replicas,
and forego taking the limit.
For $Q=1$, we can make a change of variables,
\begin{equation}
\eta^x_i = X^{(\alpha)}_iX^{\prime(\alpha)}_iX^{(\beta)}_iX^{\prime(\beta)}_i ; \hskip0.4cm  \eta^z_i = Z^{(\alpha)}_i ,
\end{equation}
\begin{equation}
\gamma^x_i = X^{(\beta)}_iX^{\prime(\beta)}_i; \hskip0.6cm  \gamma^z_i =  Z^{(\alpha)}_iZ^{(\beta)}_i.
\end{equation}
Upon re-expressing $H_{Q=1}$ in terms of these new variables one sees that
$[\eta^x_i, H_{Q=1} ]=0$, so that $\eta^x_i$ is a constant of the motion.
Moreover, using the fact that $\eta^x_i \kett{\one} = \kett{\one}$,
we can then set $\eta^x_i=1$.  The final Hamiltonian is then expressed entirely in terms
of the $\gamma_j$ Pauli's;
\begin{equation}
H_{Q=1} = - J_{zz} \sum_i \gamma^z_i \gamma^z_{i+1} - 2 q_x \sum_i \gamma^x_i ,
\end{equation}
which is a (pure) quantum transverse field Ising model.  The Edwards-Anderson correlator 
(for 2-replicas) can be written,
\begin{equation}
\overline{\langle Z_i Z_j \rangle^2_{Q=1}} = \langle \langle \one | \gamma^z_i \gamma^z_j  e^{-T H_{Q=1}} \kett{\one} ,
\end{equation}
and is simply the ferromagnetic correlator.
Here, $\gamma^x_j \kett{\one} = \kett{\one}$.
We thus conclude that for large Pauli-X dephasing and weak Pauli-ZZ measurement,
that is for $q_x > J_{zz}/2$,  there is no SSB for the weak $\bbZ_2$ symmetry, and one is in the trivial phase.  Whereas, for larger stabilizer Pauli-ZZ measurements, $J_{zz} > 2q_x$,
one has long-ranged order in the Edwards-Anderson correlator, and this corresponds to
the ``memory'' phase.

\section{Information-theoretic observables}\label{app:info}
In this appendix, we include simple proofs relating our averaged information-theoretic observables $\overline{I_c}$ and $\overline{S_R}$ to the relevant information-theoretic tasks. In App.~\ref{app:info:Ic} we show that $\overline{I_c}$ is equivalent to the usual coherent information $I_c(R\rangle QM)$ if, instead of studying measurement trajectories, we collect measurement results in a set of ancillas $M$. In App.~\ref{app:info:SR}, we show that $\overline{S_R}$ relates to the mutual information $I(R ; M)$ in this case, and we further prove that when $S_R = 0$ the observer has learned the Pauli-X logical operator.

\subsection{Coherent information}\label{app:info:Ic}
The coherent information is usually calculated after the application of a quantum channel, in which case it quantifies the quantum channel capacity. In order to incorporate measurements into a channel description, it is necessary to couple the system to measurement ancillas that record the measurement results. Then, labelling the Hilbert space corresponding to these ancillas as $M$, the coherent information is defined as
\[
I_c(R\rangle QM) = S(\rho_{QM}) - S(\rho_{QMR}).
\]
\begin{theorem}\label{thm:Ic}
    \[
    \overline{I_c} = I_c(R\rangle QM)
    \]
\end{theorem}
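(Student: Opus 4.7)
The plan is to realize $I_c(R\rangle QM)$ as the coherent information of a classical-quantum state built from the trajectory ensemble, and then apply the standard entropy decomposition for classical-quantum states to recover the trajectory-averaged expression $\overline{I_c}$.

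First, I would construct the joint state on $QMR$ by introducing an ancilla register $M = \{M_{t,i}\}$ and coupling each weak measurement at spacetime point $(t,i)$ to $M_{t,i}$ so that the outcome is coherently recorded. Because the outcomes label orthogonal states of $M$, and the dephasing channels may be purified into a separate environment that is traced out before $M$ is examined, the resulting reduced states inherit a classical-quantum form,
\begin{equation}
\rho_{QMR} \;=\; \sum_m p_m \, \rho_{QR,m} \otimes \ket{m}\bra{m}_M, \qquad \rho_{QM} \;=\; \sum_m p_m \, \rho_{Q,m} \otimes \ket{m}\bra{m}_M,
\end{equation}
where $\rho_{Q,m}$ and $\rho_{QR,m}$ are the normalized trajectory states and $p_m$ is the Born probability of trajectory $m$.

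Second, I would invoke the standard entropy identity for classical-quantum states: if $\sigma = \sum_m p_m\, \sigma_m \otimes \ket{m}\bra{m}$ with orthonormal $\{\ket{m}\}$, then $S(\sigma) = H(p) + \sum_m p_m S(\sigma_m)$, where $H(p) = -\sum_m p_m \log p_m$ is the Shannon entropy of the trajectory distribution. Applying this to both $\rho_{QM}$ and $\rho_{QMR}$, the $H(p)$ contributions cancel in the difference, and one immediately obtains
\begin{equation}
I_c(R\rangle QM) \;=\; S(\rho_{QM}) - S(\rho_{QMR}) \;=\; \sum_m p_m \bigl[S(\rho_{Q,m}) - S(\rho_{QR,m})\bigr] \;=\; \overline{I_c}.
\end{equation}

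The main subtlety, rather than a genuine obstacle, is verifying that the measurement-ancilla coupling truly produces the claimed classical-quantum structure: one must check that the environmental purification of the dephasing channels is independent of the measurement record and is traced out \emph{before} computing entropies involving $M$, so that no quantum coherences between distinct values of $m$ survive in $\rho_{QM}$ or $\rho_{QMR}$. Once this bookkeeping is in place, the theorem follows from a single application of the classical-quantum entropy decomposition.
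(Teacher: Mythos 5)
Your proof is correct and follows essentially the same route as the paper: both write $\rho_{QM(R)}$ as a classical-quantum state over measurement records and use the decomposition $S(\rho) = S(\rho_M) + \overline{S(\rho_{m})}$ (which you cite as a standard identity and the paper derives directly from the spectrum), after which the $S(\rho_M)$ terms cancel in the difference. Your added remark about tracing out the dephasing environment before forming the classical-quantum state is a reasonable bookkeeping check but not a departure from the paper's argument.
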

\begin{proof}
    When the measurement record is retained in ancillas, the full quantum state is 
    \[
    \rho_{QM(R)} = \sum_{\m} \Pr(\m) \rho_{Q(R),\m} \ot \ketbra{m}
    \]
    with $\tr \rho_{Q(R),\m} = 1$. Let $\rho$ denote either $\rho_{QM}$ or $\rho_{QMR}$ and let $\{\lambda_{\m}^i\}$ be the spectrum of $\rho_{\m}$, noting that $\sum_i \lambda_{\m}^i = 1$. Then,
    \[
    S(\rho) 
    &= -\sum_{\m,i} \Pr(\m) \lambda_{\m}^i \log(\Pr(\m) \lambda_{\m}^i)\\
    &= -\sum_{\m,i} \Pr(\m) \lambda_{\m}^i (\log \Pr(\m) + \log \lambda_{\m}^i)\\
    &= -S(\rho_M) - \overline{S(\rho_{\m})}
    \]
    where the first term is the entropy of the measurement record and the second term is the weighted average of $S(\rho_{\m})$ over measurement trajectories. It follows that 
    \[
    I_c(R\rangle QM) &= \overline{S(\rho_{Q,\m})} - \overline{S(\rho_{QR,\m})}
    \]
    which is precisely $\overline{I_c}$ as defined in Sec.~\ref{sec:phases:observables}.
\end{proof}

\subsection{Reference entropy}\label{app:info:SR}
Just as the coherent information quantifies the quantum channel capacity of a channel, the mutual information quantifies its classical channel capacity. When considering the amount of logical information learned by the observer, we are particularly interested in the flow of classical information from the reference to the measurement record, quantified by the mutual information
\[
I(R ; M) = S(\rho_{R}) + S(\rho_M) - S(\rho_{RM}).
\]
Here, we relate this quantity to $\overline{S_R}$. In particular, when $K$ logical qubits are encoded in a reference, the purification of this reference implies that $K$ bits of information have been transmitted to the measurement record. 
\begin{theorem}
    \[
    \overline{S_R} = K - I(R; M)
    \]
    where $K$ is the number of logical qubits encoded in $R$.
\end{theorem}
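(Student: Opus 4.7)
The plan is to mirror the structure used in the proof of Theorem \ref{thm:Ic}, exploiting the fact that $\rho_{RM}$ has a natural block-diagonal form in the measurement-record basis. Concretely, when the measurement outcomes are stored in the ancilla register $M$, the joint state reads
\[
\rho_{RM} = \sum_{\m} \Pr(\m) \, \rho_{R,\m} \otimes \ketbra{\m}
\]
with $\tr \rho_{R,\m} = 1$. Because the $\ket{\m}$ are orthogonal, the eigenvalues of $\rho_{RM}$ factor into $\Pr(\m) \lambda^i_{\m}$ where $\{\lambda^i_{\m}\}$ is the spectrum of $\rho_{R,\m}$. Expanding $-\sum_{\m,i}\Pr(\m)\lambda^i_{\m} \log (\Pr(\m)\lambda^i_{\m})$ and splitting the logarithm gives
\[
S(\rho_{RM}) = S(\rho_M) + \overline{S(\rho_{R,\m})} = S(\rho_M) + \overline{S_R}.
\]
This is the same quantum-classical conditional-entropy identity already used implicitly in Theorem \ref{thm:Ic}.

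Substituting into the definition of the mutual information yields
\[
I(R;M) = S(\rho_R) + S(\rho_M) - S(\rho_{RM}) = S(\rho_R) - \overline{S_R},
\]
so the theorem reduces to establishing $S(\rho_R) = K$. The crucial input here is that the dynamics (unitaries, weak measurements with results copied into $M$, and dephasing) all act on the $QM$ registers and never touch $R$. Thus the reduced state on $R$ is preserved under the evolution and equals its initial value. Since $R$ is, by construction, a purifying reference for the $K$ encoded logical qubits and is initially maximally entangled with them, the initial $\rho_R$ is $2^{-K}\,\one_R$, giving $S(\rho_R) = K$ (in units of $\log 2$). Rearranging produces $\overline{S_R} = K - I(R;M)$.

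The main subtlety, and the only step that is not mere bookkeeping, is the invariance of $\rho_R$ under the dilated dynamics. One has to check that even with measurements, the unitary Stinespring dilation that couples $Q$ to $M$ acts as the identity on $R$, so that tracing out $QM$ at any later time returns the initial $\rho_R$. Once this is granted, the remaining argument is purely the quantum-classical entropy decomposition that was already derived (in the course of proving Theorem \ref{thm:Ic}) and a standard computation of the entropy of a maximally mixed reference.
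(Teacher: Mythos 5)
Your proposal is correct and follows essentially the same route as the paper: the block-diagonal decomposition of $\rho_{RM}$ in the measurement-record basis, the resulting identity $S(\rho_{RM}) = S(\rho_M) + \overline{S_R}$, and the observation that $S(\rho_R) = K$ because the reference is never acted on by the dynamics. (Your sign bookkeeping is in fact cleaner than the paper's, which contains a typographical sign slip in the analogous entropy decomposition.)
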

\begin{proof}
    The proof proceeds in a similar manner to the proof of Thm.~\ref{thm:Ic}. After tracing out the system, we have
    \[
    \rho_{RM} &= \sum_{\m}  \Pr(\m) \rho_{R,\m} \ot \ketbra{m}.
    \]
    Thus, as in Thm.~\ref{thm:Ic}, we conclude that $S(\rho_{RM}) = S(\rho_M) - \overline{S(\rho_{R,\m})}$. Therefore,
    \[
    I(R ; M) = S(\rho_{R}) - \overline{S(\rho_{R,\m})}.
    \]
    The second term, $\overline{S(\rho_{R,\m})}$, is precisely $\overline{S_R}$ as defined in Sec.~\ref{sec:phases:observables}. The first term is always $K$, because the reference is never acted on directly so if the measurement record is traced out then the entropy is maximal. Thus, our desired result is obtained by rearranging the above equation.
\end{proof}

Therefore, in our dynamics, $\overline{S_R} = 0$ if and only if one classical bit has been learned.
It remains to be shown that in our dynamics, $\overline{S_R} = 0$ means that the logical Pauli-X operator in particular has been learned. We prove this below.
\begin{theorem}
    $\overline{S_R} = 0$ if and only if $\Pr(\pm | \m) = \{1,0\}$, where $\pm$ is the value of the logical Pauli-X operator.
\end{theorem}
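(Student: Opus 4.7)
\medskip
\noindent\textit{Proof proposal.} The plan is to use the strong $\bbZ_2$ symmetry to diagonalize $\rho_{R,\m}$ exactly in the logical Pauli-X basis, reducing $S_R(\m)$ to a binary entropy, and then to use nonnegativity of the binary entropy to take the $\m$-average.

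First I would fix the initial state of $QR$ explicitly. Since the reference $R$ is entangled with the code words, a convenient choice is $\ket{\psi_0} = \tfrac{1}{\sqrt{2}}(\ket{\bar 0}_Q\ket{0}_R + \ket{\bar 1}_Q\ket{1}_R)$ for the repetition code, which in the $\bar X$ eigenbasis reads $\ket{\psi_0} = \tfrac{1}{\sqrt 2}(\ket{GHZ_+}_Q\ket{+}_R + \ket{GHZ_-}_Q\ket{-}_R)$. Next, for each trajectory $\m$, let $K_\m$ denote the product of Kraus operators (measurement projectors and dephasing Kraus operators) realizing $\m$. Because $\bar X$ is a strong symmetry, $[K_\m, \bar X\ot \one_R]=0$, so $K_\m$ preserves the two logical Pauli-X sectors of $Q$. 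Defining $\ket{\phi^\pm_\m} = K_\m\ket{GHZ_\pm}_Q$, these vectors lie in orthogonal $\bar X$ sectors and therefore satisfy $\braket{\phi^-_\m}{\phi^+_\m} = 0$.

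The second step is to write down $\rho_{R,\m}$. Tracing out $Q$ in $K_\m\ket{\psi_0}(K_\m\ket{\psi_0})^\dag$ and normalizing by $p_\m = \tfrac{1}{2}(\|\phi^+_\m\|^2+\|\phi^-_\m\|^2)$, the orthogonality of $\ket{\phi^\pm_\m}$ kills the cross terms and gives
\begin{equation}
\rho_{R,\m} = \Pr({+}|\m)\,\ketbra{+}{+} + \Pr({-}|\m)\,\ketbra{-}{-},
\end{equation}
where $\Pr(\pm|\m) = \|\phi^\pm_\m\|^2/(\|\phi^+_\m\|^2+\|\phi^-_\m\|^2)$. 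Hence $S_R(\m) = H_2(\Pr(+|\m))$ is the binary Shannon entropy, and $S_R(\m)=0$ iff $\{\Pr(+|\m),\Pr(-|\m)\}=\{1,0\}$.

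Finally, I would pass to the average. Since $p_\m\ge 0$ and $S_R(\m)\ge 0$, $\overline{S_R} = \sum_\m p_\m S_R(\m) = 0$ iff $S_R(\m)=0$ for every $\m$ in the support of $p_\m$, which by the previous paragraph is exactly the condition $\{\Pr(+|\m),\Pr(-|\m)\}=\{1,0\}$ on every realized trajectory. The only nontrivial step is the orthogonality argument that kills the off-diagonal matrix elements of $\rho_{R,\m}$; everything else is bookkeeping. The argument relies crucially on (i) the strong symmetry of the dynamics and (ii) the fact that $R$ was entangled with $Q$ along the logical $\bar X$ eigenbasis, so that the $R$ labels and the preserved $Q$ sectors are in one-to-one correspondence. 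Without strong (as opposed to merely weak) symmetry the cross term need not vanish and the equivalence would fail.
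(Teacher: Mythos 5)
Your proof is correct in substance and arrives at the same key structural fact as the paper---that $\rho_{R,\m}$ is diagonal in the logical Pauli-X basis, so that $S_R(\m)$ is a binary entropy of $\Pr(\pm|\m)$---but it gets there by a genuinely different route. The paper argues in two steps: it first identifies $\Pr(\pm|\m)$ with $\langle \Pi_\pm\rangle_\m$ via Bayes' theorem, and then shows that the $Y_R$ and $Z_R$ components of $\rho_{R,\m}$ vanish by tracking Pauli strings: the strings $\bar X\ot X_R$, $\bar Y\ot Y_R$, $\bar Z\ot Z_R$ are the only ones initially supported on $R$, the latter two have odd Pauli-Z parity on the system, that parity is conserved by every gate in the circuit, and therefore those two strings can never shed their system support and never survive the partial trace. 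You instead invoke the strong symmetry directly: every Kraus operator commutes with $\bar X\ot\one_R$, so the two logical sectors evolve into orthogonal subspaces and the cross terms die under $\tr_Q$. This is cleaner and more general---it applies verbatim to any strong symmetry with the reference entangled along its eigenbasis, whereas the paper's Z-parity bookkeeping is tied to the specific Pauli gate set. You also make the averaging step ($p_\m\ge 0$, $S_R(\m)\ge 0$) explicit, which the paper leaves implicit.

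One imprecision you should repair: a trajectory $\m$ records only the measurement outcomes, not which Kraus operator of each dephasing channel acted, so when $q_x,q_{zz}>0$ there is no single ``product of Kraus operators realizing $\m$''; the conditional state is $\rho_{QR,\m}=\sum_k K_{\m,k}\ket{\psi_0}\bra{\psi_0}K_{\m,k}^\dag$ with $k$ ranging over the dephasing unravelings, and your $\ket{\phi^\pm_\m}$ are not well defined as single vectors. The fix is immediate: every $K_{\m,k}$ still commutes with $\bar X\ot\one_R$, so each term $K_{\m,k}\ket{GHZ_+}\bra{GHZ_-}K_{\m,k}^\dag$ is off-diagonal in the $\bar X$ grading of $Q$ and hence traceless over $Q$; summing over $k$ preserves the block structure, and $\rho_{R,\m}$ remains diagonal in $\ket{\pm}_R$ with $\Pr(\pm|\m)=\sum_k\|K_{\m,k}\ket{GHZ_\pm}\|^2/(2p_\m)$. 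With that adjustment the argument goes through.
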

\begin{proof}
    Our proof proceeds in two parts. First, we show that 
    \[\label{eq:cond_prob}
    \Pr(\pm | \m) = \expval{\Pi_{\pm}}_m
    \]
    where $\Pi_{\pm}$ is a Pauli-X projection on the reference. An initial state with unknown logical Pauli-X may be written
    \begin{multline}
        \rho_0 = \Pr(+) \ketbra{GHZ+}\ot\ketbra{+} \\ + \Pr(-) \ketbra{GHZ-}\ot\ketbra{-}
    \end{multline}
    where the reference qubit tags the value of Pauli-X (which is a symmetry of the dynamics). If we do not normalize our state after measurements, then we find that
    \[
    \Pr(m) = \tr \rho_{m}\\
    \Pr(m | \pm) = \frac{\tr \rho_m \Pi_{\pm}}{\Pr(\pm)}
    \]
    where the second line may be interpreted as initially projecting into one logical sector and normalizing appropriately. Eq.~\ref{eq:cond_prob} then follows from Bayes' theorem. Also, since the dynamics and this expectation value commute with Pauli-X dephasing on the reference, we may equivalently start with the usual initial state with a Bell pair between the code space and the reference.

    Next, we argue that $S_R(m) = 0$ if and only if $\rho_{R,m} = \ketbra{\pm}$. We recall that any density matrix may be written as a weighted sum of Pauli strings, with the partial trace over a region removing Pauli strings whose support includes that region. The initial state has three Pauli strings whose support includes the reference qubit: $\bar{X} \ot X_R$, $\bar{Y} \ot Y_R$, and $\bar{Z} \ot Z_R$, all of which have are supported on the system also. Since the reference ancilla is never modified during the dynamics, any contribution to $\rho_{R,m}$ must come from these three strings with the weight on the system removed. 

    Our dynamics modifies the Pauli strings in two ways. Each string is reweighted by decoherence and by measurements that anticommute with it. New strings are produced by measurements that commute with Pauli strings, but the parity of Pauli-Z operators in the system never changes. Therefore, $\bar{Y}\ot Y_R$ and $\bar{Z} \ot Z_R$, which have odd Pauli-Z parity in the system, can never be modified to remove all weight in the system. Therefore, they never contribute to $\rho_{R,m}$. Consequently, $\rho_{R,m}$ is oriented entirely along Pauli-X and has zero entropy if and only if it is $\ketbra{\pm}_R$, in which case $\expval{\Pi_{\pm}}_m = \{1,0\}$. 
\end{proof}

\section{Classical RBIM transition data}\label{app:data}
When $\lambda = 0$ or $\lambda = 1$, the system reduces to a standard RBIM. In Fig.~\ref{fig:classical_RBIM}, we provide data identifying the locations of these transitions.

\section{Information-theoretic observables in terms of defect free energies}\label{app:defect_insertion}
In this appendix, we calculate the coherent information and reference entropy for the steady state after a layer of perfect Pauli-ZZ measurements. We find that these quantities may be written in terms of defect free energies.

\begin{figure*}[t]
    \centering
    \begin{overpic}[height=0.33\linewidth]{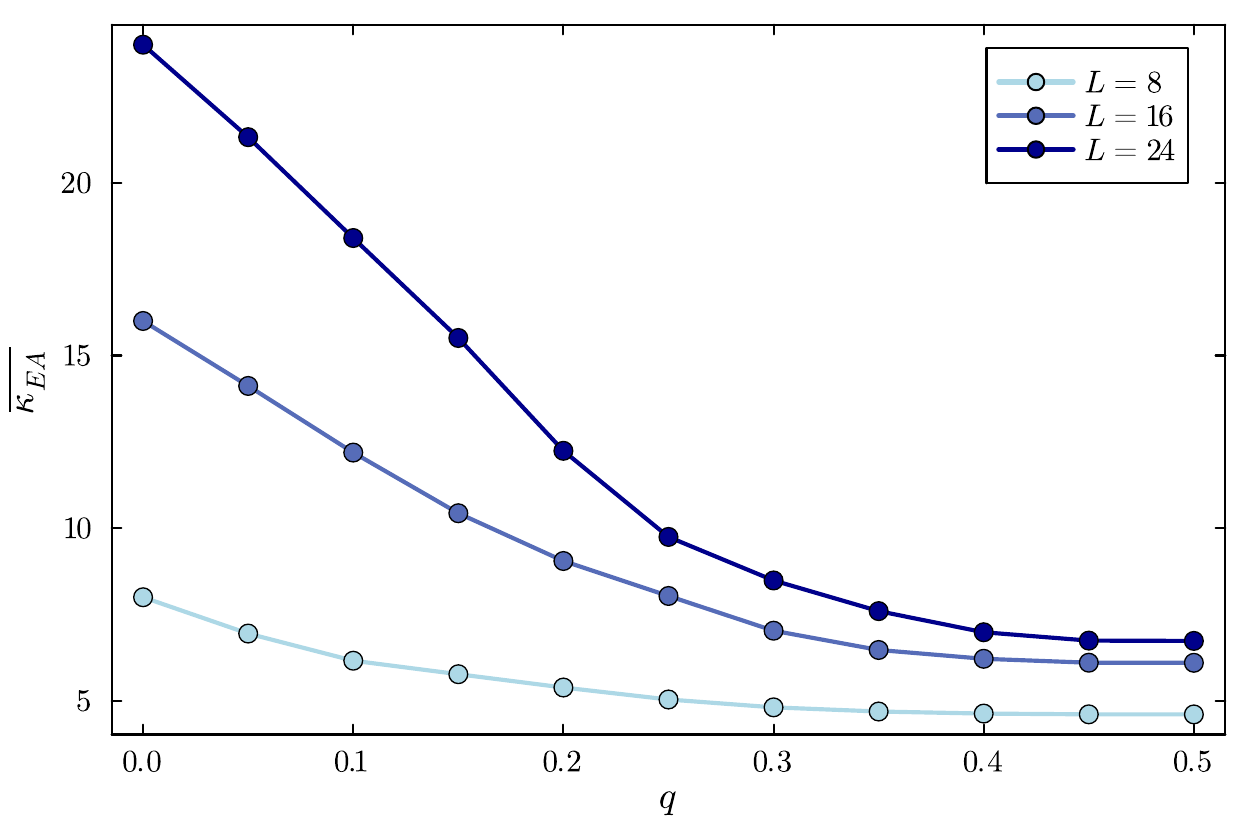}
    \put (0,65) {(a)}
    \end{overpic}%
    \begin{overpic}[height=0.33\linewidth]{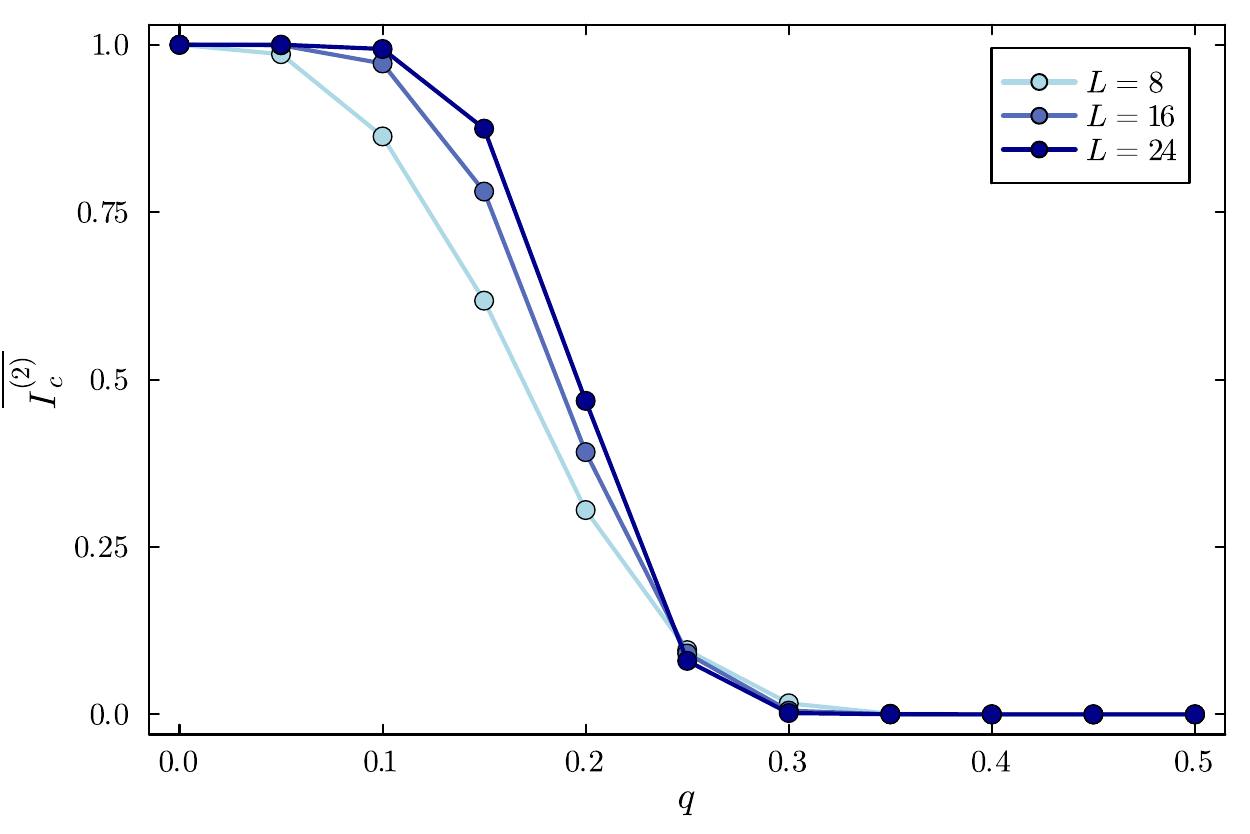}
    \put (1,65) {(b)}
    \end{overpic}
    \begin{overpic}[height=0.33\linewidth]{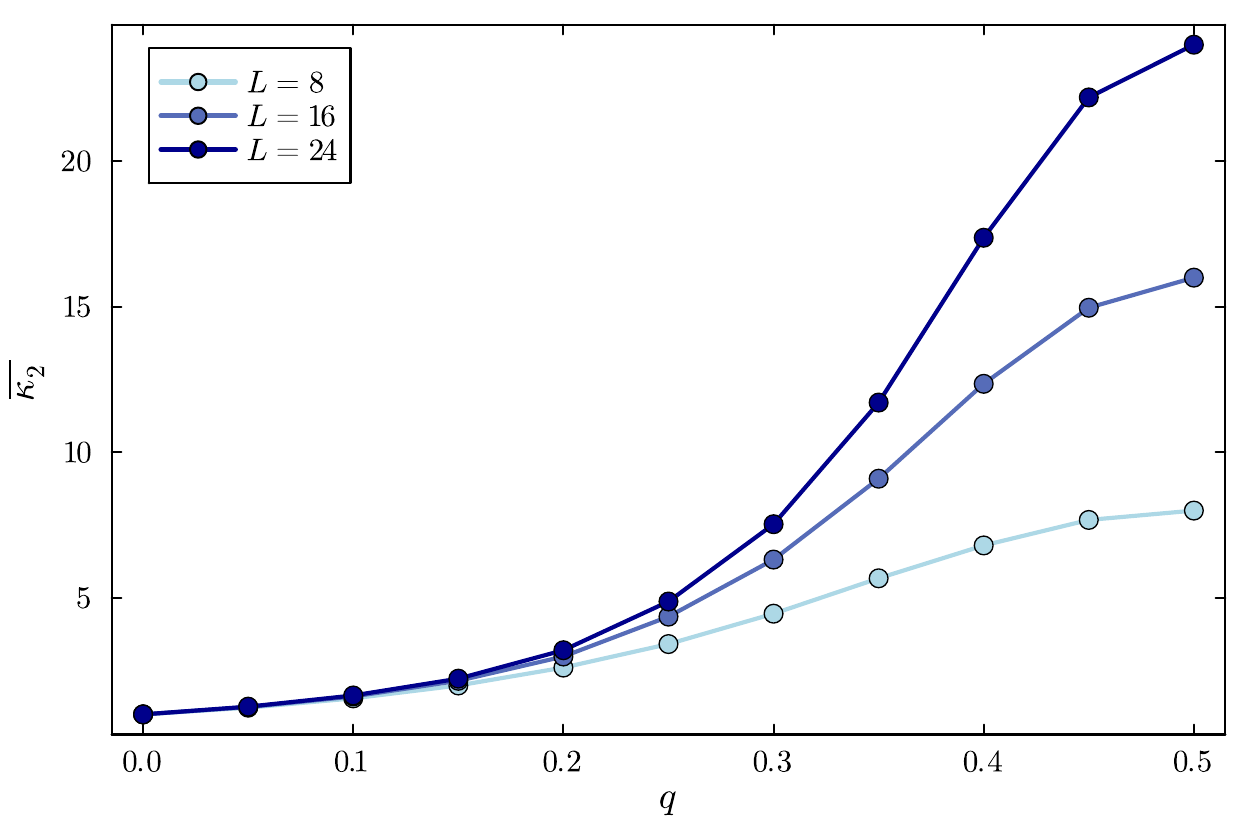}
    \put (0,65) {(c)}
    \end{overpic}%
    \begin{overpic}[height=0.33\linewidth]{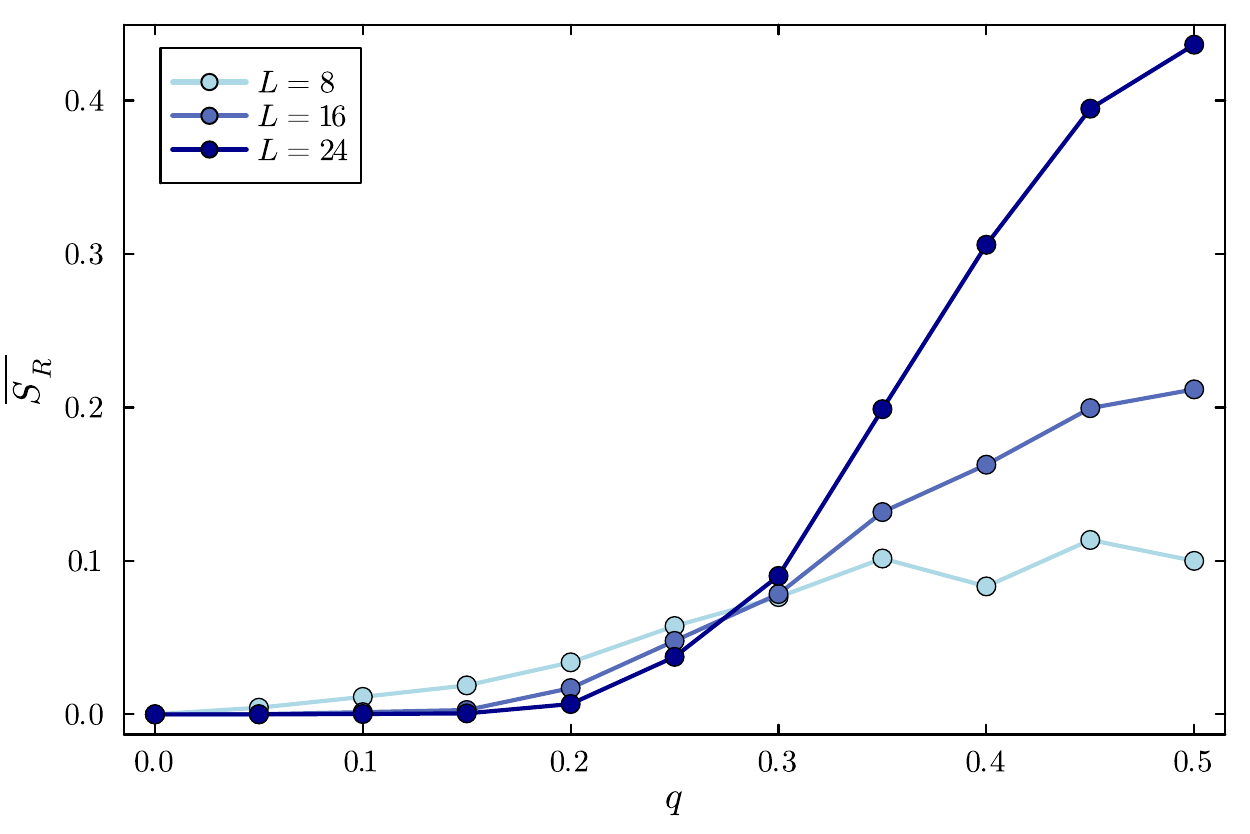}
    \put (1,65) {(d)}
    \end{overpic}
    \caption{Observables for the classical phase transitions at $\lambda = 0$ and $\lambda = 1$, where the physics reduces to the classical RBIM. (a) Along the $\lambda = 0$ line, the Edwards-Anderson susceptibility diverges in Phase~1 where the $\bbZ_2$ symmetry is completely broken. (b) Also along the $\lambda = 0$ line, the R\'enyi-$2$ coherent information undergoes a transition between Phase~1 and Phase~2 at $q \approx 0.25$. (c) Along the $\lambda = 1$ line, the R\'enyi-$2$ susceptibility diverges in Phase~2 where the strong $\bbZ_2$ symmetry is broken down to a weak symmetry. (d)~Also along the $\lambda = 1$ line, the reference entropy undergoes a transition between Phase~2 and Phase~3 at $q \approx 0.3$.}
    \label{fig:classical_RBIM}
\end{figure*}

In both cases, we begin with the pure initial state
\[
\ket{\psi}_{QR} = \frac{1}{\sqrt{2}}(\ket{{\uparrow}\cdots {\uparrow}}_Q\ket{{\uparrow}}_R + \ket{{\downarrow}\cdots {\downarrow}}_Q\ket{{\downarrow}}_R)
\]
and finish in the mixed-state subspace spanned by $\ket{\pm s_T}$, where $\pm s_T$ are the two spin configurations consistent with the layer of perfect measurements.

It follows that the final state may be written
\[
\kett{\rho_{\m}} = \sum_{\substack{a,a' \in \{{\uparrow},{\downarrow}\}\\ b,b' \in \{s_T,-s_T\}}} \calZ_{\m}(b,b',\vec{a},\vec{a}') \kett{b,b'}\ot\kett{a,a'}
\]
where $\vec{a} = a \cdots a$. Furthermore, since the partition function has independent Ising symmetries in each set of spins, we can always choose to flip the spins such that the initial boundary condition has all spins up. It follows that we may write $\rho_{\m}$ in terms of the defect free energies defined in Eq.~\eqref{eq:free-energies}:
\[
\rho_{\m} = \begin{pmatrix}
    1 & e^{-\Delta F_1^*} & e^{-\Delta F_1^*} & 1\\
    e^{-\Delta F_1} & e^{-\Delta F_2} & e^{-\Delta F_2} & e^{-\Delta F_1}\\
    e^{-\Delta F_1^*} & e^{-\Delta F_2} & e^{-\Delta F_2} & e^{-\Delta F_1^*}\\
    1 & e^{-\Delta F_1} & e^{-\Delta F_1} & 1
\end{pmatrix}
\]
whose nonzero eigenvalues (after normalization) are:
\[
\lambda_{\pm}^{QR} = \frac{1}{2}\left(1\pm \sqrt{\left( \frac{1-e^{-\Delta F_2}}{1+e^{-\Delta F_2}} \right)^2 + \left( \frac{2\abs{e^{-\Delta F_1}}}{1 + e^{-\Delta F_2}} \right)^2}\right).
\]
We are also interested in partial traces of this state. We find that
\[
\tr_R \rho_{\m} = \begin{pmatrix}
    1+e^{-\Delta F_2} & e^{-\Delta F_1} + e^{-\Delta F_1^*}\\
    e^{-\Delta F_1} + e^{-\Delta F_1^*} & 1+e^{-\Delta F_2}
\end{pmatrix}
\]
whose eigenvalues (after normalization) are:
\[
\lambda_{\pm}^{Q} = \frac{1}{2}\left(1 \pm \frac{e^{-\Delta F_1} + e^{-\Delta F_1^*}}{1+e^{-\Delta F_2}}\right).
\]
Furthermore, we find that $\tr_Q \rho_{\m} = \tr_R \rho_{\m}$.

It follows that
\[
S_R(\m) = -\lambda_{+}^{Q}\log \lambda_{+}^{Q} - \lambda_{-}^{Q}\log \lambda_{-}^{Q}
\]
and
\begin{multline}
I_c(\m) = S_R(\m) +\lambda_{+}^{QR}\log \lambda_{+}^{QR} + \lambda_{-}^{QR}\log \lambda_{-}^{QR}.
\end{multline}
We are particularly interested in the behavior of these observables in three regimes: Phase~1, where $\Re(\Delta F_1) \propto \Delta F_2 \propto L$; Phase~2, where $\Re(\Delta F_1) \propto L$ and $\Delta F_2 = 0$; and Phase~3, where $\Re(\Delta F_1) = \Delta F_2 = 0$. It will be useful to note that
\[
- \frac{1 + x}{2} \log \frac{1 + x}{2} - \frac{1 - x}{2} \log \frac{1 - x}{2} = \log 2 + O(x^2)
\]
when $x$ is small.

In Phase~1, we have
\[
\lambda_{\pm}^Q \approx \frac{1 \pm 2 e^{-\Re(\Delta F_1)}\cos(\Im(\Delta F_1))}{2}
\]
and 
\[
\lambda_{+}^{QR} &\approx 1 + e^{-2\Re(\Delta F_1)} - e^{-\Delta F_2}\\
\lambda_{-}^{QR} &\approx -e^{-2\Re(\Delta F_1)} + e^{-\Delta F_2}.
\]
Therefore, 
\[
S_R(\m) = \log 2 + O(e^{-2\Re(\Delta F_1)})
\]
and 
\[
I_c(\m) = \log 2 + O(e^{-2\Re(\Delta F_1)}) + O(e^{-\Delta F_2})
\]
since $\lambda^{QR}_{\pm}$ are exponentially close to $1$ and $0$ respectively.

In Phase~2, we have
\[
\lambda_{\pm}^Q \approx \frac{1 \pm e^{-\Re(\Delta F_1)}\cos(\Im(\Delta F_1))}{2}
\]
and 
\[
\lambda_{\pm}^{QR} &\approx \frac{1 \pm e^{-\Re(\Delta F_1)}}{2}
\]
so that
\[
S_R(\m) = \log 2 + O(e^{-\Re(\Delta F_1)})
\]
and
\[
I_c(\m) = O(e^{-2\Re(\Delta F_1)}).
\]

In Phase~3, we have $\lambda_{+}^Q = 1$,  $\lambda_{-}^Q = 0$, $\lambda_{+}^{QR} = 1$, and $\lambda_{-}^{QR} = 0$. Therefore, $S_R(\m) = 0$ and $I_c(\m) = 0$.

\end{document}